\definecolor{darkred}{RGB}{139,0,0}
\newcommand{\C}{\mathds{C}}
\newcommand{\N}{\mathbb{N}}
\renewcommand{\P}{\mathds{P}}
\newcommand{\R}{\mathds{R}}
\newcommand{\re}{\text{\upshape Re\,}}
\newcommand{\im}{\text{\upshape Im\,}}
\numberwithin{equation}{section}
\newtheorem{theorem}{Theorem}[section]
\newtheorem{corollary}[theorem]{Corollary}
\newtheorem{lemma}[theorem]{Lemma}
\newtheorem{remark}[theorem]{Remark}
\theoremstyle{definition} 
\newtheorem{definition}{Definition}[section]
\newtheorem{assumption}[definition]{Assumptions}
\DeclareDocumentCommand{\Pto} {o} {
\IfNoValueTF {#1}
{\overset{\P}{\longrightarrow}}
{ \xrightarrow[ #1 \to \infty]{\P }}
}
\DeclareDocumentCommand{\Asto} {o} {
\IfNoValueTF {#1}
{\overset{\operatorname{a.s.}}{\longrightarrow}}
{
\xrightarrow[ #1 \to \infty]{\operatorname{a.s.} }
% \underset{#1 \to \infty}{\overset{\operatorname{a.s.}}{\longrightarrow}}
}
}
\DeclareDocumentCommand{\Mgfto} {o} {
\IfNoValueTF {#1}
{\overset{\operatorname{mgf}}{\longrightarrow}}
{ \xrightarrow[ #1 \to \infty]{\operatorname{mgf} }}
}
\DeclareDocumentCommand{\Wkto} {o} {
\IfNoValueTF {#1}
{\overset{\rm law}{\longrightarrow}}
{ \xrightarrow[ #1 \to \infty]{\operatorname{law}}}
}
\DeclareDocumentCommand \LPto { O{1} }
{\overset{\operatorname{\LP^{#1}}}{\longrightarrow}}
\title[Partition functions of two-dimensional Coulomb gases with circular root- and jump-type singularities]{Partition functions of two-dimensional Coulomb gases with circular root- and jump-type singularities}
\author[K. Noda]{Kohei Noda}
\address{{\small Institut de Recherche en Math\'ematique et Physique, Universit\'e catholique de Louvain, Louvain-la-Neuve, B--1348, Belgium}}
\email{kohei.noda@uclouvain.be}
\begin{document}

\maketitle

\begin{abstract}
In this paper, we study the random polynomial $p_n(\rho):=\prod_{j=1}^n (|z_j|-\rho)$, where the points $\{z_j\}_{j=1}^n$ are the eigenvalue moduli of random normal matrices with a radially symmetric potential. %, i.e., from a two-dimensional determinantal point process generalizing the Ginibre ensemble. 
We establish precise large $n$ asymptotic expansions for the moment generating function 
\[
\mathbb{E}\!\left[e^{\tfrac{u}{\pi}\im\log p_n(\rho)}\, e^{a\,\re\log p_n(\rho)}\right],  
\qquad u\in\R, \; a>-1,
\]
where $\rho>0$ lies in the bulk of the spectral droplet. The asymptotic expansion is expressed in terms of parabolic cylinder functions, which confirms a conjecture of Byun and Charlier. This also provides the first free energy expansion of two-dimensional Coulomb gases with general circular root- and jump-type singularities. 
While the $a=0$ case has already been widely studied in the literature due to its relation to counting statistics, we also obtain new results for this special case.   
\end{abstract}
\noindent
{\small{\sc AMS Subject Classification (2020)}: 41A60, 60B20, 60G55.}

\noindent
{\small{\sc Keywords}: Random normal matrices, counting statistics, partition functions, asymptotics}

\section{Introduction and statement of results}\label{section: introduction}
For $\boldsymbol{z}_n=(z_1,\dots,z_n)\in\C^n$, consider the $n$-fold integral 
%\todo[inline]{remove $\alpha$}
\begin{align}
\label{def of det partition function}
Z_{n,u,a}[Q]&:=\int_{\C^n}\prod_{1\leq j<k\leq n}|z_{k} -z_{j}|^{2}\prod_{j=1}^n |z_{j}|^{2\alpha}e^{-nQ(z_j)}\omega(z_j)\,\frac{d^2z_j}{\pi}, 
\end{align}
where $d^2z$ is the Lebesgue measure on $\C$, and $Q:\C\to\R$ is called the external potential.
Our assumptions on $Q$ are stated in Assumptions~\ref{Assumption_Q} below.
Here, $\omega(z)\equiv\omega(z;u,a)$ possesses a root singularity and a jump along the circle centered at 0 of radius $\rho>0$; more precisely, it is defined by 
\begin{equation}
\label{def of omega}
\omega(z):=|x-\rho|^{a}\begin{cases}
    e^{u}, & \mbox{if }x<\rho, \\
    1, & \mbox{if } x\geq \rho,
\end{cases}   
\quad x=|z|,\quad a>-1,\quad u\in\R. 
\end{equation}
Integrals of the form \eqref{def of det partition function} are typically called partition functions in the literature. They find applications in random matrix theory and statistical physics, and have therefore been widely studied in \cite{AFLS25,ACC2023c,Byun25,BFL25,BKS2023,BKSY2025,BSY2025,LS2017,WebbWong}. $Z_{n,\alpha,0,0}[Q]$ is also the normalization constant of the following probability measure:
\begin{align}\label{def of det point process}
d\mathbb{P}_{n}(\boldsymbol{z}_n):=\frac{1}{Z_{n,\alpha,0,0}[Q]} \prod_{1 \leq j < k \leq n} |z_{k} -z_{j}|^{2} \prod_{j=1}^{n}|z_{j}|^{2\alpha}e^{-n Q(z_{j})}\,\frac{d^2z_j}{\pi},\qquad \alpha>-1,
\end{align}
which represents the joint probability distribution of a random normal matrix \cite{BFreview,Mehta,Forrester}.
In particular, the choice $Q(z)=|z|^2$ corresponds to the complex Ginibre ensemble \cite{Ginibre}.
The measure \eqref{def of det point process} is a {\it determinantal point process} \cite{Forrester}. For a review of recent developments on non-Hermitian random matrices, see \cite{BFreview}.

The ratio $\frac{Z_{n,u,a}[Q]}{Z_{n,0,0}[Q]}$ gives the joint moment generating function of $(\re\log p_n(\rho),\im\log p_{n}(\rho))$, where $p_n$ is the random polynomial given by
\begin{equation}
p_n(x):=\prod_{j=1}^{n} \bigl(|z_j|-x\bigr).
\end{equation}
Note that the roots of $p_n$ are the moduli $\{|z_j|\}_{j=1}^n$, where $\{z_j\}_{j=1}^{n}$ are distributed according to \eqref{def of det point process}. In other words, if we denote by $\mathbb{E}$ the expectation with respect to \eqref{def of det point process}, then
\begin{equation}
\label{def of mathcal En}
 \mathcal{E}_{n,u,a} := \mathbb{E}\Bigl[e^{\frac{u}{\pi}\im \log p_n(\rho)}e^{a\re \log p_n(\rho)}\Bigr] = \frac{Z_{n,u,a}[Q]}{Z_{n,0,0}[Q]},
\end{equation}
with $u \in \R$, $a > -1$, $\rho >0$, and $\log p_n(\rho) = \log|p_n(\rho)| + \pi i \,\rm{N}_{\rho}$, where ${\rm N}_{\rho}:=\#\{z_j : |z_j| < \rho\}$. 
%Here, $\# A$ denotes the number of points in the finite set $A \subset \C$. 
The partition function $Z_{n,u,a}[Q]$ for $a\neq 0$ was first introduced in the work \cite{BC2022} of Byun and Charlier, where the potential $Q(z)$ in \eqref{def of det point process} was chosen to be the Mittag–Leffler potential $Q(z) = |z|^{2b}$ with $b > 0$. 
In \cite[Theorem 1.1]{BC2022}, they established the large $n$ asymptotic expansion of $\mathcal{E}_{n,u,a}$ up to and including the term of order 1, with a precise estimate for the error term. However, only the case $a\in\N$ was considered. 
An interesting phenomenon observed in their result is the appearance of associated Hermite polynomials in the coefficients of the terms of order $\sqrt{n}$ and $1$. The large $n$ asymptotics of $\mathcal{E}_{n,u,a}$ in the general case $a>-1$ was left open in \cite{BC2022}, and was conjectured to involve parabolic cylinder functions in place of the associated Hermite polynomials, see \cite[Remark 1.3]{BC2022}. 

\textit{In this paper, we obtain precise large $n$ asymptotics of $\mathcal{E}_{n,u,a}$ for general $u\in\R$, $a>-1$ and rotation-invariant $Q$, in the regime where $\rho$ lies in the bulk.} In particular, we confirm the conjecture from \cite{BC2022} that the asymptotics involve parabolic cylinder functions. This type of asymptotic behavior is completely new in random matrix theory, to the best of our knowledge. The case $a=0$ of our main results is already of interest, as it generalizes previous results on counting statistics of random normal matrix eigenvalues for general rotation-invariant potentials, as explained below.

For $a=0$, \eqref{def of mathcal En} reduces to the moment generating function of the disk counting statistics of random normal matrix eigenvalues, i.e.,
\begin{equation}
\label{def of disk counting statistics}
    \mathcal{E}_{n,u,0}=\mathbb{E}\bigl[e^{u\,{\rm N}_{\rho}}\bigr].
\end{equation}
Counting statistics of random normal matrix eigenvalues have attracted considerable attention in recent years, see e.g. \cite{ABE2023,ABES2023,LeeRiser2016,LMS2018,L et al 2019,FenzlLambert,SDMS2020,SDMS2021,ADM24,C2021 FH,ACCL1,ACCL2,CL2023}. 
In the work \cite{C2021 FH}, Charlier established the precise large $n$ asymptotic expansion of $\mathcal{E}_{n,u,0}$ for the Mittag–Leffler potential $Q(z)=|z|^{2b}$ with $b>0$. The asymptotics of the \textit{joint} moment generating functions, in the critical regime where all disk boundaries are merging at speed $n^{-1/2}$, were then obtained in the follow-up work \cite{CL2023}. In \cite{ACCL1,ACCL2}, Ameur, Charlier, Cronvall, and Lenells then treated the more difficult hard-edge regime where all disk boundaries are merging at speed $n^{-1}$ near a hard wall. In \cite{ACCL2}, the asymptotics contain an oscillatory term due to the fact that the particles accumulate on several components. The work \cite{ACM2024} also treats a hard-edge case, but in a simpler situation where there is no bulk. Leading order asymptotics of $\mathcal{E}_{n,u,0}$ were then obtained in \cite{ABES2023} for general rotation invariant potentials, and leading order asymptotics of $\mathrm{Var}[{\rm N}_{\rho}]$ in \cite{MMO25} for general potentials and domains. Our main result, in the special case $a=0$, improves on \cite{ABES2023} by providing the next two terms in the large $n$ asymptotics of $\mathcal{E}_{n,u,0}$. Since counting statistics has attracted considerable attention in recent years, for the convenience of the reader this particular case is stated separately in Theorem~\ref{theorem:asymptotic expansion of counting statistics} below. In Corollary~\ref{corollary:cumulant of counting statistics}, we also provide precise large $n$ asymptotics of all cumulants of $\mathrm{N}_{\rho}$ (not just the variance).

We will make the following assumptions on $Q$:
% from \cite{BKS2023,ABE2023,ABES2023,Seo} throughout this paper:
\begin{assumption}\label{Assumption_Q}
We suppose that the potential is rotation invariant, i.e., $Q(z)=q(|z|)$, and  satisfies the following conditions. 
\begin{enumerate}
\item[\textup{(1)}]\label{Assumption 1}
$\liminf_{|z|\to\infty}\frac{Q(z)}{2\log|z|}>1$, which guarantees that $Z_{n,u,a}[Q]<+\infty$. 
\item[\textup{(2)}] \label{Assumption 2}
$Q$ is $C^{6}$-smooth in a neighborhood of the droplet, subharmonic in $\C$, and strictly subharmonic (i.e., $\Delta Q(z)>0$) in a neighborhood of the droplet (the droplet is defined in \eqref{def of droplet} below).
\item[\textup{(3)}] $q'(0)>0$. 
\end{enumerate}
\end{assumption}
\begin{remark}
\label{remark:MF Potential}
Assumptions~\ref{Assumption_Q} cover a wide class of rotation invariant potentials.
Note however that it does not cover $Q(z)=|z|^{2b}$ for $b\neq 1$ as in this case $\lim_{r\to0}\Delta Q(r)$ either vanishes \textup{(}$b>1$\textup{)} or blows up \textup{(}$b\in(0,1)$\textup{)}.
\end{remark}
Under Assumptions~\ref{Assumption_Q}, the empirical measure $\frac1n\sum_{j=1}^n\delta_{z_j}$ of \eqref{def of det point process} converges weakly to the measure $\sigma_Q$ given by 
\begin{equation}
\label{def of sigma Q}
d\sigma_Q := \Delta Q \cdot \mathbf{1}_{S}\,\frac{d^2z}{\pi},
\end{equation}
where $S \equiv S_{Q}$ is a compact subset of $\C$ called the {\it droplet}.
Under parts (1) and (2) of Assumptions~\ref{Assumption_Q}, $S$ is of the form 
\begin{equation}
\label{def of droplet}
S = \mathbb{A}_{r_0,r_1}:=\{z \in \C : r_0 \leq |z| \leq r_1\},
\end{equation}
where $r_1$ is the smallest solution of $r q'(r)=2$; see, e.g., \cite{BKS2023,SaTo}. 
Part (3) of Assumptions~\ref{Assumption_Q} implies that $r_0=0$, i.e., $S=\mathbb{D}_{r_1}:=\{z\in\C:|z|\leq r_1\}$. (The analysis done in this paper can be adapted to the case $r_0>0$, which is in fact simpler.) 
%Moreover, although Assumption~\ref{Assumption_Q} does not fully encompass the setting of \cite{BC2022}, we believe that our results below is sufficiently general; see also Remarks~\ref{remark:counting statistics} and~\ref{remark for theorem cal EN 1}.   

\medskip
We recall that the complementary error function is defined by
\begin{equation}
\mathrm{erfc}(t):=\frac{2}{\sqrt{\pi}}\int_t^{+\infty}e^{-x^2}\,dx, \qquad t\in\R,     
\end{equation}
see e.g., \cite[Eq. (7.2.1)]{NIST}. 
Following \cite[Eq. (1.6)]{C2021 FH}, we introduce 
\begin{equation}
\label{def of mathcalFts}
\mathcal{F}(t,s):=\log\Bigl(1+\frac{s-1}{2}\mathrm{erfc}(t)\Bigr),    \qquad
t\in  \R, \quad s \in \C\backslash (-\infty, 0],
\end{equation}
where the principal branch is chosen for the $\log$.

Our first main result, which corresponds to the case $a=0$, generalizes the result in \cite[Proposition~2.13]{ABES2023} by going beyond the leading term.  
\begin{theorem}[Counting statistics]
\label{theorem:asymptotic expansion of counting statistics}
Let $\rho\in(0,r_1)$, $\alpha>-1$, $u\in\R$, and $a=0$. 
Under Assumptions~\ref{Assumption_Q}, there exists $\delta>0$ such that, as $n\to+\infty$, we have   
\begin{equation}
\label{def of calEnu0 asymptotics}
\log\mathcal{E}_{n,u,0}
=C_1(u)\,n+C_2(u)\,\sqrt{n}+C_3(u)+\mathcal{O}\Bigl(\frac{(\log n)^3}{n^{\frac{1}{12}}}\Bigr),
\end{equation}  
uniformly for $u\in\{z\in\C:|z-x|\leq\delta\}$, where 
\begin{align}
\label{def of C1u disk count}
C_1(u)&:=
u\int_{\mathbb{D}_{\rho}}d\sigma_Q(z), 
\\
\label{def of C2u disk count}
C_2(u)&:=
\rho\sqrt{2\Delta Q(\rho)}\int_{0}^{+\infty}
\Bigl(
\mathcal{F}(x,e^{u})
+
\mathcal{F}(x,e^{-u})
\Bigr)\,dx,
\\
\begin{split}
\label{def of C3u disk count}
C_3(u)&:=
-
\Bigl(\alpha+\frac{1}{2}\Bigr)u
+\frac{1}{6}\Bigl(2+\frac{\rho\,\partial_r\Delta Q(\rho)}{\Delta Q(\rho)}\Bigr)u
\\
&\quad 
+\frac{1}{3}\Bigl(2+\frac{\rho\partial_r\Delta Q(\rho)}{\Delta Q(\rho)}\Bigr)
\int_{0}^{+\infty}
x\bigl(
\mathcal{F}(x,e^{u})
-
\mathcal{F}(x,e^{-u})
\bigr)\,dx.  
\end{split}
\end{align}
\end{theorem}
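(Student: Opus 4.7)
The plan is to exploit the rotational invariance of $Q$ and $\omega(\cdot;u,0)$. Applying the Andr\'eief (Heine) identity to the squared Vandermonde, equivalently integrating out the angular variables, the partition function factorizes as $Z_{n,u,0}[Q]=n!\prod_{k=0}^{n-1}h_{k,u,0}$ with $h_{k,u,0}=2\int_0^{+\infty}r^{2k+2\alpha+1}e^{-nq(r)}\omega(r;u,0)\,dr$. Since $\omega(r;u,0)$ differs from $1$ only by the factor $e^u$ on $(0,\rho)$, one obtains
\begin{equation*}
\log\mathcal{E}_{n,u,0}=\sum_{k=0}^{n-1}\log\!\bigl(1+(e^u-1)\,p_{k,n}\bigr),\qquad
p_{k,n}:=\frac{\int_0^{\rho}r^{2k+2\alpha+1}e^{-nq(r)}\,dr}{\int_0^{+\infty}r^{2k+2\alpha+1}e^{-nq(r)}\,dr},
\end{equation*}
reducing the problem to a one-dimensional sum of log-probabilities.

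Next, a Laplace/saddle-point analysis of each $p_{k,n}$ is performed. Writing $r^{2k+2\alpha+1}e^{-nq(r)}=e^{-n\phi_{t_k}(r)}$ with $\phi_t(r):=q(r)-2t\log r$ and $t_k:=(k+\alpha+\tfrac12)/n$, the unique positive critical point $r_t$ of $\phi_t$ satisfies $r_t q'(r_t)=2t$, equivalently $\tau(r_t)=t$ where $\tau(r):=\int_{\mathbb{D}_r}d\sigma_Q=\int_0^r 2s\,\Delta Q(s)\,ds$. Using the polar form $\Delta Q(r)=\tfrac14(q''(r)+q'(r)/r)$ yields $\phi_t''(r_t)=4\Delta Q(r_t)$. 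For $|r_{t_k}-\rho|\geq n^{-1/2}\log n$, Laplace's method gives $p_{k,n}=\mathbf{1}_{r_{t_k}<\rho}+\mathcal{O}(e^{-c(\log n)^2})$. In the transition window $|k-n\tau(\rho)|\leq C\sqrt{n}\log n$, a Laplace expansion with one subleading correction produces
\begin{equation*}
p_{k,n}=\tfrac12\mathrm{erfc}(\xi_k)+\frac{\mathcal{R}(\xi_k)}{\sqrt n}+\mathcal{O}\bigl(n^{-1}(\log n)^{C}\bigr),\qquad
\xi_k:=(r_{t_k}-\rho)\sqrt{2n\Delta Q(\rho)}+\mathcal{O}(n^{-1/2}),
\end{equation*}
so that each summand equals $\mathcal{F}(\xi_k,e^u)+\mathcal{O}(n^{-1/2})$. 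Here $\mathcal{R}$ encodes the cubic term of $\phi_t$ at $r_t$, the Taylor expansion of the Gaussian amplitude, and the shift from $r_t$ to $\rho$; this is the mechanism by which error functions arise (and parabolic cylinder functions would, in the general case $a\neq 0$).

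The summation then proceeds as follows. The bulk contributes $u\cdot\#\{k:r_{t_k}<\rho\}+\mathcal{O}(e^{-cn})$; from $\tau(r_{t_k})=(k+\alpha+\tfrac12)/n$ the count equals $n\tau(\rho)-(\alpha+\tfrac12)+\mathcal{O}(1/n)$, producing $C_1(u)\,n$ together with the $-(\alpha+\tfrac12)u$ summand of $C_3(u)$. For the transition part I would apply Euler--Maclaurin in the variable $\xi=\xi_k$; the spacing $\Delta\xi_k\sim 1/(\rho\sqrt{2n\Delta Q(\rho)})$ turns the Riemann sum into an integral with multiplier $\rho\sqrt{2n\Delta Q(\rho)}$, and the identity $\mathcal{F}(-\xi,e^u)-u=\mathcal{F}(\xi,e^{-u})$ (which follows from $\mathrm{erfc}(-\xi)=2-\mathrm{erfc}(\xi)$) recasts the leading integral as $\int_0^{+\infty}(\mathcal{F}(x,e^u)+\mathcal{F}(x,e^{-u}))\,dx$, matching $C_2(u)$. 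At order $1$, three sources contribute to $C_3(u)$: (i) the $\mathcal{R}/\sqrt n$ correction in $p_{k,n}$ integrated against $\partial_s\mathcal{F}(\xi,e^u)$; (ii) the midpoint Euler--Maclaurin correction; and (iii) the next-order Taylor expansion of $\xi_k$ in $k-n\tau(\rho)$, which brings in $\partial_r\Delta Q(\rho)/\Delta Q(\rho)$. Each produces an odd-in-$\xi$ integral that, after repackaging with the bulk-transition matching, reconstructs the $\tfrac13(2+\rho\,\partial_r\Delta Q(\rho)/\Delta Q(\rho))\int_0^{+\infty}x(\mathcal{F}(x,e^u)-\mathcal{F}(x,e^{-u}))\,dx$ term of $C_3(u)$, together with the linear-in-$u$ piece $\tfrac16(2+\rho\,\partial_r\Delta Q(\rho)/\Delta Q(\rho))u$.

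The principal obstacle is this order-$1$ bookkeeping: three a priori independent contributions must conspire into the clean coefficient of \eqref{def of C3u disk count}. Maintaining the uniform error $\mathcal{O}((\log n)^3/n^{1/12})$ as $u$ ranges over a small complex neighborhood requires a Laplace remainder valid for complex parameters, an explicit formula for $\mathcal{R}(\xi)$, and a uniform lower bound on $|1+(e^u-1)p_{k,n}|$ (delicate when $\Re u<0$) to keep the logarithm well-defined throughout the transition window. The exponent $1/12$ presumably reflects the optimal balance between the Gaussian tail error at the transition cutoff and the cubic Laplace correction.
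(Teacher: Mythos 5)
Your proposal follows the same overall strategy as the paper: Andr\'eief/Kostlan reduction to a one-dimensional sum of Bernoulli log-probabilities, Laplace analysis at the moving saddle $r_t$, and Euler--Maclaurin summation. The paper proves the general $a>-1$ case (Theorem~\ref{theorem:calEn expansion}) and then specializes to $a=0$; your direct formulation via $p_{k,n}$, together with the choice $t_k=(k+\alpha+\tfrac12)/n$ that absorbs the radial Jacobian and the $\alpha$-weight into the phase $\phi_{t_k}(r)=q(r)-2t_k\log r$, is a legitimate and cleaner re-packaging of the same argument (the paper instead keeps $\tau=j/n$ and treats $2\alpha\log r$ as a prefactor $\mathsf{k}$, recovering the $\alpha$-dependence through $\mathsf{k}'$-corrections in the Laplace expansion). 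There are, however, three concrete issues.

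\emph{Origin region.} You do not address the range of small $k$ (the paper's $S_0$, $j<D_n=\lfloor n^{1/6}\rfloor$), where $r_{t_k}\to0$ and the Laplace saddle degenerates toward the endpoint of integration, requiring a separate Gamma-function asymptotic (Lemma~\ref{lemma:0 j Dn minus1}). For $a=0$ this region contributes innocuously ($p_{k,n}=1-\mathcal{O}(e^{-cn})$, so each summand is $u+\mathcal{O}(e^{-cn})$), but the degeneration still has to be ruled out.

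\emph{The $1/12$ exponent.} Your guess that $n^{-1/12}$ comes from balancing the Gaussian tail against the cubic Laplace correction at the transition cutoff is wrong. In the paper it arises from the bulk sum $S_1$: the $h_{n,j}$ expansion of Lemma~\ref{lemma:hnj asymptotics} carries a relative error $\mathcal{O}((\log n)^\nu/j^{3/2})$, and $\sum_{j\geq D_n}(\log n)^\nu/j^{3/2}\sim(\log n)^\nu D_n^{-1/2}=(\log n)^\nu n^{-1/12}$, i.e.\ the exponent is set by the cutoff $D_n=n^{1/6}$ separating the origin analysis from the bulk Laplace analysis. (In fact, for $a=0$ the Laplace errors in the numerator and denominator of $p_{k,n}=1-h^{(\mathrm{out})}_{n,j}/h_{n,j}$ cancel, and $h^{(\mathrm{out})}/h=\mathcal{O}(e^{-c(\log n)^2})$ throughout $S_1$, so your route ought to give a sharper remainder than the one the paper inherits from the general-$a$ machinery.)

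\emph{The count.} The claim that $\#\{k:r_{t_k}<\rho\}=n\tau(\rho)-(\alpha+\tfrac12)+\mathcal{O}(1/n)$ cannot be right as stated: the left side is an integer, so it differs from $n\tau_\rho-\alpha-\tfrac12$ by a fractional part that is $\mathcal{O}(1)$, not $\mathcal{O}(1/n)$. The $-(\alpha+\tfrac12)u$ term of $C_3(u)$ cannot simply be read off from the count; the $\mathcal{O}(1)$ fluctuation must cancel against the Euler--Maclaurin boundary terms of the transition sum. The paper carries this out by tracking the fractional parts $\theta_{1,\pm}$ and the quantities $\Delta_{n,\pm}$ explicitly, and all the $n$-dependent boundary pieces $C_n^{(1)},C_n^{(2)},C_n^{(3)}$ are shown to collapse to a constant only at the end. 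You rightly flag this order-$1$ bookkeeping as the principal obstacle, but the count claim as written suggests it can be sidestepped, which it cannot.
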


\begin{remark}[Consistency with Theorem 1.1 in \cite{C2021 FH}]
\label{remark:counting statistics}
As mentioned in Remark~\ref{remark:MF Potential}, Assumptions~\ref{Assumption_Q} do not cover the case $Q(z)=|z|^{2b}$ for $b\neq1$. However, surprisingly, 
substituting $Q(z)=|z|^{2b}$ into Theorem~\ref{theorem:asymptotic expansion of counting statistics} recovers \cite[Theorem~1.1]{BC2022} for any $b>0$, see Appendix~\ref{section: appendix counting}. 
Moreover, $C_3(u)$ is given in a simpler form than $C_3$ in \cite[Theorem 1.1]{C2021 FH}.
\end{remark}

Recall that the cumulants $\{\kappa_j\}_{j\in\mathbb{N}_{>0}}$ of the random variable ${\rm N}_{\rho}$ (see \eqref{def of mathcal En} below) are defined through the expansion 
\[
\log \mathbb{E}\bigl[e^{u\,{\rm N}_{\rho}}\bigr]
=
\kappa_1u+\frac{\kappa_2u^2}{2!}+\frac{\kappa_3u^3}{3!}+\cdots,\qquad u\to 0,
\]
or equivalently by
\begin{equation}
\label{def of cumulant}
\kappa_j=\partial_u^j\log \mathbb{E}\bigl[e^{u\,{\rm N}_{\rho}}\bigr]\Bigr|_{u=0},\qquad j\in\N.
\end{equation} 
\begin{comment}
More generally, the joint cumulants of $(\re\log p_n(\rho),\im\log p_{n}(\rho))$ are defined by 
\begin{equation}
\kappa_{j_u,j_a}:=\partial_u^{j_u}\partial_a^{j_a}\log\mathcal{E}_{n,u,a},\qquad
j_u,j_a\in\N_{>0}. 
\end{equation}
\end{comment}
In particular, since $\mathbb{E}[e^{u\,{\rm N}_{\rho}}]$ is analytic for $u\in\C$ and positive for any $u\in\R$, and Theorem~\ref{theorem:asymptotic expansion of counting statistics} is valid uniformly for $u\in\{z\in\C:|z-x|\leq \delta\}$ for some $\delta>0$,  Cauchy's formula implies that for any $u\in\R$, $j\in\N$, we have 
\begin{align*}
&\quad
\partial_u^{j}\Bigl\{\log \mathbb{E}\bigl[e^{u\,{\rm N}_{\rho}}\bigr]-\Bigl(C_1(u)\,n+C_2(u)\,\sqrt{n}+C_3(u)\Bigr)\Bigr\}
\\
&=
\frac{j!}{2\pi i}\oint_{|\zeta-u|=\frac{\delta}{2}}
\frac{\log \mathbb{E}\bigl[e^{\zeta\,{\rm N}_{\rho}}\bigr]-\bigl(C_1(\zeta)\,n+C_2(\zeta)\,\sqrt{n}+C_3(\zeta)\bigr)}{(\zeta-u)^{j+1}}\,d\zeta
=\mathcal{O}\Bigl(\frac{(\log n)^3}{n^{\frac{1}{12}}}\Bigr),\qquad
n\to+\infty.
\end{align*}
The above equation shows that \eqref{def of calEnu0 asymptotics} can be differentiated with respect to $u$ any fixed number of times without increasing the error term.  
Therefore, Theorem~\ref{theorem:asymptotic expansion of counting statistics} combined with \cite[Proof of Corollary 1.6]{C2021 FH} provides the following.

\begin{corollary}
\label{corollary:cumulant of counting statistics}  
Let $\rho\in(0,r_1)$, $\alpha>-1$, $u\in\R$, and $a=0$. 
Under Assumptions~\ref{Assumption_Q}, as $n\to+\infty$, we have 
\begin{equation}
\kappa_j=\begin{cases}
C_1'(0)\,n+C_3'(0)+\mathcal{O}\bigl(\frac{(\log n)^3}{n^{\frac{1}{12}}}\bigr),    & \mbox{if } j=1, \\
\partial_u^jC_3(0)+\mathcal{O}\bigl(\frac{(\log n)^3}{n^{\frac{1}{12}}}\bigr),    & \mbox{if } \text{$j\neq 1$ is odd}, \\
\partial_u^jC_2(0)\,\sqrt{n}+\mathcal{O}\bigl(\frac{(\log n)^3}{n^{\frac{1}{12}}}\bigr),     & \mbox{if } \text{$j$ is even}.
\end{cases}    
\end{equation}
\end{corollary}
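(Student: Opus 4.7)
The plan is to obtain Corollary~\ref{corollary:cumulant of counting statistics} directly from Theorem~\ref{theorem:asymptotic expansion of counting statistics} together with the Cauchy-formula differentiation identity displayed just above the corollary, which gives
\[
\kappa_j=\partial_u^j\bigl(C_1(u)\,n+C_2(u)\,\sqrt{n}+C_3(u)\bigr)\Big|_{u=0}+\mathcal{O}\!\left(\tfrac{(\log n)^3}{n^{1/12}}\right).
\]
Once this is in hand, all that remains is to read off the parity in $u$ of each of $C_1$, $C_2$, $C_3$ and discard the derivatives that vanish at $0$.

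First, by \eqref{def of C1u disk count} the function $C_1(u)$ is linear in $u$, so $C_1'(0)=\int_{\mathbb{D}_{\rho}}d\sigma_Q(z)$ while $\partial_u^j C_1(0)=0$ for every $j\geq 2$. Second, \eqref{def of C2u disk count} shows that $C_2(-u)=C_2(u)$, since the integrand $\mathcal{F}(x,e^u)+\mathcal{F}(x,e^{-u})$ is invariant under $u\mapsto -u$; hence $\partial_u^j C_2(0)=0$ whenever $j$ is odd. Third, \eqref{def of C3u disk count} shows that $C_3$ is odd in $u$: the two explicit linear terms are manifestly odd, and $\int_0^{+\infty} x\bigl(\mathcal{F}(x,e^u)-\mathcal{F}(x,e^{-u})\bigr)dx$ also flips sign under $u\mapsto -u$. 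Therefore $\partial_u^j C_3(0)=0$ whenever $j$ is even.

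Combining these three parity observations yields the three cases of the corollary. For $j=1$, only $C_1'(0)\,n$ and $C_3'(0)$ survive (since $C_2'(0)=0$ by the evenness of $C_2$), which gives $\kappa_1=C_1'(0)\,n+C_3'(0)+\mathcal{O}((\log n)^3/n^{1/12})$. For odd $j\geq 3$, both $\partial_u^j C_1(0)$ and $\partial_u^j C_2(0)$ vanish and only $\partial_u^j C_3(0)$ remains. For even $j$, both $\partial_u^j C_1(0)$ and $\partial_u^j C_3(0)$ vanish and only $\partial_u^j C_2(0)\sqrt{n}$ remains. In each case the error bound $\mathcal{O}((\log n)^3/n^{1/12})$ is inherited unchanged from Theorem~\ref{theorem:asymptotic expansion of counting statistics} via the Cauchy-formula argument recalled above.

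There is no serious obstacle beyond checking the parities: the only mild subtlety is ensuring the Cauchy contour fits inside the disk where \eqref{def of calEnu0 asymptotics} is uniform, which is guaranteed by fixing $j$ and integrating over a circle of radius $\delta/2$ around $u=0$ (the implicit constant in the remainder then depends on $j$, which is harmless for any fixed cumulant order).
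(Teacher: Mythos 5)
Your proof is correct and follows the same route the paper takes: the Cauchy-formula differentiation argument stated just before the corollary, followed by the parity observations that $C_1$ is linear, $C_2$ is even, and $C_3$ is odd in $u$ (the paper outsources these parity checks to the proof of Corollary 1.6 in the cited reference, but they are exactly the ones you carry out). Nothing to add.
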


The CLT from \cite[Proposition 2.13]{ABES2023} can also be deduced from Theorem~\ref{theorem:asymptotic expansion of counting statistics}.

\medskip
We now turn our attention to \eqref{def of mathcal En} in the general case $a>-1$.
For any $x\in\R$, $u\in\C$, and $a>-1$, define 
\begin{equation}
\label{def of cal H au}
\mathcal{H}_{a,u}(x):=\frac{\Gamma(a+1)}{\sqrt{2\pi}}e^{-\frac{1}{4}x^2}g_{a,u}(x),
    \qquad g_{a,u}(x):=
e^{u}D_{-a-1}(x)+D_{-a-1}(-x),
\end{equation}
where $D_{-\nu}(x)$ is the parabolic cylinder function \cite[Eq. (12.5.1)]{NIST}, which is defined by  
\begin{equation}
\label{def of parabolic cylinder function}
    D_{-\nu}(z)=U(\nu-\tfrac{1}{2},z),\qquad \re\nu>0. 
\end{equation}
Here,  
\begin{equation}
    U(\nu,z):=\frac{e^{-\frac{1}{4}z^2}}{\Gamma(\nu+\frac{1}{2})}
    \int_{0}^{+\infty}e^{-zt}t^{\nu-\frac{1}{2}}e^{-\frac{1}{2}t^2}\,dt,\qquad \re \nu>-\frac{1}{2}. 
\end{equation}
For the relationship between \eqref{def of parabolic cylinder function} and the associated Hermite polynomials, see Appendix~\ref{section:appendix parabolic cylinder function}. 
Note that for any $a>-1$, the integrand of $D_{-a-1}(x)$ is positive for all $x\in\R$.  
Consequently, $\mathcal{H}_{a,u}(x)$ is positive for all $x\in\R$ whenever $a>-1$ and $u\in\R$. 
Therefore, it follows that the logarithm $\log \mathcal{H}_{a,u}(x)$ is well defined for all $x\in\R$ and $u\in\R$.

The following is the main result of this paper. 
\begin{theorem}[Counting statistics and root-type statistics]
\label{theorem:calEn expansion}
Let $\rho\in(0,r_1)$, $\alpha>-1$, $u\in\R$, and $a>-1$. 
Under Assumptions~\ref{Assumption_Q}, there exists $\delta>0$ such that as $n\to+\infty$, we have 
\begin{equation}
\label{def of expansion of calEnua}
\log\mathcal{E}_{n,u,a}=C_1(u,a)\,n+C_2(u,a)\,\sqrt{n}+C_3(u,a)+\mathcal{O}\Bigl(\frac{(\log n)^3}{n^{\frac{1}{12}}}\Bigr),
\end{equation}
uniformly for $u\in\{z\in\C:|z-x|\leq \delta\}$ and $a$ in compact subsets of $(-1,+\infty)$, where 
\begin{align}
\label{def of mathcal En constant C1}
C_1(u,a)&:=
\int_{\C}\log\omega(z)\,d\sigma_{Q}(z)
=
\int_{\mathbb{D}_{\rho}}\bigl(u+a\log(\rho-|z|)\bigr)d\sigma_Q(z)
+\int_{\mathbb{A}_{\rho,r_1}}a\log (|z|-\rho)d\sigma_Q(z), 
\\
\label{def of mathcal En constant C2}
C_2(u,a)&:=
\rho\sqrt{\Delta Q(\rho)}
\int_{-\infty}^{+\infty}
\bigg(\log\bigl[\mathcal{H}_{a,u}(x)\bigr]
-a\log|x|-u\mathbf{1}_{(-\infty,0)}(x)
\bigg)
\,dx,
\\
C_3(u,a)&:=
-\frac{a}{2}\log\Bigl(\frac{r_1}{\rho}-1\Bigr)
-\frac{a(a-1)}{4}
\frac{r_1}{r_1-\rho}
-
\frac{a}{4}
\int_{0}^{r_1}
\frac{1}{x-\rho}
\Bigl(
\frac{x\partial_x\Delta Q(x)}{\Delta Q(x)}
-
\frac{\rho\partial_r\Delta Q(\rho)}{\Delta Q(\rho)}
\Bigr)
\,dx
\nonumber
\\
&\quad
+
\frac{a}{4}\Bigl(
4\alpha+a+2-\frac{\rho\partial_r\Delta Q(\rho)}{\Delta Q(\rho)}
\Bigr)\log\Bigl(\frac{r_1}{\rho}-1\Bigr)
\label{def of mathcal En constant C3}
\\
&\quad
-
\Bigl(\alpha+\frac{1}{2}\Bigr)u
-
\frac{a}{12}\Bigl(1-\frac{\rho\partial_r\Delta Q(\rho)}{\Delta Q(\rho)}\Bigr)u
+\frac{1}{6}\Bigl(2+\frac{\rho\,\partial_r\Delta Q(\rho)}{\Delta Q(\rho)}\Bigr)u
\nonumber
\\
&\quad 
+\frac{1}{6}\Bigl(2+\frac{\rho\partial_r\Delta Q(\rho)}{\Delta Q(\rho)}\Bigr)
\int_{-\infty}^{+\infty}
\bigg[
x\Bigl(\log\bigl[\mathcal{H}_{a,u}(x)\bigr] 
- u\mathbf{1}_{(-\infty,0)}(x)
\Big)
-ax\log|x| 
-\frac{a(a-1)x}{2(x^2+1)}
\bigg]\,dx. 
\nonumber
\end{align}
\end{theorem}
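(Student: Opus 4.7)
The plan is to exploit the rotational invariance of $Q$ and of the weight $\omega(z)=\omega(|z|;u,a)$. Since the monomials $\{z^k\}_{k\geq 0}$ are orthogonal with respect to $|z|^{2\alpha}e^{-nQ(z)}\omega(z)\tfrac{d^2z}{\pi}$, Andreief's identity gives
\begin{equation*}
Z_{n,u,a}[Q] = n!\prod_{k=0}^{n-1} h_{k}(u,a), \qquad h_{k}(u,a):=2\int_0^{\infty} r^{2k+2\alpha+1}\,e^{-nq(r)}\,\omega(r;u,a)\,dr,
\end{equation*}
so that $\log\mathcal{E}_{n,u,a}=\sum_{k=0}^{n-1}\log\bigl(h_{k}(u,a)/h_{k}(0,0)\bigr)$. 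For each $k$, the unperturbed integrand concentrates near the unique critical point $r_{k}^{*}$ solving $r q'(r)=(2k+2\alpha+1)/n$, with Gaussian width of order $1/\sqrt{n\,\Delta Q(r_{k}^{*})}$. Define the critical index $k_{\rho}$ by $r_{k_{\rho}}^{*}=\rho$ and split the sum at a threshold $|k-k_{\rho}|\lessgtr M\sqrt{n}\log n$ for $M$ large but fixed.

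In the \emph{bulk regime} $|k-k_{\rho}|>M\sqrt{n}\log n$, the Gaussian peak is separated from $\rho$ by at least $(\log n)/\sqrt{n}$, so on the effective support of $h_{k}$ the singular/jump factor $\omega$ is a smooth function of $r$ and can be expanded to sufficient order inside the Laplace method. Summing $\log(h_{k}(u,a)/h_{k}(0,0))$ by Euler--Maclaurin and converting the sum over $k$ into an integral over the critical radii $r_{k}^{*}$ produces the leading term $C_{1}(u,a)\,n$ (which is exactly the weighted equilibrium integral $\int\log\omega\, d\sigma_{Q}$), together with boundary contributions at $r=\rho^{\pm}$ and at $r=r_{1}$, and a contribution $-(\alpha+\tfrac12)u$ coming from the shift $2k+2\alpha+1$ under Euler--Maclaurin; these pieces combine to give the ``smooth'' summands in $C_{3}(u,a)$, including the $\tfrac{a(a-1)}{4}\tfrac{r_{1}}{r_{1}-\rho}$ term and the principal-value-type integral of $\tfrac{1}{x-\rho}$.

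In the \emph{boundary-layer regime} $|k-k_{\rho}|\le M\sqrt{n}\log n$, parameterize $k=k_{\rho}+\lambda\sqrt{n/c_{\rho}}$ with $c_{\rho}$ chosen so that $r_{k}^{*}-\rho = \lambda/\sqrt{n\,\Delta Q(\rho)} + O(1/n)$, and rescale $r=r_{k}^{*}+s/\sqrt{n\,\Delta Q(\rho)}$. Then the singular weight $\omega$ sees the point $r=\rho$ at an $O(1)$ distance, and after Taylor-expanding $q$ and $\log r$ in $s/\sqrt{n}$ one recognises the resulting $s$-integral via
\begin{equation*}
\int_{0}^{\infty} e^{\mp\lambda t -\tfrac12 t^{2}}\,t^{a}\,dt = \Gamma(a+1)\,e^{\lambda^{2}/4}\,D_{-a-1}(\mp\lambda),
\end{equation*}
which is precisely the combination packaged into $\mathcal{H}_{a,u}$. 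This yields the uniform expansion
\begin{equation*}
\frac{h_{k}(u,a)}{h_{k}(0,0)} = n^{-a/2}\,\kappa(\rho)\,\mathcal{H}_{a,u}(\lambda)\,\Bigl(1+\tfrac{1}{\sqrt{n}}\,R_{1}(\lambda)+O(\tfrac{(\log n)^{2}}{n})\Bigr),
\end{equation*}
with an explicit $\kappa(\rho)$ and an explicit odd-in-$\lambda$ correction $R_{1}$ coming from the cubic term in the Laplace expansion (which produces the factor $\rho\partial_{r}\Delta Q(\rho)/\Delta Q(\rho)$). Taking logarithms and recognising the resulting sum over $\lambda$-values as a Riemann sum converts the dominant piece into the integral $C_{2}(u,a)\sqrt{n}$, while the Euler--Maclaurin correction and $R_{1}$ contribute the remaining terms of $C_{3}(u,a)$, including the explicit weight $\tfrac{1}{6}(2+\rho\partial_{r}\Delta Q(\rho)/\Delta Q(\rho))$ multiplying the weighted moment of $\log \mathcal{H}_{a,u}$.

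The main obstacle is obtaining \emph{uniform} asymptotics of $h_{k}$ across the boundary layer: the integrable singularity $|r-\rho|^{a}$ whose scale is $n^{-a/2}$ and the jump at $\rho$ must both be tracked simultaneously without losing control of the constants, and the subtractions $a\log|x|$ and $u\mathbf{1}_{(-\infty,0)}(x)$ that appear in $C_{2}$ and $C_{3}$ must be matched precisely to the Laurent-type behaviour of $\log\mathcal{H}_{a,u}(x)$ as $|x|\to\infty$, so that the bulk and boundary-layer pieces glue correctly and the cutoff $M$ drops out. Matching the large-$|\lambda|$ expansion of $\log\mathcal{H}_{a,u}(\lambda)$ (via the standard $D_{-a-1}$ asymptotics in \cite[Ch.~12]{NIST}) with the Laplace expansion of $\log h_{k}$ from the bulk side is what pins down every constant, and dictates the final error $\mathcal{O}((\log n)^{3}/n^{1/12})$ as the optimal balance between the width $\sqrt{n}\log n$ of the boundary layer and the remainder in the cubic Laplace expansion. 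Uniformity in $u$ in a complex disk and in $a$ on compact subsets of $(-1,\infty)$ follows because each estimate is analytic in $u$ and continuous in $a$, so Cauchy's formula and equicontinuity upgrade pointwise to uniform control; this also justifies the subsequent differentiation yielding Corollary~\ref{corollary:cumulant of counting statistics}.
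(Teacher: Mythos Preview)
Your outline matches the paper's proof almost exactly: Andr\'eief factorization, a bulk/boundary-layer split of the index sum, Laplace method in the bulk and parabolic-cylinder integrals in the layer, Euler--Maclaurin to convert sums to integrals, and matching via the large-$|x|$ asymptotics of $\log\mathcal{H}_{a,u}$.

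Two refinements the paper makes that your sketch glosses over. First, the paper separates an additional regime $S_0$ for $0\le j<D_n:=\lfloor n^{1/6}\rfloor$, because the Laplace expansion is non-uniform as the critical radius $r_\tau\to 0$ (this borrows \cite[Lemma~4.1]{ACC2023c}); it is \emph{this} piece---not the boundary-layer balance you cite---that produces the dominant error $\mathcal{O}((\log n)^3 n^{-1/12})$, via $\sum_{j<D_n}(j+1)^{3/2}(\log n)^3/\sqrt{n}\sim(\log n)^3 n^{-1/12}$. Second, the paper takes the boundary-layer half-width in the rescaled variable $\xi$ to grow like $M=n^{1/8}(\log n)^{-1/8}$ rather than a fixed constant, so that the matching remainders (of orders $\Delta_{n,\pm}^{-2}$ and $\sqrt{n}\,\Delta_{n,\pm}^{-5}$ from the expansion of $\log\mathcal{H}_{a,u}$ at infinity) are $o(1)$; with your fixed-$M$ window these would only be $O((\log n)^{-2})$, too weak for the constant term $C_3$.
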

\begin{remark}\label{remark for theorem cal EN 1}
Again, as explained in Remark~\ref{remark:MF Potential}, Assumptions~\ref{Assumption_Q} do not cover the case $Q(z)=|z|^{2b}$ if $b\neq1$.  
However, in a similar way as in Remark~\ref{remark:counting statistics}, substituting $Q(z)=|z|^{2b}$ into Theorem~\ref{theorem:calEn expansion} recovers \cite[Theorem~1.1]{BC2022} for any $b>0$, see Appendix~\ref{section:appendix parabolic cylinder function}.
\end{remark}
\begin{figure}[t]
	\begin{subfigure}{0.44\textwidth}
    	\begin{center}	
    		\includegraphics[width=\textwidth]{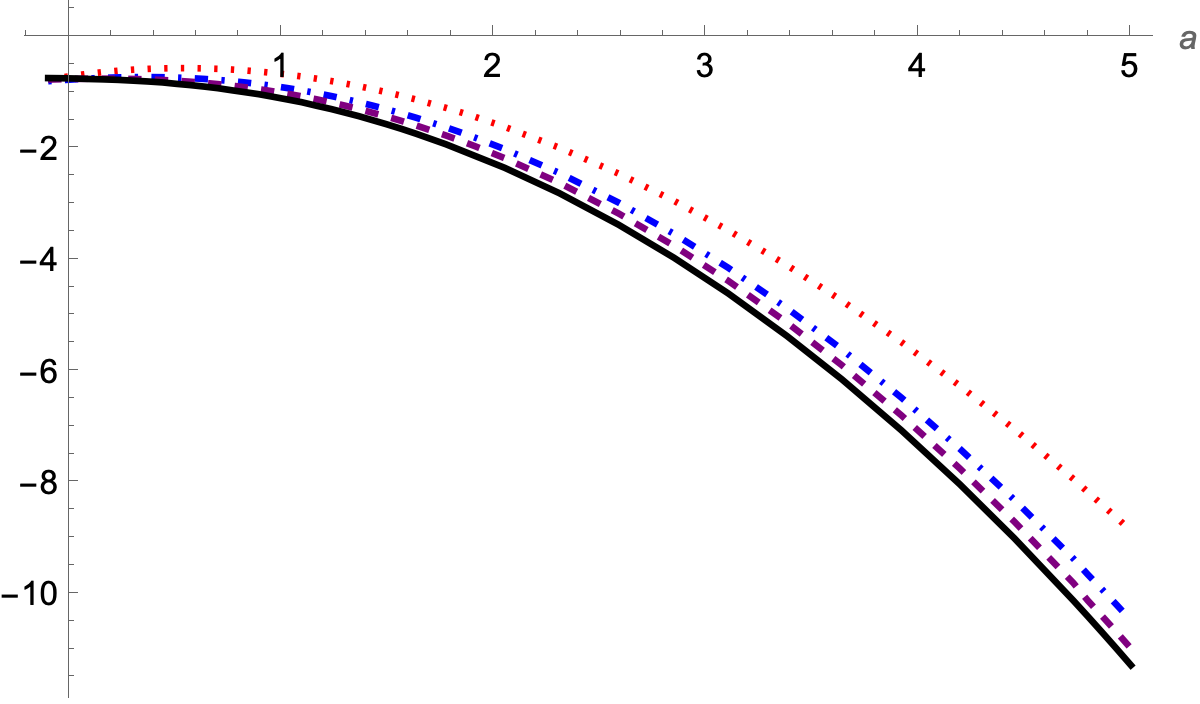}
            \subcaption{$a\mapsto C_3$}
    	\end{center}
    \end{subfigure}	 \quad 
	\begin{subfigure}{0.44\textwidth}
	\begin{center}	
		\includegraphics[width=\textwidth]{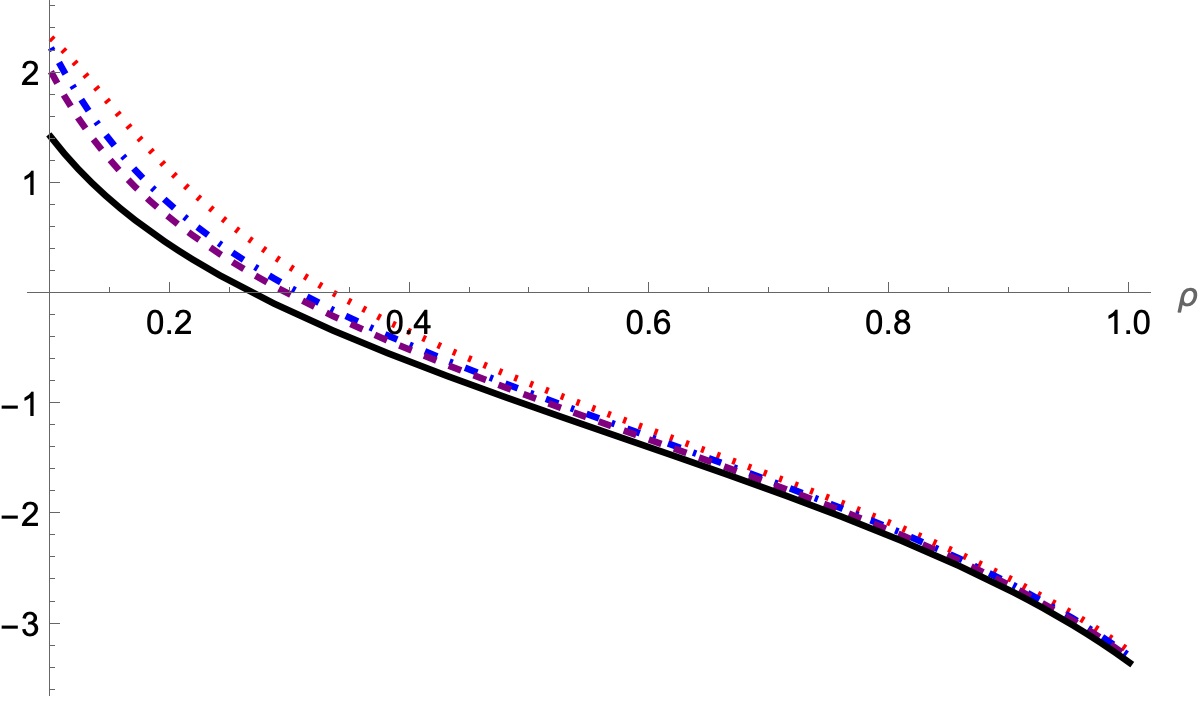}
            \subcaption{$\rho\mapsto C_3$}
	\end{center}
\end{subfigure}	
	\caption{    
The plot (a) shows $a\mapsto C_3$ (black line) and its comparison $a\mapsto \log\mathcal{E}_{n,u,a}-(C_1\,n+C_2\,\sqrt{n})$, where $Q(z)=0.2|z|^2+0.2345|z|^3, \alpha=0.667, u = 1.56$, $\rho=0.71r_1$, $n=10$ (red, dotted line), $n = 40$ (blue, dot-dashed line), and $n = 160$ (purple, dashed line)
The plot (b) shows $\rho\mapsto C_3$ (black line) and its comparison $\rho\mapsto \log\mathcal{E}_{n,u,a}-(C_1\,n+C_2\,\sqrt{n})$, where $Q,\alpha,u$ are same as before, $a=1.25$, and $n=100$ (red, dotted line), $n = 300$ (blue, dot-dashed line), and $n = 600$ (purple, dashed line). %Since $C_3$ is logarithmically divergent as $\rho\to0_{+}$, the behavior around the origin for the plot (b) is unstable for small size $n$.    
} \label{Fig_Comparison}
\end{figure} 
As mentioned earlier, Theorem~\ref{theorem:calEn expansion} confirms the conjecture stated in \cite[Remark~1.3]{BC2022}. 
Moreover, we obtained $C_3(u,a)$ in a simpler form than $C_3$ in \cite[Theorem~1.1]{BC2022}.
However, we currently do not have a conformal theoretic or geometric interpretation for \eqref{def of mathcal En constant C1}, \eqref{def of mathcal En constant C2}, and \eqref{def of mathcal En constant C3} as in \cite{ZW2006,BKS2023}.
We also believe that the error estimate in \eqref{def of expansion of calEnua} is not optimal, see \cite[Theorem~1.1 and Remark~1.2]{BC2022}.

\medskip

We finally deduce the large $n$ asymptotics of \eqref{def of det partition function} by combining Theorem~\ref{theorem:calEn expansion} with \cite[Theorem~1.4]{ACC2023c}.

We define
\begin{equation}
I_{Q}[\mu]:=\int_{\C^2}\log \frac{1}{|z-w|}\,d\mu(z)d\mu(w)+\int_{\C}Q(z)\,d\mu(z),   
\end{equation}
which is referred to as the weighted logarithmic energy, defined over all compactly supported Borel probability measures $\mu$.
In particular, if $Q$ is lower semi-continuous and finite on a set of positive capacity, Frostman’s theorem guarantees the existence of a unique equilibrium measure $\sigma_Q$ minimizing the weighted logarithmic energy, see \cite{SaTo}.
In particular, for the potential $Q$ satisfying Assumptions~\ref{Assumption_Q}, we have
\[
I_{Q}[\sigma_Q]
=q(r_1)-\log r_1-\frac{1}{4}\int_{0}^{r_1}rq'(r)^2\,dr.
\]
We define 
\begin{align*}
E_{Q}[\sigma_{Q}]&:=\int_{\C}\log \Delta Q\,d\sigma_{Q}, 
\\
F_{Q}[\sigma_{Q}]&:=\frac{1}{12}\log\frac{1}{r_1^2\Delta Q(r_1)}-\frac{1}{16}\frac{r_1\partial_r\Delta Q(r_1)}{\Delta Q(r_1)}+\frac{1}{24}\int_0^{r_1}\Bigl(\frac{\partial_r\Delta Q(r)}{\Delta Q(r)}\Bigr)^2r\,dr, 
\end{align*}
and for $\ell_{\alpha}(z):=2\alpha\log |z|$, 
\begin{align*}
\mathsf{e}_{\ell_{\alpha}}
&:=
\frac{1}{2}\int_{S}\ell_{\alpha}(z)\Delta \log \Delta Q(z)\,dA(z)+\frac{1}{8\pi}\int_{\partial S}\partial_{\mathrm{n}}\ell_{\alpha}(z)|dz|-\frac{1}{8\pi}\int_{\partial S}\ell_{\alpha}(z)\frac{\partial_{\mathrm{n}}\Delta Q(z)}{\Delta Q(z)}\,|dz|, 
\end{align*}
where “$\partial_{\mathrm{n}}$” designates differentiation in the normal direction to $\partial S$ pointing out from the droplet $S$. 
Here, $E_{Q}[\sigma_Q]$ represents the negative entropy of the equilibrium measure, while $F_{Q}[\sigma_Q]$ can be interpreted in terms of $\zeta$-regularized determinants associated with certain pseudo-differential operators. For additional background and details, we refer the reader to \cite{BKS2023,ACC2023c} and the references therein. 

We conclude with the following result.
\begin{theorem}[Partition function with circular- and root-type singularities]
\label{corollary:partition function}
Let $\rho\in(0,r_1)$, $\alpha>-1$, $u\in\R$, and $a>-1$. 
Under Assumptions~\ref{Assumption_Q}, there exists $\delta>0$ as $n\to+\infty$, we have   
\[
Z_{n,a,u}[Q]
=
\widetilde{C}_1n^2+\widetilde{C}_2n\log n+\widetilde{C}_3n+\widetilde{C}_4\sqrt{n}+\widetilde{C}_5\log n+\widetilde{C}_6+\mathcal{O}\Bigl(\frac{(\log n)^3}{n^{\frac{1}{12}}}\Bigr), 
\]
uniformly for $u\in\{z\in\C:|z-x|\leq\delta\}$ and $a$ in compact subsets of $(-1,+\infty)$, where 
\begin{align*}
\widetilde{C}_1&:=-I_Q[\sigma_Q],
\qquad
\widetilde{C}_2:=\frac{1}{2}, 
\\
\widetilde{C}_3&:=\frac{\log 2\pi}{2}-1-\frac{E_Q[\mu_Q]}{2}+\int_{\C}\ell_{\alpha}(z)\,d\sigma_Q(z)+C_1(u,a),
\qquad
\widetilde{C}_4:=C_2(u,a),
\qquad
\widetilde{C}_5:=\frac{5}{12}+\frac{\alpha^2}{2}, 
\\
\widetilde{C}_6&:=\zeta'(-1)-\log G(1+\alpha)+F_Q[\mu_Q]+\frac{1+\alpha}{2}\log(2\pi)+\mathsf{e}_{\ell_{\alpha}}
    +\frac{\alpha^2}{2}\log\bigl(r_1^2\Delta Q(0)\bigr)+C_3(u,a)
\end{align*}
where $C_1(u,a),C_2(u,a),C_3(u,a)$ are given by \eqref{def of mathcal En constant C1}, \eqref{def of mathcal En constant C2}, \eqref{def of mathcal En constant C3}, respectively, and $\zeta(z)$ is the Riemann zeta function. 
\end{theorem}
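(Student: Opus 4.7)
The strategy is a direct combination. The definition of $\mathcal{E}_{n,u,a}$ in \eqref{def of mathcal En} immediately yields the factorization $Z_{n,u,a}[Q] = \mathcal{E}_{n,u,a}\cdot Z_{n,0,0}[Q]$, so it suffices to expand each factor separately and add the logarithms.

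For the smooth partition function $Z_{n,0,0}[Q]$ I would invoke [ACC2023c, Theorem~1.4], which supplies its full large-$n$ asymptotic expansion. In that expansion the $n^2$-coefficient is $-I_Q[\sigma_Q]$, the $n\log n$-coefficient is $\tfrac12$, the $\log n$-coefficient is $\tfrac{5}{12}+\tfrac{\alpha^2}{2}$, and the $n$- and constant-terms agglomerate $E_Q[\sigma_Q]$, $\int\ell_\alpha\,d\sigma_Q$, $F_Q[\sigma_Q]$, $\mathsf{e}_{\ell_\alpha}$, $\zeta'(-1)$, $\log G(1+\alpha)$, and the $\tfrac{\alpha^2}{2}\log(r_1^2\Delta Q(0))$ contribution exactly as written in the statement, but with $C_1(u,a)$ and $C_3(u,a)$ removed. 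Crucially, this expansion contains no $\sqrt n$-term, so the entire $\widetilde{C}_4\sqrt n$ contribution must come from the singular factor.

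For $\mathcal{E}_{n,u,a}$, Theorem~\ref{theorem:calEn expansion} supplies
\[
\log \mathcal{E}_{n,u,a} = C_1(u,a)\,n + C_2(u,a)\sqrt{n} + C_3(u,a) + \mathcal{O}\Bigl(\tfrac{(\log n)^3}{n^{1/12}}\Bigr),
\]
uniformly for $u$ in a complex $\delta$-neighborhood of the chosen real point and for $a$ in compact subsets of $(-1,\infty)$. Adding the two expansions and exponentiating produces the six-term asymptotics: $C_1(u,a)$ slots into $\widetilde{C}_3$, $C_2(u,a)$ becomes $\widetilde{C}_4$, and $C_3(u,a)$ slots into $\widetilde{C}_6$. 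Because the lower-order remainder in [ACC2023c, Theorem~1.4] decays at least as fast as $(\log n)^3 n^{-1/12}$, the combined error is controlled by the contribution from Theorem~\ref{theorem:calEn expansion}, yielding the advertised bound.

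The only genuine check — and the main potential obstacle — is the matching of hypotheses. One must verify that Assumptions~\ref{Assumption_Q} (rotation invariance, $C^{6}$-smoothness near $S$, strict subharmonicity there, the growth condition $\liminf_{|z|\to\infty}Q(z)/(2\log|z|)>1$, together with $q'(0)>0$) fit inside the assumptions of [ACC2023c, Theorem~1.4]. The role of $q'(0)>0$ is to force $r_0=0$ so that the droplet is a full disk $\mathbb{D}_{r_1}$, which is the precise regime covered in that reference; the other conditions already lie within its standing setup. Once this compatibility — and the compatibility of the error term claimed there with the $(\log n)^3n^{-1/12}$ threshold — is confirmed, the assembly of the six constants $\widetilde{C}_1,\ldots,\widetilde{C}_6$ is mechanical.
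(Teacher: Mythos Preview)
Your proposal is correct and is precisely the approach the paper takes: it deduces the theorem by writing $\log Z_{n,u,a}[Q]=\log\mathcal{E}_{n,u,a}+\log Z_{n,0,0}[Q]$, applying Theorem~\ref{theorem:calEn expansion} to the first term and \cite[Theorem~1.4]{ACC2023c} to the second, and reading off the coefficients. The paper treats this as an immediate consequence and gives no further argument beyond the one-line combination you describe.
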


\begin{remark}
The coefficient $\widetilde{C}_5$ can be rewritten as $\frac{6-\chi}{12}$,  where $\chi$ is the Euler characteristic of the droplet. In our case, since the droplet is a disk, $\chi=1$. The fact that $\widetilde{C}_5$ can be rewritten in this way is consistent with the general conjecture \cite{ZW2006}. 
Similar results have been established in \cite{BKS2023} for centered disks and annuli, in \cite{ACC2023c} for multiple disjoint connected annuli possibly including a central disk, and in \cite{BKSY2025} for the (generalized) spherical ensemble.
\end{remark}

\subsubsection*{Related works}
We conclude by briefly mentioning recent developments in the literature related to Theorem~\ref{theorem:asymptotic expansion of counting statistics}, \ref{theorem:calEn expansion}, and Corollary~\ref{corollary:partition function}.

The study of structured determinants with singularities in one-dimensional point processes has a long history. In the seminal work \cite{FisherHartwig}, Fisher and Hartwig conjectured the asymptotic behavior of Toeplitz determinants in the large-size limit when the weight function is supported on the unit circle and possesses root- and jump-type singularities, which is now commonly referred to as Fisher--Hartwig singularities. 
For early works and more historical background, we refer to \cite{DIKreview} and references therein. 
Recent results on structured determinants with singularities in connection with random matrix theory include \cite{DIK, DIK14, Fahs,CGMY2020,CK2015,BCL2024} for Toeplitz determinants, \cite{DXZ2022,DXZ2023,LYZ24,Xetal,XZZ2024,YZ24,CM2023} for Fredholm determinants, \cite{Krasovsky, ItsKrasovsky, BerWebbWong, Charlier, CFWW2021} for Hankel determinants, \cite{BasorEhrhardt1, DIK} for Toeplitz+Hankel determinants. The above list is not exhaustive, and we therefore refer the reader to the references cited therein.

There has been significant progress on the precise large-$n$ asymptotics of partition functions, moment generating functions, and hole probabilities for the random normal matrices mentioned earlier. 
A seminal development is the works of Charlier \cite{C2021 FH,C2021} on the moment generating function and hole probabilities for random normal matrices with Mittag–Leffler potentials $Q(z)=|z|^{2b}$ for $b>0$.
Further progress was then made by Byun, Kang, and Seo \cite{BKS2023}, who obtained precise large-$n$ asymptotics for the partition functions of two-dimensional Coulomb gases with rotation-invariant potentials. 
Building on these results, subsequent advances include the study of multi-component droplets \cite{ACC2023c}; hole probabilities \cite{C2023,BP24}; partition functions with a fixed hard wall \cite{AFLS25}; partition functions for the case $\Delta Q=0$ along some circle inside the droplet \cite{AL25}; partition functions of spherical Coulomb gases \cite{BKSY2025}; Coulomb gases with Lemniscate-type potentials having a conical singularity at the origin \cite{Byun25}; and precise large-$n$ asymptotics of the Ginibre ensemble with a large point charge insertion breaking rotational symmetry \cite{BSY2025}. The latter is motivated by the strong asymptotics of planar orthogonal polynomials associated with the Ginibre ensemble with a point insertion \cite{BBLM2015}, and is connected to further investigations \cite{BEG18,BKP2023,Byun,LeeYang,LeeYang2,LY2023,BLY2022,BY2023} as well as their applications to Gaussian multiplicative chaos \cite{WebbWong}.
For recent developments on non-Gaussian potentials, such as the truncated unitary ensemble and the spherical ensemble with a point insertion, see \cite{DKMS25,BFL25,BFKL25}. These models correspond to point singularities and require the Riemann–Hilbert problem approach. In contrast, the case \eqref{def of mathcal En} involves circular and jump-type singularities, but our analysis does not require the Riemann–Hilbert problem as in \cite{BC2022}.

\subsubsection*{Outline}
It suffices to prove Theorem~\ref{theorem:calEn expansion}, as Theorem~\ref{theorem:asymptotic expansion of counting statistics} and Corollary~\ref{corollary:partition function} follow as special cases.
The strategy for proving Theorem~\ref{theorem:calEn expansion} combines the Laplace method, the precise Riemann sum approximation of \cite{C2021}, and the decomposition of a large sum into global and local analysis parts as in \cite{BC2022}.

We first split the logarithmic sum in \eqref{def of mathcal En} into a global part and a local part. 
In Section~\ref{section:global analysis}, we establish the large $n$ asymptotics of the global part (Lemmas~\ref{lemma:S0}, \ref{lemma:S1}, and \ref{lemma:S3}), 
where the proofs of Lemmas~\ref{lemma:S1} and~\ref{lemma:S3} require careful analysis of the error. 
In Section~\ref{section:local analysis part}, we establish the large $n$ asymptotics of the summand corresponding to the local part. 
This is first stated in Lemma~\ref{lemma:S2in part1}, and subsequently, each sum is approximated using Lemma~\ref{lemma:Euler-Maclaurin} with more refined error estimates than in the global part.  

%We emphasize that both the global and local analysis parts (Sections~\ref{section:global analysis} and~\ref{section:local analysis part}) are carried out using the Euler--Maclaurin formula, Lemma~\ref{lemma:Euler-Maclaurin}. 
%Alternatively, they could also be derived using the more precise Riemann approximation techniques of \cite[Lemmas~2.9 and~4.10]{BC2022}.

\section{Global analysis part for the proof of Theorem~\ref{theorem:calEn expansion}}
\label{section:global analysis}
Since $|z|^{2\alpha}e^{-nq(|z|)}\omega(z)$ is rotation invariant, $\mathcal{E}_{n}\equiv\mathcal{E}_{n,u,a}$ can be identically expressed in terms of one-fold integrals.
This is well-known fact as {\it Andr\'{e}ief identity} and has already been used in different contexts, see e.g., \cite{BC2022,BKS2023,ACC2023c,ACCL1,ACCL2,ACM2024,BP24,C2021 FH}.
For fixed $u\in\R,a\in(-1,+\infty)$, and $\rho\in(0,r_1)$, we have  
\begin{equation}
\label{def of calEn}
  \mathcal{E}_{n}\equiv\mathbb{E}\Bigl[e^{\frac{u}{\pi}\im\log p_n(\rho)}e^{a\re \log p_n(\rho)}\Bigr]
  =\frac{D_n}{Z_{n}},
\end{equation}
    where 
\begin{align}
\label{def of Dn}
    D_n&:=\prod_{j=0}^{n-1} \int_0^{+\infty} 2v^{2j+2\alpha+1}e^{-nq(v)}\omega(v)\,dv,
    \\
    \nonumber
    Z_{n}&:=\prod_{j=0}^{n-1} \int_0^{+\infty} 2v^{2j+2\alpha+1}e^{-nq(v)}\,dv.
\end{align}
Here, $\omega(|z|)$ is defined by \eqref{def of omega}.
Note that \eqref{def of Dn} is simply written as 
\begin{align*}
    D_n&=
    \prod_{j=0}^{n-1}
    \bigg( 
e^{u}\int_0^{\rho} 2v^{2j+2\alpha+1}e^{-nq(v)}(\rho-v)^{a}\,dv
+
\int_{\rho}^{+\infty} 2v^{2j+2\alpha+1}e^{-nq(v)}(v-\rho)^{a}\,dv
    \bigg). 
\end{align*}
In order to analyze the precise large $n$-asymptotics of \eqref{def of calEn}, we follow the robust strategy done in \cite{ACC2023c,BC2022,BKS2023}. 
We define 
\begin{equation}
\label{def of V tau}
    V_{\tau}(r):=q(r)-2\tau\log r, \qquad \tau\equiv \tau(j):=\frac{j}{n}. 
\end{equation}
By differentiating \eqref{def of V tau} with respect to $r$, from \cite[Eq. (2.4)]{BKS2023}, we have 
\begin{align}
\begin{split}
\label{def of V tau relationship 1}
V_{\tau}'(r)&=q'(r)-\frac{2\tau}{r},\qquad V_{\tau}''(r)=4\Delta Q(r)-\frac{1}{r}V_{\tau}'(r), \\ 
V_{\tau}^{(3)}(r)&=4\partial_{r}\Delta Q(r)-\frac{4}{r}\Delta Q(r)+\frac{2}{r^2}V_{\tau}'(r), \\
V_{\tau}^{(4)}(r)&=4\partial_r^2\Delta Q(r)+\frac{12}{r^2}\Delta Q(r)-\frac{4}{r}\partial_{r}\Delta Q(r)-\frac{6}{r^3}V_{\tau}'(r). 
\end{split}
\end{align}
We denote $r_{\tau}$ for $0\leq \tau\leq 1$ by
\begin{equation}
\label{def of rtau equation}
    r_{\tau}\,q'(r_{\tau})=2\tau,
\end{equation}
which gives rise to $V_{\tau}'(r_{\tau})=0$. 
Note that $r_{\tau}$ satisfies the following differential equation \cite[Eq. (2.6)]{BKS2023}
\begin{equation}
\label{def of rtau ODE}
\frac{dr_{\tau}}{d\tau}=\frac{1}{2r_{\tau}\Delta Q(r_{\tau})}>0,
\end{equation}
where we have used part (3) of Assumptions~\ref{Assumption_Q}.
By \cite[Eq. (3.7)]{BKS2023}, $r_{\tau}$ satisfies the following asymptotics
\begin{equation}
\label{def of rtau to 0}
    r_{\tau}=\Bigl(\frac{\tau}{\Delta Q(0)}\Bigr)^{\frac{1}{2}}+\mathcal{O}(\tau),\qquad \tau\to0. 
\end{equation}
Let $\tau_{\rho}$ be a solution so that 
\begin{equation}
\label{def of tau rho}
    \rho \,q'(\rho)=2\tau_{\rho}. 
\end{equation}
To split the logarithmic sum of \eqref{def of mathcal En}, we define critical indices 
\begin{equation}
\label{def of critical indices}
    g_{1,-}:=\lceil n(\tau_{\rho}-\delta_n) \rceil,\qquad
    g_{1,+}:=\lfloor n(\tau_{\rho}+\delta_n) \rfloor, \qquad 
    \delta_n':=\frac{M}{\sqrt{n}},\qquad M:=n^{\frac{1}{8}}(\log n)^{-\frac{1}{8}},
\end{equation}
where $\lceil x\rceil$ denotes the smallest integer $\geq x$, and $\lfloor x\rfloor$ denotes the largest integer $\leq x$.
%Our choice \eqref{def of critical indices} for the parameter $M$ plays an essential role to estimate error terms, which are essentially same as \cite[Section 4]{BC2022}.
We also define 
\begin{equation*}
    D_n:=\lfloor n^{\frac{1}{6}}\rfloor,\qquad\delta_n:=\frac{\log n}{\sqrt{n}}.
\end{equation*}
%whose choices are taken from \cite[Section 3]{BKS2023}. 
We write 
\begin{align}\label{def of hnj in}
h_{n,j}^{(\mathrm{in})}(\rho)&:=
\int_0^{\rho} 2ve^{\mathsf{k}(v)}e^{-nV_{\tau}(v)}|\rho-v|^{a}\,dv, 
\\
\label{def of hnj out}
h_{n,j}^{(\mathrm{out})}(\rho)&:=\int_{\rho}^{+\infty} 2ve^{\mathsf{k}(v)}e^{-nV_{\tau}(v)}|v-\rho|^{a}\,dv, 
\\
h_{n,j}&:=\int_0^{+\infty} 2ve^{\mathsf{k}(v)}e^{-nV_{\tau}(v)}\,dv,
\end{align}
where $\mathsf{k}(v):=2\alpha\log v$ for $\alpha>-1$.
Then, we can rewrite $\mathcal{E}_{n}$ as 
\begin{equation}
\label{def of log calEn splitted}
 \log\mathcal{E}_n=\sum_{j=0}^{n-1}\log\Bigl[
e^{u}\frac{h_{n,j}^{(\mathrm{in})}(\rho)}{h_{n,j}}+\frac{h_{n,j}^{(\mathrm{out})}(\rho)}{h_{n,j}}
 \Bigr]=S_0+S_1+S_2+S_3,
\end{equation}
where 
\begin{align*}
S_0&:=\sum_{j=0}^{D_n-1}\log\Bigl[
e^{u}\frac{h_{n,j}^{(\mathrm{in})}(\rho)}{h_{n,j}}+\frac{h_{n,j}^{(\mathrm{out})}(\rho)}{h_{n,j}}
 \Bigr],
\qquad
S_1:=\sum_{j=D_n}^{g_{1,-}-1}\log\Bigl[
e^{u}\frac{h_{n,j}^{(\mathrm{in})}(\rho)}{h_{n,j}}+\frac{h_{n,j}^{(\mathrm{out})}(\rho)}{h_{n,j}}
 \Bigr],
\\
S_2&:=\sum_{j=g_{1,-}}^{g_{1,+}}\log\Bigl[
e^{u}\frac{h_{n,j}^{(\mathrm{in})}(\rho)}{h_{n,j}}+\frac{h_{n,j}^{(\mathrm{out})}(\rho)}{h_{n,j}}
 \Bigr],
\qquad
S_3:=\sum_{j=g_{1,+}+1}^{n-1}\log\Bigl[
e^{u}\frac{h_{n,j}^{(\mathrm{in})}(\rho)}{h_{n,j}}+\frac{h_{n,j}^{(\mathrm{out})}(\rho)}{h_{n,j}}
 \Bigr].
\end{align*}
For the later purpose, we split $S_2$ into 
\begin{equation*}
    S_2=S_2^{(\mathrm{in})}+S_2^{(\mathrm{out})}, 
\end{equation*}
where 
\[
S_2^{(\mathrm{in})}:=\sum_{j=g_{1,-}}^{\lfloor n\tau_{\rho}\rfloor-1}\log\Bigl[
 e^{u}\frac{h_{n,j}^{(\mathrm{in})}(\rho)}{h_{n,j}}+\frac{h_{n,j}^{(\mathrm{out})}(\rho)}{h_{n,j}}
 \Bigr],
\qquad
S_2^{(\mathrm{out})}:=\sum_{j=\lfloor n\tau_{\rho}\rfloor}^{g_{1,+}}\log\Bigl[
e^{u}\frac{h_{n,j}^{(\mathrm{in})}(\rho)}{h_{n,j}}+\frac{h_{n,j}^{(\mathrm{out})}(\rho)}{h_{n,j}}
 \Bigr]. 
\]

\begin{lemma}
\label{lemma:0 j Dn minus1}
For $0\leq j\leq D_{n}-1$, there exists $c>0$ independent of $n$ such that as $n\to+\infty$, we have 
\begin{align}
\label{def of 0 j Dn in}
h_{n,j}^{(\mathrm{in})}(\rho)&=
e^{-nq(0)}\Bigl(\frac{2}{nq''(0)}\Bigr)^{j+\alpha+1}
\rho^{a}\Gamma(j+\alpha+1)
\bigg[1+\mathcal{O}\Bigl(\frac{(j+1)^{3/2}(\log n)^3}{\sqrt{n}}\Bigr)\bigg], 
\\
\label{def of 0 j Dn out}
h_{n,j}^{(\mathrm{out})}(\rho)&=e^{-nq(0)}\cdot \mathcal{O}(e^{-cn}),     
\end{align}
uniformly for $a$ in compact subsets of $(-1,+\infty)$. 
\end{lemma}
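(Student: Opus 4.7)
The proof is a Laplace-method computation exploiting that for $0 \leq j \leq D_n - 1 = O(n^{1/6})$ the ratio $\tau = j/n$ is at most $O(n^{-5/6})$, so the critical point $r_\tau$ of $V_\tau$ lies very close to the origin and the integrand of $h_{n,j}^{(\mathrm{in})}(\rho)$ concentrates in a small neighborhood of $v=0$ of width comparable to $\sqrt{(j+1)/n}$. In particular, $(\rho-v)^{a}$ is well approximated by $\rho^{a}$ throughout the relevant range of $v$, and $q(v)-q(0)$ is well approximated by its leading quadratic $\tfrac12 q''(0) v^{2}$, which is what produces the $q''(0)$ appearing in the stated answer.

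To carry this out, I would first factor out $e^{-nq(0)}$ and rescale $v=(2/(nq''(0)))^{1/2}s$ to obtain
\[
h_{n,j}^{(\mathrm{in})}(\rho)=2e^{-nq(0)}\Bigl(\frac{2}{nq''(0)}\Bigr)^{j+\alpha+1}\int_{0}^{L_n}s^{2j+2\alpha+1}e^{-s^{2}-\Phi_n(s)}\bigl(\rho-v(s)\bigr)^{a}\,ds,
\]
with $L_n=\rho\sqrt{nq''(0)/2}\to\infty$ and $\Phi_n(s)=n\bigl(q(v)-q(0)-\tfrac12 q''(0)v^{2}\bigr)=O(s^{3}/\sqrt{n})$ uniformly on bounded ranges of $s$. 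I would then split the integral at $s=N_n:=C_0\sqrt{j+\log n}$, which sits well beyond the peak of $s^{2j+2\alpha+1}e^{-s^{2}}$. On the inner piece, a Taylor expansion gives $e^{-\Phi_n(s)}(\rho-v(s))^{a}=\rho^{a}\bigl(1+O(s/\sqrt n)+O(s^{3}/\sqrt n)\bigr)$; integrating against $s^{2j+2\alpha+1}e^{-s^{2}}$ produces the main term $\tfrac12\rho^{a}\Gamma(j+\alpha+1)$, and the corrections, evaluated via $\int_{0}^{\infty}s^{2j+2\alpha+k}e^{-s^{2}}ds=\tfrac12\Gamma(j+\alpha+(k+1)/2)$ and Stirling's formula, contribute a relative error of order $(j+1)^{3/2}(\log n)^{3}/\sqrt n$. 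The outer tail $s>N_n$ is negligible by standard Gaussian tail estimates applied to $s^{2j+2\alpha+1}e^{-s^{2}/2}\cdot e^{-s^{2}/2}$.

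For $h_{n,j}^{(\mathrm{out})}(\rho)$ the mechanism is different and more elementary: since $V_\tau$ attains its minimum at $r_\tau$ tending to $0$, there is a constant $c_0>0$ so that $V_\tau(v)-q(0)\geq c_0$ for all $v\geq\rho$ and $\tau\leq D_n/n$. Combined with Assumption~\ref{Assumption_Q}(1), which ensures that $v^{2\alpha+2j+1}(v-\rho)^{a}e^{-nq(v)/2}$ is integrable at infinity with a uniform bound for $j\leq D_n-1$, this yields $h_{n,j}^{(\mathrm{out})}(\rho)=e^{-nq(0)}\cdot O(e^{-cn})$, uniformly in $a$ on compact subsets of $(-1,+\infty)$.

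The main obstacle I anticipate is tracking the error in the inner estimate uniformly in $j\in\{0,\ldots,D_n-1\}$ and in $a$ on compact subsets of $(-1,+\infty)$, in order to land squarely on the claimed factor $(j+1)^{3/2}(\log n)^{3}/\sqrt n$: the natural scale $\sqrt{j+1}$ of the integrand and the logarithmic cutoff $\sqrt{\log n}$ needed for Gaussian tail decay interact in the bound for $\Phi_n$, and one must take $N_n$ large enough to absorb both without losing control of the polynomial $s^{2j+2\alpha+1}$. A little extra care near $a=-1$ is also required to ensure that $(\rho-v)^{a}$ is uniformly bounded by $\rho^{a}(1+O(v))$ inside the cutoff.
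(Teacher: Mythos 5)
Your proposal is correct and follows the same Laplace-method route that the paper invokes by citing \cite[Lemma 4.1]{ACC2023c}; the paper's entire proof is that citation (for $h_{n,j}$ without the weight) plus the remark that the extra smooth, bounded factor $(\rho-v)^a\approx\rho^a$ near $v=0$, where the integrand concentrates, is absorbed in the same way and you have simply carried out that computation explicitly, including the elementary lower bound on $V_\tau-q(0)$ for $v\geq\rho$ giving the exponentially small outer piece. One small caveat worth recording: your Taylor step $\Phi_n(s)=n\bigl(q(v)-q(0)-\tfrac12 q''(0)v^2\bigr)=O(s^3/\sqrt{n})$ silently uses $q'(0)=0$, which is indeed what is in force here (it is forced by the appearance of $q''(0)$ in the claimed answer, by finiteness of $\Delta Q(0)$, and by \eqref{def of rtau to 0}), even though Assumptions~\ref{Assumption_Q}(3) as printed reads $q'(0)>0$, evidently a typo for $q''(0)>0$.
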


\begin{proof}
We recall \cite[Lemma 4.1]{ACC2023c}; 
\begin{equation}
h_{n,j}
=e^{-nq(0)}\Bigl(\frac{2}{nq''(0)}\Bigr)^{j+\alpha+1}
\Gamma(j+\alpha+1)
\bigg[1+\mathcal{O}\Bigl(\frac{(j+1)^{3/2}(\log n)^3}{\sqrt{n}}\Bigr)\bigg]. 
\end{equation}
By a similar manner to the proof of \cite[Lemma 4.1]{ACC2023c}, as $n\to+\infty$, we obtain \eqref{def of 0 j Dn in} and \eqref{def of 0 j Dn out}. 
In particular, the error terms do not depend on $a$. 
This completes the proof. 
\end{proof}

\begin{lemma}\label{lemma:S0}
There exists $\delta>0$ such that as $n\to+\infty$, we have
\begin{equation}
\label{def of S0 asymptotics}
   S_0=D_n\log(\rho^a e^{u})+\mathcal{O}((\log n)^3n^{-\frac{1}{12}}),  
\end{equation}
uniformly for $u\in\{z\in\C:|z-x|\leq \delta\}$ and $a$ in compact subsets of $(-1,+\infty)$. 
\end{lemma}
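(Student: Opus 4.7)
The plan is to insert the asymptotics of Lemma~\ref{lemma:0 j Dn minus1} directly into each summand of $S_0$ and then expand the logarithm; the conclusion then follows from a routine bookkeeping of errors.

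First I would form the two ratios. Dividing \eqref{def of 0 j Dn in} by the asymptotics for $h_{n,j}$ quoted from \cite[Lemma~4.1]{ACC2023c} in the proof of Lemma~\ref{lemma:0 j Dn minus1}, the factors $e^{-nq(0)}$, $\bigl(\tfrac{2}{nq''(0)}\bigr)^{j+\alpha+1}$, and $\Gamma(j+\alpha+1)$ cancel, leaving
\[
\frac{h_{n,j}^{(\mathrm{in})}(\rho)}{h_{n,j}}=\rho^{a}\Bigl[1+\mathcal{O}\Bigl(\tfrac{(j+1)^{3/2}(\log n)^{3}}{\sqrt{n}}\Bigr)\Bigr],
\]
with implicit constant uniform for $a$ in compact subsets of $(-1,+\infty)$. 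For the out-ratio, \eqref{def of 0 j Dn out} gives a numerator of size $e^{-nq(0)}\mathcal{O}(e^{-cn})$, while the denominator has a polynomial-in-$n$ lower bound that for $0\le j\le D_{n}-1\le n^{1/6}$ loses at worst a factor $n^{j+\alpha+1}\le e^{\mathcal{O}(n^{1/6}\log n)}$. Since $n^{1/6}\log n\ll n$, this yields $h_{n,j}^{(\mathrm{out})}(\rho)/h_{n,j}=\mathcal{O}(e^{-c'n})$ uniformly in $j$ and in $(u,a)$ for some $c'>0$.

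Next, since $u$ varies over a small complex neighborhood of a real point and $a$ over a compact set, $e^{u}\rho^{a}$ stays uniformly bounded above and below in modulus, so factoring it out of the bracket and applying $\log(1+z)=\mathcal{O}(z)$ for small $z$ gives
\[
\log\Bigl[e^{u}\tfrac{h_{n,j}^{(\mathrm{in})}(\rho)}{h_{n,j}}+\tfrac{h_{n,j}^{(\mathrm{out})}(\rho)}{h_{n,j}}\Bigr]
=u+a\log\rho+\mathcal{O}\Bigl(\tfrac{(j+1)^{3/2}(\log n)^{3}}{\sqrt{n}}\Bigr),
\]
where the exponentially small out-contribution is absorbed into the displayed error.

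Summing over $0\le j\le D_{n}-1$ produces the leading term $D_{n}(u+a\log\rho)=D_{n}\log(\rho^{a}e^{u})$, while the bound $\sum_{j=0}^{D_{n}-1}(j+1)^{3/2}=\mathcal{O}(D_{n}^{5/2})=\mathcal{O}(n^{5/12})$ makes the cumulative error $\mathcal{O}((\log n)^{3}n^{-1/12})$, matching \eqref{def of S0 asymptotics}. There is no serious obstacle here: the only delicate point is ensuring that the out-ratio remains exponentially small \emph{uniformly} for all $j\le D_{n}-1$, which is exactly why the cutoff $D_{n}=\lfloor n^{1/6}\rfloor$ is adopted in the global/local decomposition.
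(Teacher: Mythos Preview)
Your proposal is correct and follows exactly the approach the paper intends: the paper's own proof consists of a single sentence citing Lemma~\ref{lemma:0 j Dn minus1} and \cite[Lemma~4.1]{ACC2023c}, and you have supplied the routine details of forming the ratios, expanding the logarithm, and summing the errors $\sum_{j<D_n}(j+1)^{3/2}=\mathcal{O}(n^{5/12})$ to reach $\mathcal{O}((\log n)^3 n^{-1/12})$. Your handling of the out-ratio (absorbing the at-most $e^{\mathcal{O}(n^{1/6}\log n)}$ loss from the denominator into the $e^{-cn}$ numerator) is the only nontrivial check, and it is done correctly.
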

\begin{proof}
By Lemma~\ref{lemma:0 j Dn minus1} and \cite[Lemma 4.1]{ACC2023c}, 
there exists $\delta>0$ such that we obtain \eqref{def of S0 asymptotics}, uniformly for $u\in\{z\in\C:|z-x|\leq \delta\}$ and $a$ in compact subsets of $(-1,+\infty)$. 
\end{proof}

We next turn to $S_1$ and $S_3$. 
We use the following lemma from \cite[Lemma 4.3]{ACC2023c}.
\begin{lemma}\label{lemma:hnj asymptotics}
For $D_n\leq j\leq n-1$ and $\mathsf{k}(r)=2\alpha\log r$, as $n\to+\infty$, we have 
\begin{equation}
\label{def of Dn leq j leq n-1}
h_{n,j}
=\sqrt{\frac{2\pi}{n}} \frac{r_{\tau}}{\sqrt{\Delta Q(r_{\tau})}}e^{\mathsf{k}(r_{\tau})}e^{-nV_{\tau}(r_{\tau})}\cdot\Bigl(1+\frac{\mathcal{A}(r_{\tau})}{n}+\mathcal{O}(\frac{(\log n)^{\nu}}{j^{3/2}})\Bigr)    
\end{equation}
for some $\nu>0$, where
\begin{align*}
\mathcal{A}(r_{\tau})&:=\mathcal{B}(r_{\tau})+\frac{\mathsf{k}'(r_{\tau})^2}{2}\frac{1}{d_2}+\frac{\mathsf{k}''(r_{\tau})}{2}\frac{1}{d_2}+\frac{\mathsf{k}'(r_{\tau})}{r_{\tau}}\frac{1}{d_2}-\frac{\mathsf{k}'(r_{\tau})}{2}\frac{d_3}{d_2^2},   
\\
\mathcal{B}(r)&:=-\frac{1}{32}\frac{\partial_r^2\Delta Q(r)}{(\Delta Q(r))^2}-\frac{19}{96}\frac{\partial_r\Delta Q(r)}{(\Delta Q(r))^2}+\frac{5}{96}\frac{(\partial_r \Delta Q(r))^2}{(\Delta Q(r))^3}+\frac{1}{12}\frac{1}{r^2\Delta Q(r)},      \\
d_m&:=V_{\tau}^{(m)}(r_{\tau}).
\end{align*}
Here, $\mathcal{O}(j^{-3/2}(\log n)^{\nu})$ can be replaced with $\mathcal{O}(n^{-2})$ for large $j$, i.e., for $j\geq c_0 n$ with some $c_0>0$. 
\end{lemma}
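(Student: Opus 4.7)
The plan is to apply a one-dimensional Laplace expansion to the integral
\[
h_{n,j}=\int_0^{+\infty} 2v\, e^{\mathsf{k}(v)}\, e^{-nV_{\tau}(v)}\,dv,
\]
exploiting the fact that, by \eqref{def of V tau relationship 1} and the strict subharmonicity assumption, $V_{\tau}$ has a unique non-degenerate minimum at $r_{\tau}$ with $V_{\tau}'(r_{\tau})=0$ and $V_{\tau}''(r_{\tau})=4\Delta Q(r_{\tau})>0$. I would first localize: choose a width $\epsilon_n=C\,r_{\tau}\,\sqrt{\log n/n}$, write $h_{n,j}=h_{n,j}^{\mathrm{loc}}+h_{n,j}^{\mathrm{tail}}$ according to $|v-r_{\tau}|\leq\epsilon_n$ or not, and use the lower bound $V_{\tau}(v)-V_{\tau}(r_{\tau})\geq c_0 (v-r_{\tau})^2$ near $r_{\tau}$ together with the growth condition in Assumptions~\ref{Assumption_Q}\textup{(1)} far away to control the tail by $e^{-nV_{\tau}(r_{\tau})}\,\mathcal{O}(n^{-K})$ for every fixed $K$.

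For the localized contribution I would substitute $v=r_{\tau}+s/\sqrt{n\,d_2}$ with $d_2=V_{\tau}''(r_{\tau})=4\Delta Q(r_{\tau})$, so that by Taylor expansion
\[
n\bigl(V_{\tau}(v)-V_{\tau}(r_{\tau})\bigr)=\frac{s^2}{2}+\frac{d_3\,s^3}{6\,d_2^{3/2}\sqrt{n}}+\frac{d_4\,s^4}{24\,d_2^{2}\,n}+\mathcal{O}\!\left(\frac{s^5}{n^{3/2}}\right),
\]
and simultaneously expand
\[
\log\bigl(2v\,e^{\mathsf{k}(v)}\bigr)=\log(2r_{\tau})+\mathsf{k}(r_{\tau})+\Bigl(\tfrac{1}{r_{\tau}}+\mathsf{k}'(r_{\tau})\Bigr)(v-r_{\tau})+\tfrac{1}{2}\Bigl(-\tfrac{1}{r_{\tau}^2}+\mathsf{k}''(r_{\tau})\Bigr)(v-r_{\tau})^2+\mathcal{O}\bigl((v-r_{\tau})^3\bigr).
\]
Factoring out the dominant $\sqrt{2\pi/n}\,r_{\tau}\,(\Delta Q(r_{\tau}))^{-1/2}\,e^{\mathsf{k}(r_{\tau})}\,e^{-nV_{\tau}(r_{\tau})}$ and writing the remainder as $\exp(X)$ with $X=\mathcal{O}(n^{-1/2})$ on the scale of interest, I would expand $e^X=1+X+\tfrac{1}{2}X^2+\cdots$ up to order $n^{-1}$.

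The next step is Gaussian integration: after extending the $s$-integration back to $\mathbb{R}$ (the error of doing so being absorbed by the tail estimate), all odd moments of $e^{-s^2/2}\,ds/\sqrt{2\pi}$ vanish. Collecting the surviving $1/n$-terms gives exactly
\[
\frac{1}{n}\Bigl[\tfrac{5}{24}\tfrac{d_3^2}{d_2^3}-\tfrac{1}{8}\tfrac{d_4}{d_2^2}\Bigr]+\frac{1}{n}\Bigl[\tfrac{(\mathsf{k}')^2}{2 d_2}+\tfrac{\mathsf{k}''}{2 d_2}+\tfrac{\mathsf{k}'}{r_{\tau}d_2}-\tfrac{\mathsf{k}'}{2}\tfrac{d_3}{d_2^2}\Bigr],
\]
where the first bracket (after using the identities \eqref{def of V tau relationship 1} for $d_3,d_4$ in terms of $\Delta Q$ and its derivatives) is precisely $\mathcal{B}(r_{\tau})$, and the second bracket assembles into the remaining part of $\mathcal{A}(r_{\tau})$.

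The main obstacle is uniformity in $j$, in particular the passage from the sharp bound $\mathcal{O}(n^{-2})$ for $j\geq c_0 n$ to the weaker bound $\mathcal{O}((\log n)^{\nu}/j^{3/2})$ when $j$ is small. By \eqref{def of rtau to 0}, small $j$ forces $r_{\tau}\sim\sqrt{j/(n\Delta Q(0))}$, so the Taylor coefficients of $\mathsf{k}(v)=2\alpha\log v$ about $r_{\tau}$ blow up like inverse powers of $r_{\tau}$, and the natural Gaussian window $s=\mathcal{O}(\sqrt{\log n})$ produces displacements $(v-r_{\tau})=\mathcal{O}(\sqrt{\log n/(n\,d_2)})$ that are no longer negligible compared to $r_{\tau}$. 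Tracking the next cubic remainder gives a relative error of order $r_{\tau}^{-3}\,n^{-3/2}(\log n)^{\nu}$, which is exactly $(\log n)^{\nu}/j^{3/2}$; for $j\geq c_0 n$ all these denominators are uniformly bounded, and one instead pushes the expansion to the next order to obtain the $\mathcal{O}(n^{-2})$ statement. This careful bookkeeping in the small-$j$ regime is the delicate part of the proof.
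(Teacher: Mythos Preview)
The paper does not prove this lemma at all: immediately before the statement it writes ``We use the following lemma from \cite[Lemma 4.3]{ACC2023c}'' and simply quotes the result. Your outline via a one-dimensional Laplace expansion with a shrinking window, Gaussian integration of the Taylor-expanded integrand, and careful bookkeeping in the small-$j$ regime (where $r_{\tau}\sim\sqrt{j/n}$ makes the derivatives of $\mathsf{k}$ blow up) is exactly the standard route, and it is essentially what the cited reference does. Your explanation of why the error is $\mathcal{O}((\log n)^{\nu}/j^{3/2})$ for small $j$ and improves to $\mathcal{O}(n^{-2})$ once $j\geq c_0 n$ is correct.

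There is, however, one algebraic slip in your $1/n$ coefficient. The amplitude $2v\,e^{\mathsf{k}(v)}$ has logarithmic derivative $L'=\tfrac{1}{r_{\tau}}+\mathsf{k}'(r_{\tau})$, and the cross-term $-\tfrac{L'\,d_3}{2d_2^2}$ from $\tfrac12 X^2$ therefore contributes not only $-\tfrac{\mathsf{k}'}{2}\tfrac{d_3}{d_2^2}$ (which you kept) but also $-\tfrac{1}{2r_{\tau}}\tfrac{d_3}{d_2^2}$, which is missing from both of your brackets. Your first bracket $\tfrac{5}{24}\tfrac{d_3^2}{d_2^3}-\tfrac{1}{8}\tfrac{d_4}{d_2^2}$ is \emph{not} $\mathcal{B}(r_{\tau})$; rather,
\[
\mathcal{B}(r_{\tau})=\frac{5}{24}\frac{d_3^2}{d_2^3}-\frac{1}{8}\frac{d_4}{d_2^2}-\frac{1}{2r_{\tau}}\frac{d_3}{d_2^2},
\]
and it is this full combination that, upon substituting the identities in \eqref{def of V tau relationship 1} for $d_2,d_3,d_4$ at $r_{\tau}$, reduces to the stated expression for $\mathcal{B}(r)$. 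Once you restore this term, your two brackets assemble exactly into $\mathcal{A}(r_{\tau})$.
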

We first establish the large $n$-asymptotics of $S_1$.
To this end, for a sufficiently small $\epsilon>0$, let 
\[
j_{1,-}:=\lceil n(\tau_{\rho}-\epsilon)\rceil.    
\]
We write 
\begin{align*}
\theta_{1,-}^{(\epsilon)}&:=j_{1,-}-n(\tau_{\rho}-\epsilon),
\quad
\theta_{D_n}:=n^{1/6}-D_n, 
\\
\theta_{1,-}&:=\lceil n(\tau_{\rho}-\delta_n)\rceil-n(\tau_{\rho}-\delta_n),\quad
\theta_{1,+}:=n(\tau_{\rho}+\delta_n)-\lfloor n(\tau_{\rho}+\delta_n)\rfloor. 
\end{align*}
Let us denote
\begin{align}
\label{def of Delta nplus}
    \Delta_{n,+}&:=\sqrt{\frac{n}{\Delta Q(\rho)}}\frac{\delta_n'-\tfrac{\theta_{1,+}}{n}}{\rho}=\frac{1}{\rho\sqrt{\Delta Q(\rho)}}\Bigl(M-\frac{\theta_{1,+}}{\sqrt{n}}\Bigr),
    \\
    \label{def of Delta nminus}
    \Delta_{n,-}&:=\sqrt{\frac{n}{\Delta Q(\rho)}}\frac{\delta_n'-\tfrac{\theta_{1,-}}{n}}{\rho}=\frac{1}{\rho\sqrt{\Delta Q(\rho)}}\Bigl(M-\frac{\theta_{1,-}}{\sqrt{n}}\Bigr).
\end{align}
By the implicit function theorem, Assumptions~\ref{Assumption_Q}, and \eqref{def of rtau ODE}, as $n\to+\infty$, $r_{\tau(g_{\pm})}$ and $r_{\tau(j_{1,-})}$ are expanded as  
\begin{align}
\label{def of rtau g1plus}
    r_{\tau(g_{1,+})}
&=
\rho+\frac{1}{2\sqrt{\Delta Q(\rho)}}\frac{\Delta_{n,+}}{\sqrt{n}}
-\frac{\Delta Q(\rho)+\rho\partial_r\Delta Q(\rho)}{8\rho \Delta Q(\rho)^2}\frac{\Delta_{n,+}^2}{n}+\mathcal{O}\Bigl(\frac{\Delta_{n,+}^3}{n^{3/2}}\Bigr), 
\\
\label{def of rtau g1minus}
r_{\tau(g_{1,-})}
&=\rho-\frac{1}{2\sqrt{\Delta Q(\rho)}}\frac{\Delta_{n,-}}{\sqrt{n}}
-\frac{\Delta Q(\rho)+\rho\partial_r\Delta Q(\rho)}{8\rho \Delta Q(\rho)^2}\frac{\Delta_{n,-}^2}{n}+\mathcal{O}\Bigl(\frac{\Delta_{n,-}^3}{n^{3/2}}\Bigr).
%\\
%\label{def of rtau j1minus}
%r_{\tau(j_{1,-})}
%&=
%\rho-\frac{1}{2\sqrt{\Delta Q(\rho)}}\frac{\Delta_{n,-}^{(\epsilon)}}{\sqrt{n}}
%-\frac{\Delta Q(\rho)+\rho\partial_r\Delta Q(\rho)}{8\rho \Delta Q(\rho)^2}\frac{{\Delta_{n,-}^{(\epsilon)}}^2}{n}+\mathcal{O}\Bigl(\frac{{\Delta_{n,-}^{(\epsilon)}}^3}{n^{3/2}}\Bigr),
\end{align}
\begin{lemma}\label{lemma:S1}
There exists $\delta>0$ such that as $n\to+\infty$, we have 
\begin{equation}
\label{def of S1 asymptotics}
S_1= C_1^{(1)}n + C_2^{(1)}\sqrt{n}+C_3^{(1)}+C_n^{(1)}+
\mathcal{O}\Bigl(\frac{(\log n)^{3}}{n^{\frac{1}{12}}}\Bigr),    
\end{equation}
uniformly for $u\in\{z\in\C:|z-x|\leq \delta\}$ and $a$ in compact subsets of $(-1,+\infty)$, where 
\begin{align*}
C_1^{(1)}&:=
\tau_{\rho}u
+
\int_{0}^{\rho}2a u\Delta Q(u)\log(\rho-u)\,du
\qquad
C_2^{(1)}:=0,
\\
C_3^{(1)}&:=
\frac{a}{2}\log 2
+\frac{a}{4}\log\Delta Q(\rho)
+\frac{a}{2}\log\rho
-
\frac{a(a-1)}{8}\Bigl(3+\frac{\rho\,\partial_r\Delta Q(\rho)}{\Delta Q(\rho)}\Bigr)
\\
&\quad
+
\frac{a}{4}\int_0^{\rho}\Bigl(\frac{1}{\rho-u} 
\frac{u\partial_r\Delta Q(u)}{\Delta Q(u)}
-
\frac{1}{\rho-u} 
\frac{\rho\partial_r\Delta Q(\rho)}{\Delta Q(\rho)}
\Bigr)\,du
\\
&\quad
-\frac{a}{4}
\log(2\rho\sqrt{\Delta Q(\rho)})
\Bigl(4\alpha+a+2-\frac{\rho\,\partial_r\Delta Q(\rho)}{\Delta Q(\rho)}\Bigr),
\\
C_n^{(1)}&:=
-\rho\sqrt{\Delta Q(\rho)}\sqrt{n}\Delta_{n,-}u
-D_nu
-aD_{n}\log\rho
\\
&\quad
-\frac{a}{2}\rho\sqrt{\Delta Q(\rho)}
\Bigl(2\log\Delta_{n,-}-2\log 2 -\log\Delta Q(\rho)-\log n-2\Bigr)\Delta_{n,-}\sqrt{n}
\\
&\quad
-\frac{a}{8}\Bigl(1+\frac{\rho\partial_r\Delta Q(\rho)}{\Delta Q(\rho)}\Bigr)\Delta_{n,-}^2
-\frac{a}{2}\log\Delta_{n,-}
+\frac{a}{4}\log n
+\frac{a(a-1)}{2}
\frac{\sqrt{n}}{\Delta_{n,-}}\rho\sqrt{\Delta Q(\rho)}
\\
&\quad
-\frac{a(a-1)(2a-3)}{12}\rho\sqrt{\Delta Q(\rho)}\frac{\sqrt{n}}{\Delta_{n,-}^3}
+\frac{a}{4}\Bigl(\log\Delta_{n,-}-\frac{1}{2}\log n\Bigr)
\Bigl( 4\alpha+a+2 -\frac{\rho\partial_r\Delta Q(\rho)}{\Delta Q(\rho)} \Bigr).
\end{align*}
\end{lemma}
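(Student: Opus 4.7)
The plan is a Laplace-plus-Riemann-sum analysis of the summands. In the regime $D_n \le j \le g_{1,-}-1$ the saddle point $r_\tau$ of $V_\tau$ lies in $(0,\rho)$ and, by the implicit function theorem together with $\delta_n = \log n/\sqrt{n}$, stays separated from $\rho$ by at least a distance of order $\delta_n$. I would first argue that $h_{n,j}^{(\mathrm{out})}(\rho)/h_{n,j}$ is negligible: since $V_\tau$ has its unique minimum at $r_\tau < \rho - c\delta_n$ and is strictly increasing thereafter, the uniform convex lower bound $V_\tau(v) - V_\tau(r_\tau) \ge c(v-r_\tau)^2$ gives $h_{n,j}^{(\mathrm{out})}(\rho)/h_{n,j} = \mathcal{O}(e^{-cn\delta_n^2}) = \mathcal{O}(n^{-c'\log n})$. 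Hence
\begin{equation*}
\log\Bigl[e^u\tfrac{h_{n,j}^{(\mathrm{in})}(\rho)}{h_{n,j}} + \tfrac{h_{n,j}^{(\mathrm{out})}(\rho)}{h_{n,j}}\Bigr] = u + \log\tfrac{h_{n,j}^{(\mathrm{in})}(\rho)}{h_{n,j}} + \mathcal{O}(n^{-A})
\end{equation*}
uniformly for any fixed $A>0$, and the task reduces to expanding the in-part and summing.

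Next, I would apply Laplace's method to $h_{n,j}^{(\mathrm{in})}(\rho) = \int_0^\rho 2v\,e^{\mathsf{k}(v)}(\rho-v)^a e^{-nV_\tau(v)}\,dv$, whose amplitude $2v\,e^{\mathsf{k}(v)}(\rho-v)^a$ is smooth in a fixed neighborhood of $r_\tau$. Taylor-expanding the amplitude to order three around $r_\tau$, treating the cubic and quartic perturbations of $V_\tau$ exactly as in Lemma~\ref{lemma:hnj asymptotics}, and dividing by the analogous expansion of $h_{n,j}$ so that the Gaussian prefactor and the exponential $e^{-nV_\tau(r_\tau)}$ cancel, one obtains
\begin{equation*}
\frac{h_{n,j}^{(\mathrm{in})}(\rho)}{h_{n,j}} = (\rho-r_\tau)^a\Bigl[1 + \frac{\mathcal{A}_a(\tau)}{n} + \mathcal{O}\Bigl(\frac{(\log n)^\nu}{j^{3/2}}\Bigr)\Bigr],
\end{equation*}
where $\mathcal{A}_a(\tau)$ is a rational expression in $a$, the derivatives of $V_\tau$ at $r_\tau$, and negative powers of $\rho-r_\tau$ (the latter arising from Taylor-expanding $(\rho-v)^a = (\rho-r_\tau)^a[1-(v-r_\tau)/(\rho-r_\tau)]^a$). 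Summation over $j$ is then converted to integration by an Euler--Maclaurin / precise Riemann-sum formula (the lemma announced in the outline), and the change of variables \eqref{def of rtau ODE} turns the leading integral $n\int^{\tau_\rho-\delta_n}[u + a\log(\rho-r_\tau)]\,d\tau$ into the combination $\tau_\rho u + \int_0^\rho 2ar\Delta Q(r)\log(\rho-r)\,dr$ that defines $C_1^{(1)}$.

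The remaining work is to separate the $\sqrt{n}$ and $\mathcal{O}(1)$ contributions at the two endpoints. Substituting \eqref{def of rtau g1minus} for $r_{\tau(g_{1,-})}$ in powers of $\Delta_{n,-}/\sqrt{n}$ and expanding both $a\log(\rho-r_\tau)$ and $\mathcal{A}_a(\tau)$ at the upper endpoint produces divergent boundary contributions of orders $\sqrt{n}\Delta_{n,-}$, $\sqrt{n}/\Delta_{n,-}$, $\sqrt{n}/\Delta_{n,-}^3$, $\Delta_{n,-}^2$, and $\log\Delta_{n,-}$, which are exactly the terms collected in $C_n^{(1)}$. The lower-endpoint contribution, expanded via \eqref{def of rtau to 0}, produces the $-D_n u$ and $-aD_n\log\rho$ entries of $C_n^{(1)}$. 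The surviving order-one remainders, combined with $\int \mathcal{A}_a(\tau)\,d\tau$ and the trapezoidal Euler--Maclaurin correction at both endpoints, assemble into $C_3^{(1)}$; the absence of any genuine $\sqrt{n}$ contribution outside $C_n^{(1)}$ gives $C_2^{(1)} = 0$. The main obstacle is precisely this endpoint bookkeeping: both $a\log(\rho-r_\tau)$ and $\mathcal{A}_a(\tau)$ develop inverse-power singularities $(\rho-r_\tau)^{-k}$ with $k \le 3$, and a careful two-parameter expansion in $1/\sqrt{n}$ and $\Delta_{n,-}$ is needed to separate the divergent pieces (which later cancel against analogous terms in $S_2$, where the saddle crosses the singularity) from the universal order-one contribution $C_3^{(1)}$. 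A secondary concern is uniformity in $a$ on compact subsets of $(-1,+\infty)$ and in $u$ on a small complex disk around $x$; this follows from the analytic/polynomial dependence of the Laplace coefficients on $(a,u)$, noting that $u$ enters $S_1$ only through the additive term $u$ in the summand.
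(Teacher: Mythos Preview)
Your proposal is correct and follows essentially the same route as the paper: show the out-part is exponentially negligible, Laplace-expand $h_{n,j}^{(\mathrm{in})}(\rho)/h_{n,j}$ around $r_\tau$ to obtain $(\rho-r_\tau)^a[1+\mathcal{M}_1^{(\mathrm{in})}(r_\tau)/n+\cdots]$, sum via Euler--Maclaurin with the change of variables \eqref{def of rtau ODE}, and expand the endpoints using \eqref{def of rtau g1minus} and \eqref{def of rtau to 0}. One small point: to capture the $\sqrt{n}/\Delta_{n,-}^3$ term in $C_n^{(1)}$ you must also retain the $n^{-2}$ Laplace correction, whose leading singular piece is $a(a-1)(a-2)(a-3)/(24 d_2^2(\rho-r_\tau)^4)$; your displayed expansion stops at $1/n$, but since you list $\sqrt{n}/\Delta_{n,-}^3$ among the boundary outputs you are clearly aware of this.
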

\begin{proof}
By Assumptions~\ref{Assumption_Q} on the potential $Q$ and for $j\in\{D_n,\dots,g_{1,-}-1\}$, and a sufficiently large $n\in\N$, there exists a unique critical point $r_{\tau}\in(0,\rho)$, which satisfies $r_{\tau}q'(r_{\tau})=2\tau$. 
This implies that there is no critical point inside $(\rho,+\infty)$. 
By the integrability of the exponential term $e^{-nq(r)}$, 
we find that there exists $c>0$ such that 
\begin{equation}
    h_{n,j}^{(\mathrm{out})}(\rho)=e^{-nV_{\tau}(r_{\tau})}\cdot \mathcal{O}(e^{-cn}). 
\end{equation}
Therefore, we focus on \eqref{def of hnj in}. We split the integral
\begin{align*}
h_{n,j}^{(\mathrm{in})}(\rho)
&=
\int_{[0,\rho]\cap \{r:|r-r_{\tau}|<\delta_n\}}2r^{2\alpha+1}e^{-nV_{\tau}(r)}(\rho-r)^{a}\,dr
+
\int_{[0,\rho]\cap \{r:|r-r_{\tau}|\geq \delta_n\}}2r^{2\alpha+1}e^{-nV_{\tau}(r)}(\rho-r)^{a}\,dr
\\
&=
\int_{[0,\rho]\cap \{r:|r-r_{\tau}|<\delta_n\}}2r e^{\mathsf{k}(r)}e^{-nV_{\tau}(r)}(\rho-r)^{a}\,dr
+
e^{-nV_{\tau}(r_{\tau})}\cdot \mathcal{O}(e^{-c(\log n)^2}), 
\end{align*}
where $\mathsf{k}(r):=2\alpha\log r$.
Here, the exponential small error follows from a similar manner to \cite[Proof of Lemma 2.1]{BKS2023}.
For the first term in the last line, we apply the Laplace method, and then we have 
\begin{align*}
&\quad
\int_{[0,\rho]\cap \{r:|r-r_{\tau}|<\delta_n\}}2r e^{\mathsf{k}(r)}e^{-nV_{\tau}(r)}|r-\rho|^{a}\,dr
\\
&=
\frac{2r_{\tau}e^{\mathsf{k}(r_{\tau})}e^{-nV_{\tau}(r_{\tau})}}{\sqrt{nd_2}}\int_{-\sqrt{d_2n}\delta_n}^{\sqrt{d_2n}\delta_n}
e^{-\frac{1}{2}u^2}
\Bigl|r_{\tau}-\rho+\frac{u}{\sqrt{nd_2}}\Bigr|^{a}
\Bigl( 
1+\frac{c_1(u)}{\sqrt{n}}+\frac{c_2(u)}{n}+\frac{c_3(u)}{n^{3/2}}+\mathcal{O}(\frac{c_4(u)}{n^2})
\Bigr)\,du
\\
&=
\frac{\sqrt{2\pi}r_{\tau}e^{\mathsf{k}(r_{\tau})}e^{-nV_{\tau}(r_{\tau})}}{\sqrt{n\Delta Q(r_{\tau})}}
(\rho-r_{\tau})^a
\Bigl[
1+\frac{1}{n}\Bigl(\mathcal{M}_1^{(\rm in)}(r_{\tau})+\mathcal{A}_1(r_{\tau})\Bigr)
\\
&\quad
+\frac{1}{n^2}
\Bigl\{
\frac{a(a-1)(a-2)(a-3)u^4}{24d_2^2(\rho-r_{\tau})^4}
+\mathcal{A}_2(r_{\tau})
+\mathcal{O}\Bigl(\frac{1}{(\rho-r_{\tau})^3}\Bigr)
\Bigr\}
\Bigr]
+
\mathcal{O}(e^{-c(\log n)^2}), 
\end{align*}
where the error term depends on $a$, but it does not affect the order of the error. Here, 
\begin{align*}
\mathcal{M}_1^{(\rm in)}(r_{\tau})
&=
\frac{a(a-1)(\rho-r_{\tau})^{-2}}{8\Delta Q(r_{\tau})}
-\frac{a(\rho-r_{\tau})^{-1}}{4\Delta Q(r_{\tau})}
\Bigl( 
-\frac{\partial_r\Delta Q(r_{\tau})}{2\Delta Q(r_{\tau})}
+\frac{4\alpha+3}{2r_{\tau}}
\Bigr).
\end{align*}
and, we safely extended the integral region to $(-\infty,\infty)$ with the exponential error $\mathcal{O}(e^{-c(\log n)^2})$ for some $c>0$, and for $d_{j}:=V_{\tau}^{(j)}(r_{\tau})$ given by \eqref{def of V tau relationship 1}, $c_k(u)$ for $k=1,2,3$ are given by 
\begin{align}
\label{def of c_1}
c_1(u)&:=-\frac{d_3}{6d_2^{3/2}}u^3+\Bigl(\mathsf{k}'(r_{\tau})+\frac{1}{r_{\tau}}\Bigr)\frac{1}{d_2^{1/2}}u, 
\\
\label{def of c_2}
c_2(u)&:=\frac{d_3^2}{72d_2^3}u^6-\Bigl(d_4+4d_3\mathsf{k}'(r_{\tau})+\frac{4d_3}{r_{\tau}}\Bigr)\frac{u^4}{24d_2^2}+\Bigl(\frac{\mathsf{k}''(r_{\tau})}{2}+\frac{\mathsf{k}'(r_{\tau})^2}{2}+\frac{\mathsf{k}'(r_{\tau})}{r_{\tau}}\Bigr)\frac{u^2}{d_2}
\\
\begin{split}
\label{def of c_3}
c_3(u)&:=-\frac{d_3^3}{1296d_2^{9/2}}u^9+\Bigl(d_3d_4+2d_3^2\mathsf{k}'(r_{\tau})+\frac{2d_3^2}{r_{\tau}} \Bigr)\frac{u^7}{144d_2^{7/2}}
\\
&\quad
-\Bigl(
d_5+\frac{5}{r_{\tau}}d_4+5d_4\mathsf{k}'(r_{\tau})+\frac{20d_3}{r_{\tau}}\mathsf{k}'(r_{\tau})+10d_3\mathsf{k}'(r_{\tau})^2+10d_3\mathsf{k}''(r_{\tau})
\Bigr)
\frac{u^5}{120d_2^{5/2}}
\\
&\quad
+
\Bigl(\frac{3\mathsf{k}'(r_{\tau})^2}{r_{\tau}}
+\frac{3\mathsf{k}''(r_{\tau})}{r_{\tau}}+\mathsf{k}'(r_{\tau})^3+3\mathsf{k}'(r_{\tau})\mathsf{k}''(r_{\tau})+\mathsf{k}^{(3)}(r_{\tau})\Bigr)\frac{u^3}{6d_2^{3/2}},
\end{split}
\end{align}
and $c_4(u)$ is a polynomial consisting of $u^{12},u^{10},u^{8},u^6,u^4$.
Since $\rho-r_{\tau}>0$ and $\rho-r_{\tau}\searrow c n^{-1/2}M$ with some $c>0$ for $D_{n}\leq j\leq g_{1,-}-1$, by Lemma~\ref{lemma:hnj asymptotics}, we have 
\begin{align*}
S_1
&=
\sum_{j=D_n}^{g_{1,-}-1}
\log\bigl(e^{u}(\rho-r_{\tau})^{a}\bigr)
+\frac{1}{n}\sum_{j=j_{1,-}}^{g_{1,-}-1}
\Bigl[
\frac{a(a-1)(\rho-r_{\tau})^{-2}}{8\Delta Q(r_{\tau})}
-\frac{a(\rho-r_{\tau})^{-1}}{4\Delta Q(r_{\tau})}
\Bigl( 
-\frac{\partial_r\Delta Q(r_{\tau})}{2\Delta Q(r_{\tau})}+\frac{4\alpha+3}{2r_{\tau}}
\Bigr)
\Bigr]
\\
&\quad
-\frac{1}{n^2}
\sum_{j=D_n}^{g_{1,-}-1}
\frac{a(a-1)(2a-3)}{64\Delta Q(r_{\tau})^2(\rho-r_{\tau})^4}
+\sum_{j=D_n}^{g_{1,-}-1}
\mathcal{O}\Bigl(\frac{1}{n^{2}}\frac{1}{(\rho-r_{\tau})^3}\Bigr)
+
\mathcal{O}\Bigl(\frac{(\log n)^{3}}{n^{\frac{1}{12}}}\Bigr),
\end{align*}
where the second error term is independent of $a,u$.
By Lemma~\ref{lemma:Euler-Maclaurin} and by change of variable $r_{\tau(j)}=t$, we have 
\[
\sum_{j=D_n}^{g_{1,-}-1}
\mathcal{O}\Bigl(\frac{1}{n^{2}}\frac{1}{(\rho-r_{\tau})^3}\Bigr)
=
\mathcal{O}\Bigl(\frac{(\log n)^{\frac{3}{8}}}{n^{\frac{7}{8}}}\Bigr). 
\]
Consequently, the expansion of $S_1$ takes the following form:
\begin{align*}
S_1&=(g_{1,-}-D_n)u+a\int_{D_n}^{g_{1,-}}\log(\rho-r_{\tau(t)})\,dt
-\frac{a}{2}\log(\rho-r_{\tau(g_{1,-})})
+\frac{a}{2}\log(\rho-r_{\tau(D_n)})
\\
&\quad
+\frac{1}{n}\sum_{j=D_n}^{g_{1,-}-1}
\Bigl[
\frac{a(a-1)(\rho-r_{\tau})^{-2}}{8\Delta Q(r_{\tau})}
-\frac{a(\rho-r_{\tau})^{-1}}{4\Delta Q(r_{\tau})}
\Bigl( 
-\frac{\partial_r\Delta Q(r_{\tau})}{2\Delta Q(r_{\tau})}+\frac{4\alpha+3}{2r_{\tau}}
\Bigr)
\Bigr]
\\
&\quad
-\frac{1}{n^2}
\sum_{j=D_n}^{g_{1,-}-1}
\frac{a(a-1)(2a-3)}{64\Delta Q(r_{\tau})^2(\rho-r_{\tau})^4}
+\mathcal{O}\Bigl(\frac{(\log n)^{3}}{n^{\frac{1}{12}}}\Bigr), 
\end{align*}
where we have used Lemma~\ref{lemma:Euler-Maclaurin}, and the error term is independent of $a,u$.
By \eqref{def of rtau to 0} and \eqref{def of rtau g1minus}, we have 
\begin{align*}
a\int_{D_n}^{g_{1,-}}\log(\rho-r_{\tau(t)})\,dt
&=
n\int_{0}^{\rho}2a u\Delta Q(u)\log(\rho-u)\,du
-aD_{n}\log\rho
\\
&\quad
-\frac{a}{2}\rho\sqrt{\Delta Q(\rho)}
\Bigl(2\log\Delta_{n,-}-2\log 2 -\log\Delta Q(\rho)-\log n-2\Bigr)\Delta_{n,-}\sqrt{n}
\\
&\quad
-\frac{a}{8}\Bigl(1+\frac{\rho\partial_r\Delta Q(\rho)}{\Delta Q(\rho)}\Bigr)\Delta_{n,-}^2
+\mathcal{O}\Bigl(\frac{(\log n)^{1/4}}{n^{1/4}}\Bigr),
\\
-\frac{a}{2}\log(\rho-r_{\tau(g_{1,-})})+\frac{a}{2}\log(\rho-r_{\tau(D_n)})
&=
-\frac{a}{2}\log\Delta_{n,-}
+\frac{a}{2}\log 2
+\frac{a}{4}\log\Delta Q(\rho)
+\frac{a}{4}\log n
+\frac{a}{2}\log\rho
+\mathcal{O}\Bigl(\frac{(\log n)^{1/4}}{n^{1/4}}\Bigr). 
\end{align*}
The Taylor theorem gives rise to
\begin{align*}
&\quad
a\int_{D_n}^{g_{1,-}}\log(\rho-r_{\tau(t)})\,dt
-\frac{a}{2}\log(\rho-r_{\tau(g_{1,-})})
+\frac{a}{2}\log(\rho-r_{\tau(D_n)})
\\
&=
n\int_{0}^{\rho}2a u\Delta Q(u)\log(\rho-u)\,du
+\frac{a}{2}\log 2
+\frac{a}{4}\log\Delta Q(\rho)
+\frac{a}{2}\log\rho
\\
&\quad
-\frac{a}{2}\rho\sqrt{\Delta Q(\rho)}
\Bigl(2\log\Delta_{n,-}-2\log 2 -\log\Delta Q(\rho)-\log n-2\Bigr)\Delta_{n,-}\sqrt{n}
\\
&\quad
-aD_{n}\log\rho
-\frac{a}{8}\Bigl(1+\frac{\rho\partial_r\Delta Q(\rho)}{\Delta Q(\rho)}\Bigr)\Delta_{n,-}^2
-\frac{a}{2}\log\Delta_{n,-}
+\frac{a}{4}\log n
+\mathcal{O}\Bigl(\frac{(\log n)^{1/4}}{n^{1/4}}\Bigr). 
\end{align*}
By Lemma~\ref{lemma:Euler-Maclaurin}, we have 
\begin{align*}
-\frac{1}{n^2}
\sum_{j=D_n}^{g_{1,-}-1}
\frac{a(a-1)(2a-3)}{64\Delta Q(r_{\tau})^2(\rho-r_{\tau})^4}
&=
-\frac{1}{n}\int_{r_{\tau(D_n)}}^{r_{\tau(g_{1,-})}}
\frac{a(a-1)(2a-3)}{32\Delta Q(u)(\rho-u)^4}u\,du+\mathcal{O}(\Delta_{n,-}^{-4}).
\end{align*}
Integration by parts leads to
\[
-\frac{1}{n}\int_{r_{\tau(D_n)}}^{r_{\tau(g_{1,-})}}
\frac{a(a-1)(2a-3)}{32\Delta Q(u)(\rho-u)^4}u\,du
=
-
\frac{1}{n}\frac{a(a-1)(2a-3)r_{\tau(g_{1,-})}}{96\Delta Q(r_{\tau(g_{1,-})})(\rho-r_{\tau(g_{1,-})})^3}
+
\widetilde{\epsilon}_{n,-}
+
\mathcal{O}(n^{-1}), 
\]
where 
\[
\widetilde{\epsilon}_{n,-}:=\frac{1}{n}\int_{r_{\tau(D_n)}}^{r_{\tau(g_{1,-})}}
\frac{a(a-1)(2a-3)}{96(\rho-u)^3}\frac{\Delta Q(u)-u\partial_u\Delta Q(u)}{\Delta Q(u)^2}\,du. 
\]
Note that 
\begin{align*}
-
\frac{1}{n}\frac{a(a-1)(2a-3)r_{\tau(g_{1,-})}}{96\Delta Q(r_{\tau(g_{1,-})})(\rho-r_{\tau(g_{1,-})})^3}
&=-\frac{a(a-1)(2a-3)}{12}\rho\sqrt{\Delta Q(\rho)}\frac{\sqrt{n}}{\Delta_{n,-}^3}+\mathcal{O}(\Delta_{n,-}^{-2}), 
\\
|\widetilde{\epsilon}_{n,-}|&\lesssim \frac{1}{n}\bigg|\int_{r_{\tau(D_n)}}^{r_{\tau(g_{1,-})}}\frac{du}{(\rho-u)^3}\bigg|=\mathcal{O}(\Delta_{n,-}^{-2}).
\end{align*}
By Lemma~\ref{lemma:Euler-Maclaurin}, we have 
\begin{align*}
&\quad\frac{1}{n}\sum_{j=D_n}^{g_{1,-}-1}
\Bigl[
\frac{a(a-1)(\rho-r_{\tau})^{-2}}{8\Delta Q(r_{\tau})}
-\frac{a(\rho-r_{\tau})^{-1}}{4\Delta Q(r_{\tau})}
\Bigl( 
-\frac{\partial_r\Delta Q(r_{\tau})}{2\Delta Q(r_{\tau})}+\frac{4\alpha+3}{2r_{\tau}}
\Bigr)
\Bigr]
\\
&=
\frac{a(a-1)}{4}\bigg[\frac{\rho}{\rho-r_{\tau(g_{1,-})}}+\log(\rho-r_{\tau(g_{1,-})})-\frac{\rho}{\rho-r_{\tau(D_n)}}-\log(\rho-r_{\tau(D_n)})\bigg]
\\
&\quad
+\frac{a}{4}\Bigl(\log(\rho-r_{\tau(g_{1,-})})-\log(\rho-r_{\tau(D_n)})\Bigr)
\Bigl(4\alpha+3-\frac{\rho\,\partial_r\Delta Q(\rho)}{\Delta Q(\rho)}\Bigr)
\\
&\quad
+\frac{a}{4}\int_{r_{\tau(D_n)}}^{r_{\tau(g_{1,-})}}\Bigl(\frac{1}{\rho-u}\frac{u\,\partial_r\Delta Q(u)}{\Delta Q(u)}-\frac{1}{\rho-u}\frac{\rho\,\partial_r\Delta Q(\rho)}{\Delta Q(\rho)}\Bigr)\,du
+\mathcal{O}(\Delta_{n,-}^{-2}).
\end{align*}
Using the asymptotics by the Taylor theorem, 
\begin{align*}
\frac{a}{4}\int_{r_{\tau(D_n)}}^{r_{\tau(g_{1,-})}}\Bigl(\frac{1}{\rho-u}\frac{u\,\partial_r\Delta Q(u)}{\Delta Q(u)}-\frac{1}{\rho-u}\frac{\rho\,\partial_r\Delta Q(\rho)}{\Delta Q(\rho)}\Bigr)\,du
&=
\frac{a}{4}\int_{0}^{\rho}\Bigl(\frac{1}{\rho-u}\frac{u\,\partial_r\Delta Q(u)}{\Delta Q(u)}-\frac{1}{\rho-u}\frac{\rho\,\partial_r\Delta Q(\rho)}{\Delta Q(\rho)}\Bigr)\,du
\\
&\quad
+\mathcal{O}\Bigl(\frac{\Delta_{n,-}}{\sqrt{n}}+\frac{D_n}{n}\Bigr), 
\end{align*}
\begin{align*}
&\quad\frac{\rho}{\rho-r_{\tau(g_{1,-})}}+\log(\rho-r_{\tau(g_{1,-})})-\frac{\rho}{\rho-r_{\tau(D_n)}}-\log(\rho-r_{\tau(D_n)})
\\
&=
2\rho\sqrt{\Delta Q(\rho)}\frac{\sqrt{n}}{\Delta_{n,-}}-\frac{1}{2}\Bigl(3+\frac{\rho\,\partial_r\Delta Q(\rho)}{\Delta Q(\rho)}\Bigr)-\log(2\rho\sqrt{\Delta Q(\rho)})-\frac{1}{2}\log n+\log\Delta_{n,-}+\mathcal{O}\Bigl(\sqrt{\frac{D_n}{n}}\Bigr), 
\end{align*}
and 
\[
\log(\rho-r_{\tau(g_{1,-})})-\log(\rho-r_{\tau(D_n)})
=-\frac{1}{2}\log n+\log\Delta_{n,-}-\log(2\rho\sqrt{\Delta Q(\rho)})
+\mathcal{O}\Bigl(\sqrt{\frac{D_n}{n}}\Bigr),
\]
we have 
\begin{align*}
&\quad\frac{1}{n}\sum_{j=D_n}^{g_{1,-}-1}
\Bigl[
\frac{a(a-1)(\rho-r_{\tau})^{-2}}{8\Delta Q(r_{\tau})}
-\frac{a(\rho-r_{\tau})^{-1}}{4\Delta Q(r_{\tau})}
\Bigl( 
-\frac{\partial_r\Delta Q(r_{\tau})}{2\Delta Q(r_{\tau})}+\frac{4\alpha+3}{2r_{\tau}}
\Bigr)
\Bigr]
\\
&=
-
\frac{a(a-1)}{4}\bigg[
\frac{1}{2}\Bigl(3+\frac{\rho\,\partial_r\Delta Q(\rho)}{\Delta Q(\rho)}\Bigr)
+\log(2\rho\sqrt{\Delta Q(\rho)})
\bigg]
-\frac{a}{4}
\log(2\rho\sqrt{\Delta Q(\rho)})
\Bigl(4\alpha+3-\frac{\rho\,\partial_r\Delta Q(\rho)}{\Delta Q(\rho)}\Bigr)
\\
&\quad
+\frac{a}{4}\int_{0}^{\rho}\Bigl(\frac{1}{\rho-u}\frac{u\,\partial_r\Delta Q(u)}{\Delta Q(u)}-\frac{1}{\rho-u}\frac{\rho\,\partial_r\Delta Q(\rho)}{\Delta Q(\rho)}\Bigr)\,du
\\
&\quad
+\frac{a}{4}\Bigl(
-\frac{1}{2}\log n+\log\Delta_{n,-}
\Bigr)
\Bigl(4\alpha+a+2-\frac{\rho\,\partial_r\Delta Q(\rho)}{\Delta Q(\rho)}\Bigr)
+\frac{a(a-1)}{2}
\rho\sqrt{\Delta Q(\rho)}\frac{\sqrt{n}}{\Delta_{n,-}}
+\mathcal{O}(\Delta_{n,-}^{-2}). 
\end{align*}
Combining all the above, we obtain \eqref{def of S1 asymptotics}.
\end{proof}
Next, we compute the large $n$-asymptotics of $S_3$. 
\begin{lemma}\label{lemma:S3}
There exists $\delta>0$ such that as $n\to+\infty$, we have 
\begin{equation}
\label{def of S3 asymptotics}
 S_3=
 C_1^{(3)}n+C_2^{(3)}\sqrt{n}+C_3^{(3)}+C_n^{(3)}+\mathcal{O}\Bigl(\frac{(\log n)^{\frac{1}{4}}}{n^{\frac{1}{4}}}\Bigr),
\end{equation}
uniformly for $u\in\{z\in\C:|z-x|\leq \delta\}$ and $a$ in compact subsets of $(-1,+\infty)$, where 
\begin{align*}
C_1^{(3)}&:=
a\int_{\rho}^{r_1}\log (t-\rho)\,2t\Delta Q(t)\,dt,
\\
C_2^{(3)}&:=0, 
\\
C_3^{(3)}&:=
-\frac{a}{2}\log(r_1-\rho)
+\frac{a}{2}\log2
+\frac{a}{4}\log(\Delta Q(\rho))
+\frac{a(a-1)}{4}\bigg[\frac{1}{2}\Bigl(1+\frac{\rho\,\partial_r\Delta Q(\rho)}{\Delta Q(\rho)}\Bigr)-\frac{\rho}{r_1-\rho}\bigg]
\\
&\quad
+\frac{a}{4}\Bigl(4\alpha+a+2-\frac{\rho\,\partial_r\Delta Q(\rho)}{\Delta Q(\rho)}\Bigr)\log(2(r_1-\rho)\sqrt{\Delta Q(\rho)})
\\
&\quad
-\frac{a}{4}\int_{\rho}^{r_1}\Bigl(\frac{1}{t-\rho}\frac{t\,\partial_t\Delta Q(t)}{\Delta Q(t)}-\frac{1}{t-\rho}\frac{\rho\,\partial_r\Delta Q(\rho)}{\Delta Q(\rho)}\Bigr)\,dt, 
\\
\begin{split}
C_n^{(3)}&:=
-\frac{a}{2}\rho\sqrt{\Delta Q(\rho)}
\Bigl( 
2\log\Delta_{n,+}-2\log 2-\log\Delta Q(\rho)-\log n-2
\Bigr)\sqrt{n}\Delta_{n,+}
\\
&\quad
+\frac{a}{8}\Bigl(1+\frac{\rho\partial_r\Delta Q(\rho)}{\Delta Q(\rho)}\Bigr)\Delta_{n,+}^2
-\frac{a}{2}\log(\Delta_{n,+})
+\frac{a}{4}\log n
+\frac{a(a-1)}{2}\rho\sqrt{\Delta Q(\rho)}\frac{\sqrt{n}}{\Delta_{n,+}}
\\
&\quad
-\frac{a}{4}
\Bigl( 
\log\Delta_{n,+}-\frac{1}{2}\log n
\Bigr)
\Bigl(
4\alpha+a+2-\frac{\rho\partial_r\Delta Q(\rho)}{\Delta Q(\rho)}
\Bigr)
-\frac{a(a-1)(2a-3)}{12}\rho\sqrt{\Delta Q(\rho)}\frac{\sqrt{n}}{\Delta_{n,+}^3}. 
\end{split}
\end{align*}

\end{lemma}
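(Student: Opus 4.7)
The proof runs in parallel with that of Lemma \ref{lemma:S1}, with the roles of $h_{n,j}^{(\mathrm{in})}$ and $h_{n,j}^{(\mathrm{out})}$ interchanged. For $j \geq g_{1,+}+1$, the unique critical point $r_\tau$ of $V_\tau$ on $(0,+\infty)$ lies in $(\rho, r_1]$ at distance $r_\tau - \rho \gtrsim \delta_n$ from $\rho$. Hence $V_\tau$ attains its minimum on $(\rho,+\infty)$, and on $[0,\rho]$ we have $V_\tau(r) - V_\tau(r_\tau) \gtrsim (r_\tau-\rho)^2 \gtrsim (\log n)^2/n$, which yields $h_{n,j}^{(\mathrm{in})}(\rho) = e^{-nV_\tau(r_\tau)}\cdot\mathcal{O}(e^{-c(\log n)^2})$ uniformly in $j$ and in $(u,a)$ in the stated ranges.

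Next, I would localise $h_{n,j}^{(\mathrm{out})}(\rho)$ to $\{r:|r-r_\tau|<\delta_n\}$ up to an $\mathcal{O}(e^{-c(\log n)^2})$ error and apply the Laplace method, treating the factor $(r-\rho)^a$ as a smooth perturbation by expanding $(r-\rho)^a = (r_\tau-\rho)^a\bigl(1+\tfrac{r-r_\tau}{r_\tau-\rho}\bigr)^a$. This gives, in analogy with the $h_{n,j}^{(\mathrm{in})}$ expansion in Lemma \ref{lemma:S1},
\[
h_{n,j}^{(\mathrm{out})}(\rho) = \frac{\sqrt{2\pi}\,r_\tau\,e^{\mathsf{k}(r_\tau)}e^{-nV_\tau(r_\tau)}}{\sqrt{n\Delta Q(r_\tau)}}(r_\tau-\rho)^a\Bigl[1+\tfrac{1}{n}\bigl(\mathcal{M}_1^{(\mathrm{out})}(r_\tau)+\mathcal{A}_1(r_\tau)\bigr)+\mathcal{O}\bigl((n(r_\tau-\rho)^2)^{-2}\bigr)\Bigr],
\]
where $\mathcal{M}_1^{(\mathrm{out})}(r_\tau)$ equals $\mathcal{M}_1^{(\mathrm{in})}(r_\tau)$ of Lemma \ref{lemma:S1} with the appropriate sign flips induced by $(r-\rho)^a$ versus $(\rho-r)^a$. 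Dividing by $h_{n,j}$ through Lemma \ref{lemma:hnj asymptotics} and expanding the logarithm produces
\[
S_3 = \sum_{j=g_{1,+}+1}^{n-1}\Bigl[a\log(r_\tau-\rho)+\tfrac{1}{n}\mathcal{M}_1^{(\mathrm{out})}(r_\tau)-\tfrac{1}{n^2}\tfrac{a(a-1)(2a-3)}{64\Delta Q(r_\tau)^2(r_\tau-\rho)^4}\Bigr]+\mathcal{O}\Bigl(\tfrac{(\log n)^3}{n^{1/12}}\Bigr),
\]
with the $e^u$ contribution absorbed into the negligible $h_{n,j}^{(\mathrm{in})}$ piece.

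I would then convert each sum to an integral via Lemma \ref{lemma:Euler-Maclaurin} and the change of variable $t = r_{\tau(j)}$, using \eqref{def of rtau ODE} to produce the Jacobian $2t\Delta Q(t)$. The leading sum yields $C_1^{(3)}n$, the $n^{-1}$ sum is treated by integration by parts: the $(r_\tau-\rho)^{-2}$ piece gives a boundary term $\rho/(r_\tau-\rho)\bigr|_{g_{1,+}+1}^{n-1}$ together with a regular integral, and similarly for the $(r_\tau-\rho)^{-1}$ piece. The upper endpoint $r_{\tau(n-1)}=r_1+\mathcal{O}(1/n)$ contributes the finite constants built into $C_3^{(3)}$ (the $\rho/(r_1-\rho)$ and $\log(r_1-\rho)$ terms), while the lower endpoint is expanded via \eqref{def of rtau g1plus} as $r_{\tau(g_{1,+})}-\rho = \Delta_{n,+}/(2\sqrt{n\Delta Q(\rho)})+\mathcal{O}(\Delta_{n,+}^2/n)$. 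Inserting this expansion into $\log(r_{\tau(g_{1,+})}-\rho)$, $1/(r_{\tau(g_{1,+})}-\rho)$, and $1/(r_{\tau(g_{1,+})}-\rho)^3$ produces the $\sqrt{n}\Delta_{n,+}$, $\sqrt{n}/\Delta_{n,+}$, $\sqrt{n}/\Delta_{n,+}^3$, $\Delta_{n,+}^2$, and $\log\Delta_{n,+}$ contributions assembled in $C_n^{(3)}$. The $n^{-2}$ remainder sum is estimated as $\mathcal{O}((\log n)^{1/4}/n^{1/4})$ exactly as in Lemma \ref{lemma:S1}.

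The main obstacle is precisely the bookkeeping of the boundary contributions at the lower endpoint $j=g_{1,+}+1$: several terms individually diverge as powers and logarithms of $1/\Delta_{n,+}$, and one must retain enough subleading coefficients in the Laplace expansion and in the Taylor expansion of $r_{\tau(g_{1,+})}$ to obtain the full $C_n^{(3)}$ with an error that genuinely remains $\mathcal{O}((\log n)^{1/4}/n^{1/4})$. These divergent pieces are designed to cancel against their counterparts in $C_n^{(1)}$ from Lemma \ref{lemma:S1} together with the boundary contributions from the local sum $S_2$, so any miscomputed coefficient would propagate into the final constant $C_3(u,a)$. Modulo this careful accounting, combining all the displayed contributions yields \eqref{def of S3 asymptotics}.
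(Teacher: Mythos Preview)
Your outline matches the paper's proof closely: exponential negligibility of $h_{n,j}^{(\mathrm{in})}$, Laplace expansion of $h_{n,j}^{(\mathrm{out})}$ producing the correction $\mathcal{M}_1^{(\mathrm{out})}$, expansion of the logarithm, Euler--Maclaurin with the change of variable $t=r_{\tau(j)}$ via \eqref{def of rtau ODE}, and boundary bookkeeping at the lower endpoint through \eqref{def of rtau g1plus}. You also correctly identify the main hazard, namely that the individually divergent lower-endpoint terms must be recorded precisely so that they cancel against $C_n^{(1)}$ and the local sum $S_2$.

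One point needs correction. You carry over the intermediate error $\mathcal{O}\bigl((\log n)^3/n^{1/12}\bigr)$ from the $S_1$ analysis, but that bound in Lemma~\ref{lemma:S1} comes from summing the $\mathcal{O}\bigl((\log n)^{\nu}/j^{3/2}\bigr)$ remainder of Lemma~\ref{lemma:hnj asymptotics} starting at $j=D_n\approx n^{1/6}$. For $S_3$ every index satisfies $j\geq g_{1,+}+1\gtrsim n\tau_{\rho}$, and Lemma~\ref{lemma:hnj asymptotics} explicitly states that in this range the remainder may be replaced by $\mathcal{O}(n^{-2})$; summing over $\mathcal{O}(n)$ terms then gives $\mathcal{O}(n^{-1})$, which is negligible. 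Without invoking this sharpening your argument as written only delivers $\mathcal{O}\bigl((\log n)^3/n^{1/12}\bigr)$, contradicting the final error $\mathcal{O}\bigl((\log n)^{1/4}/n^{1/4}\bigr)$ you assert at the end. The paper reaches the stated bound precisely by using this improved estimate for large $j$; once you insert it, the remaining errors (from Euler--Maclaurin boundary terms and the $n^{-2}$ tail sum) are indeed $\mathcal{O}\bigl((\log n)^{1/4}/n^{1/4}\bigr)$ or smaller, and your sketch goes through.
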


\begin{proof}
Similar manner to Lemma~\ref{lemma:S1}, there exists $c>0$ such that as $n\to+\infty$, we have  
\begin{align}
\begin{split}
\label{def of hnj out in S3}
h_{n,j}^{(\mathrm{out})}(\rho)
&=  
\frac{\sqrt{2\pi}r_{\tau}e^{\mathsf{k}(r_{\tau})}e^{-nV_{\tau}(r_{\tau})}}{\sqrt{n\Delta Q(r_{\tau})}}
(r_{\tau}-\rho)^a
\bigg[
1+\frac{1}{n}
\Bigl(
\mathcal{A}_1(r_{\tau})
+\mathcal{M}_1^{(\mathrm{out})}(r_{\tau})
\Bigr)
\\
&\quad
+\frac{1}{n^2}
\Bigl\{
\frac{a(a-1)(a-2)(a-3)u^4}{24d_2^2(r_{\tau}-\rho)^4}
+\mathcal{A}_2(r_{\tau})
+\mathcal{O}\Bigl(\frac{1}{(r_{\tau}-\rho)^3}\Bigr)
\Bigr\}
\bigg]
+
\mathcal{O}(e^{-c(\log n)^2}),
\end{split}
\end{align}
where the constant in the error term might depend on $a$, but the order of the error term is independent of $a$. Here, 
\[
\mathcal{M}_1^{(\mathrm{out})}(r_{\tau})
=
\frac{a(a-1)(r_{\tau}-\rho)^{-2}}{8\Delta Q(r_{\tau})}
+\frac{a(r_{\tau}-\rho)^{-1}}{8\Delta Q(r_{\tau})}
\Bigl(-\frac{\partial_r\Delta Q(r_{\tau})}{\Delta Q(r_{\tau})}
+\frac{4\alpha+3}{r_{\tau}}
\Bigr). 
\]
Also, there exists $c>0$ independent of $n$ such that we have 
\begin{equation}
\label{def of hnj in in S3}    
e^{u}\frac{h_{n,j}^{(\mathrm{in})}(\rho)}{h_{n,j}}=e^{-nV_{\tau}(r_{\tau})}\cdot\mathcal{O}(e^{-cn}), 
\end{equation}
where we can take the error term to be independent of $u$.
Thus, by Lemma~\ref{lemma:hnj asymptotics}, \eqref{def of hnj out in S3}, and \eqref{def of hnj in in S3}, let us choose $\delta>0$ sufficiently small so that 
\[
e^{u}\frac{h_{n,j}^{(\mathrm{in})}(\rho)}{h_{n,j}}+\frac{h_{n,j}^{(\mathrm{out})}(\rho)}{h_{n,j}}
\]
remains bounded away from the interval $(-\infty,0]$ as $n\to+\infty$ uniformly for $u\in\{z\in\C:|z-x|\leq \delta\}$ and $a$ in compact subsets of $(-1,+\infty)$. 
Therefore, as $n\to+\infty$, we have 
\begin{align*}
&\quad\sum_{j=g_{1,+}+1}^{n-1}\log\Bigl[
e^{u}\frac{h_{n,j}^{(\mathrm{in})}(\rho)}{h_{n,j}}+\frac{h_{n,j}^{(\mathrm{out})}(\rho)}{h_{n,j}}
 \Bigr]   
 \\
&= 
a\sum_{j=g_{1,+}+1}^{n-1}\log(r_{\tau}-\rho)
+
\frac{1}{n}
\sum_{j=g_{1,+}+1}^{n-1}
\Bigl\{ 
\frac{a(a-1)(r_{\tau}-\rho)^{-2}}{8\Delta Q(r_{\tau})}
+\frac{a(r_{\tau}-\rho)^{-1}}{8\Delta Q(r_{\tau})}
\Bigl(-\frac{\partial_r\Delta Q(r_{\tau})}{\Delta Q(r_{\tau})}
+\frac{4\alpha+3}{r_{\tau}}
\Bigr)
\Bigr\}
\\
&\quad
-\frac{1}{n^2}
\sum_{j=g_{1,+}+1}^{n-1}
\frac{a(a-1)(2a-3)}{64\Delta Q(r_{\tau})^2(r_{\tau}-\rho)^4}
+
\sum_{j=g_{1,+}+1}^{n-1}
\mathcal{O}\Bigl(\frac{(r_{\tau}-\rho)^{-3}}{n^2}\Bigr), 
\end{align*} 
uniformly for $u\in\{z\in\C:|z-x|\leq \delta\}$ and $a$ in compact subsets of $(-1,+\infty)$. 
By change of variable $r_{\tau(j)}=t$ and \eqref{def of rtau ODE}, we have 
\begin{equation}
\label{def of summation in 4 S3}
\sum_{j=g_{1,+}+1}^{n-1}
\mathcal{O}\Bigl(\frac{(r_{\tau}-\rho)^{-3}}{n^2}\Bigr)
=
\mathcal{O}\Bigl(\frac{(\log n)^{\frac{3}{8}}}{n^{\frac{7}{8}}}\Bigr). 
\end{equation}
By Lemma~\ref{lemma:Euler-Maclaurin}, we have 
\begin{align*}
\sum_{j=g_{1,+}+1}^{n-1}a\log(r_{\tau}-\rho)
&=  
\int_{g_{1,+}}^{n}a\log(r_{\tau(t)}-\rho)\,dt
-\frac{a}{2}\log(r_1-\rho)-\frac{a}{2}\log(r_{\tau(g_{1,+})}-\rho)
+\mathcal{O}(n^{-\frac{3}{8}}). 
\end{align*} 
Note that by \eqref{def of rtau g1plus}, as $n\to+\infty$, we have 
\begin{align*}
\int_{g_{1,+}}^{n}a\log(r_{\tau(t)}-\rho)\,dt
&=n\,a\int_{\rho}^{r_1}\log (u-\rho)\,2u\Delta Q(u)\,du
\\
&\quad
-a\rho\sqrt{\Delta Q(\rho)}
\Bigl( 
\log\Delta_{n,+}-\log 2-\frac{1}{2}\log\Delta Q(\rho)-\frac{1}{2}\log n-1
\Bigr)\sqrt{n}\Delta_{n,+}
\\
&\quad
+\frac{a}{8}\Bigl(1+\frac{\rho\partial_r\Delta Q(\rho)}{\Delta Q(\rho)}\Bigr)\Delta_{n,+}^2
+
\mathcal{O}\Bigl( 
\frac{\Delta_{n,+}^3}{\sqrt{n}}
\Bigr).
\end{align*}
Thus, the Taylor theorem together with \eqref{def of rtau g1plus} gives rise to
\begin{align}
\begin{split}
\label{def of summation in 1 S3}
\sum_{j=g_{1,+}+1}^{n-1}a\log(r_{\tau}-\rho)
&=  
n\,a\int_{r_{\tau}(g_{1,+})}^{r_1}\log (u-\rho)\,2u\Delta Q(u)\,du
\\
&\quad
+\frac{a}{2}\rho\sqrt{\Delta Q(\rho)}
\Bigl( 
2\log 2+2+\log\Delta Q(\rho)+\log n
-2\log\Delta_{n,+}
\Bigr)\sqrt{n}\Delta_{n,+}
\\
&\quad
+\frac{a}{8}\Bigl(1+\frac{\rho\partial_r\Delta Q(\rho)}{\Delta Q(\rho)}\Bigr)\Delta_{n,+}^2
-\frac{a}{2}\log(r_1-\rho)
-\frac{a}{2}\log(\Delta_{n,+})
+\frac{a}{2}\log2
\\
&\quad
+\frac{a}{4}\log(\Delta Q(\rho))
+\frac{a}{4}\log n
+
\mathcal{O}\Bigl( 
\frac{1}{\Delta_{n,+}}
\Bigr). 
\end{split}
\end{align}
The term of order $n^{-1}$ can be calculated as
\begin{align}
\begin{split}
\label{def of summation in 2 S3}
&\quad
\frac{1}{n}
\sum_{j=g_{1,+}+1}^{n-1}
\Bigl\{ 
\frac{a(a-1)(r_{\tau}-\rho)^{-2}}{8\Delta Q(r_{\tau})}
+\frac{a(r_{\tau}-\rho)^{-1}}{8\Delta Q(r_{\tau})}
\Bigl(-\frac{\partial_r\Delta Q(r_{\tau})}{\Delta Q(r_{\tau})}
+\frac{4\alpha+3}{r_{\tau}}
\Bigr)
\Bigr\}
\\
&=
\frac{a(a-1)}{4}\bigg[\frac{1}{2}\Bigl(1+\frac{\rho\,\partial_r\Delta Q(\rho)}{\Delta Q(\rho)}\Bigr)+\log(2(r_1-\rho)\sqrt{\Delta Q(\rho)})-\frac{\rho}{r_1-\rho}\bigg]
\\
&\quad
+\frac{a}{4}\Bigl(4\alpha+3-\frac{\rho\,\partial_r\Delta Q(\rho)}{\Delta Q(\rho)}\Bigr)\log(2(r_1-\rho)\sqrt{\Delta Q(\rho)})
\\
&\quad
-\frac{a}{4}\int_{\rho}^{r_1}\Bigl(\frac{1}{u-\rho}\frac{u\,\partial_u\Delta Q(u)}{\Delta Q(u)}-\frac{1}{u-\rho}\frac{\rho\,\partial_r\Delta Q(\rho)}{\Delta Q(\rho)}\Bigr)\,du
\\
&\quad
+\frac{a(a-1)}{2}\rho\sqrt{\Delta Q(\rho)}\frac{\sqrt{n}}{\Delta_{n,+}}
+\frac{a}{4}\Bigl(\frac{1}{2}\log n-\log\Delta_{n,+}\Bigr)\Bigl(4\alpha+a+2-\frac{\rho\,\partial_r\Delta Q(\rho)}{\Delta Q(\rho)}\Bigr)+\mathcal{O}(\Delta_{n,+}^{-2}). 
\end{split}
\end{align}
Finally, it is straightforward to see that by \eqref{def of rtau g1plus} and Lemma~\ref{lemma:Euler-Maclaurin}, 
\begin{align}
\label{def of summation in 3 S3}
-\frac{1}{n^2}
\sum_{j=g_{1,+}+1}^{n-1}
\frac{a(a-1)(2a-3)}{64\Delta Q(r_{\tau})^2(r_{\tau}-\rho)^4}
&=
-\frac{a(a-1)(2a-3)}{12}\rho\sqrt{\Delta Q(\rho)}\frac{\sqrt{n}}{\Delta_{n,+}^3}
+\mathcal{O}\Bigl(\frac{(\log n)^{1/4}}{n^{1/4}}\Bigr). 
\end{align}
Combining \eqref{def of summation in 4 S3} with \eqref{def of summation in 1 S3}, \eqref{def of summation in 2 S3}, and \eqref{def of summation in 3 S3}, we obtain \eqref{def of S3 asymptotics}.
\end{proof}

\section{Local asymptotic analysis for the proof of Theorem~\ref{theorem:calEn expansion}}
\label{section:local analysis part}
\subsection{Asymptotic expansion of \texorpdfstring{$S_2$}{S_2}}
\label{subsection:local analysis S2}
We begin with the asymptotic expansion of the summand \eqref{def of log calEn splitted}. 
We recall \eqref{def of parabolic cylinder function} here. 
\begin{lemma}\label{lemma:S2in part1}
Let
\begin{equation}
\label{def of xi}
     \xi\equiv\xi_{j}:=\frac{\sqrt{n}}{\sqrt{\Delta Q(\rho)}}\frac{\tau-\tau_{\rho}}{\rho},
\end{equation}
where $\tau_{\rho}$ is given by \eqref{def of tau rho}. 
There exists $\delta>0$ such that as $n\to+\infty$, we have 
\begin{align}
\begin{split}
\label{def of S2 asymptotics part 1}
S_2
&=
\sum_{j=g_{1,-}}^{g_{1,+}}\log\Bigl(\frac{1}{(4n)^{\frac{a}{2}}}\Bigr)
+
\frac{a}{2}\sum_{j=g_{1,-}}^{g_{1,+}}\log\Bigl(\frac{1}{\Delta Q(\rho)}\Bigr)
+
\sum_{j=g_{1,-}}^{g_{1,+}}\log \bigl[\mathcal{H}_{a,u}(\xi)\bigr]
\\
&\quad
+\frac{1}{\sqrt{n}} \sum_{j=g_{1,-}}^{g_{1,+}}\frac{\widetilde{g}_{a,u}(\xi)}{12\rho\sqrt{\Delta Q(\rho)}g_{a,u}(\xi)}
+\mathcal{O}\Bigl(\frac{(\log n)^{\frac{7}{8}}}{n^{\frac{1}{8}}}\Bigr),
\end{split}
\end{align}    
uniformly for $u\in\{z\in\C:|z-x|\leq \delta\}$ and $a$ in compact subsets of $(-1,+\infty)$.
Here, $\mathcal{H}_{a,u}(\xi),g_{a,u}(\xi)$ are given by \eqref{def of cal H au}, and 
\begin{align}
\begin{split}
\label{def of tilde gau}
\widetilde{g}_{a,u}(\xi)&:=
-a\xi\Bigl(1-\frac{\rho\,\partial_r\Delta Q(\rho)}{\Delta Q(\rho)}\Bigr)
(e^{u}D_{-a-1}(\xi)+D_{-a-1}(-\xi))
\\
&\quad
+(e^u D_{-a}(\xi)-D_{-a}(-\xi))
\xi^2\Bigl(2+\frac{\rho\,\partial_r\Delta Q(\rho)}{\Delta Q(\rho)}\Bigr)
\\
&\quad
-(e^u D_{-a}(\xi)-D_{-a}(-\xi))
\bigg[
6(\rho \mathsf{k}'(\rho)+1)+(2+a)\Bigl(1-\frac{\rho\,\partial_r\Delta Q(\rho)}{\Delta Q(\rho)}\Bigr)
\bigg].
\end{split}
\end{align}
\end{lemma}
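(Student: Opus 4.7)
The plan is to perform a local Laplace-type analysis of $h_{n,j}^{(\mathrm{in})}(\rho)$ and $h_{n,j}^{(\mathrm{out})}(\rho)$ expanded around the singular radius $v=\rho$ rather than around the moving critical point $r_{\tau}$. This re-centering is essential in the window $g_{1,-}\le j\le g_{1,+}$, because there $|r_{\tau}-\rho|=O(\log n/\sqrt n)$ is comparable to the width of the Gaussian peak, so the standard expansion used in Lemmas~\ref{lemma:S1} and~\ref{lemma:S3} cannot separate the saddle from the jump/root singularity. The natural replacement is a family of parabolic cylinder integrals, which explains the appearance of $D_{-a-1}$ in the formula for $\mathcal{H}_{a,u}$.

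Concretely, I would introduce the substitution
\begin{equation*}
v=\rho+\frac{t}{2\sqrt{n\Delta Q(\rho)}},
\end{equation*}
which converts $|v-\rho|^{a}$ into $|t|^{a}/(2\sqrt{n\Delta Q(\rho)})^{a}$ and immediately produces the prefactors $\log[1/(4n)^{a/2}]+\tfrac{a}{2}\log[1/\Delta Q(\rho)]$ in \eqref{def of S2 asymptotics part 1}. Using $V_{\tau}'(\rho)=-2(\tau-\tau_{\rho})/\rho$ together with \eqref{def of V tau relationship 1}, Taylor expansion of the phase about $\rho$ gives
\begin{equation*}
nV_{\tau}(v)=nV_{\tau}(\rho)-\xi t+\tfrac{1}{2}t^{2}+\frac{1}{\sqrt n}\,\mathcal{R}_{1}(t,\xi)+\frac{1}{n}\,\mathcal{R}_{2}(t,\xi)+\cdots,
\end{equation*}
where $\mathcal{R}_{1}$ collects the cubic Taylor term together with the $O(\xi/\sqrt n)$ correction to $V_{\tau}''(\rho)$, and analogous expansions apply to the prefactor $2ve^{\mathsf{k}(v)}$. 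Extending the $t$-integration to $(-\infty,0)$ and $(0,+\infty)$ with exponentially small error (controlled by the cutoff $M$ in \eqref{def of critical indices}), the leading integral evaluates via the representation $D_{-a-1}(z)=e^{-z^{2}/4}\Gamma(a+1)^{-1}\int_{0}^{\infty}e^{-zs-s^{2}/2}s^{a}\,ds$ to $\Gamma(a+1)e^{\xi^{2}/4}D_{-a-1}(\mp\xi)$ for the inner/outer parts. Dividing by $h_{n,j}$ from Lemma~\ref{lemma:hnj asymptotics}, using $nV_{\tau}(\rho)-nV_{\tau}(r_{\tau})=\xi^{2}/2+O(|\xi|^{3}/\sqrt n)$, and combining the two half-line contributions weighted by $e^{u}$ and $1$ gives
\begin{equation*}
e^{u}\frac{h_{n,j}^{(\mathrm{in})}(\rho)}{h_{n,j}}+\frac{h_{n,j}^{(\mathrm{out})}(\rho)}{h_{n,j}}=\frac{\mathcal{H}_{a,u}(\xi)}{(2\sqrt{n\Delta Q(\rho)})^{a}}\Bigl(1+\frac{\widetilde g_{a,u}(\xi)}{12\rho\sqrt{\Delta Q(\rho)}\,g_{a,u}(\xi)\,\sqrt n}+O\Bigl(\tfrac{(\log n)^{3}}{n}\Bigr)\Bigr),
\end{equation*}
which yields \eqref{def of S2 asymptotics part 1} after taking the logarithm and summing over $j$.

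The identification of the subleading coefficient as $\widetilde g_{a,u}/(12\rho\sqrt{\Delta Q(\rho)}\,g_{a,u})$ is where the main bookkeeping effort lies. Four sources contribute at order $n^{-1/2}$: (i) the cubic phase term $V_{\tau}^{(3)}(\rho)t^{3}$, which brings in moments $\int t^{3}e^{\xi t-t^{2}/2}|t|^{a}\,dt$ reducible via the recurrence $D_{-\nu}'(\xi)+\tfrac{\xi}{2}D_{-\nu}(\xi)=-\nu D_{-\nu-1}(\xi)$ to the combinations of $D_{-a}$ visible in \eqref{def of tilde gau}; (ii) the linear Taylor correction to $2ve^{\mathsf{k}(v)}$, which produces the $\rho\mathsf{k}'(\rho)+1$ contribution; (iii) the $\xi$-dependent correction to $V_{\tau}''(\rho)$ from \eqref{def of V tau relationship 1}, which creates the $-a\xi(1-\rho\partial_{r}\Delta Q(\rho)/\Delta Q(\rho))$ term; and (iv) the $\mathcal{A}(r_{\tau})/n$ correction in Lemma~\ref{lemma:hnj asymptotics} evaluated at $r_{\tau}\approx\rho$. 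Each coefficient matches $\partial_{r}\Delta Q(\rho)/\Delta Q(\rho)$ through the explicit formula for $V_{\tau}^{(3)}$. The main obstacle will be maintaining uniformity for $|\xi|\lesssim\log n$: because $D_{-a-1}(\xi)\sim\xi^{-a-1}e^{-\xi^{2}/4}$ as $\xi\to+\infty$, the denominator $g_{a,u}(\xi)$ is dominated by $D_{-a-1}(-\xi)$ at large positive $\xi$ (and symmetrically on the other side), so one must check that $g_{a,u}$ stays bounded away from zero uniformly for $u$ in the complex neighborhood $\{|u-x|\le\delta\}$ and verify that the error $(\log n)^{7/8}n^{-1/8}$, arising from balancing the cutoff at $|t|\sim M\sqrt{\log n}$ against the Taylor remainders of the phase and prefactor, is absorbed without spoiling the parabolic cylinder decay.
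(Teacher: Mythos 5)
The core idea of your proposal — replace the Gaussian moments of the Laplace method by parabolic cylinder integrals because the singularity at $\rho$ sits within $O(n^{-1/2})$ of the saddle $r_{\tau}$ — is the same mechanism the paper uses, so you have correctly identified why $D_{-a-1}$ appears. However, the paper does \emph{not} re-center at $\rho$: it keeps the Laplace expansion centered at $r_{\tau}$ (exactly as in Lemmas~\ref{lemma:S1}, \ref{lemma:S3}), makes the change of variables $v=\sqrt{nd_2}(\rho-r)$, and so obtains a \emph{shifted} Gaussian $e^{-\frac{1}{2}(v-\chi(\tau))^{2}}$ against $v^{a}$, while you propose a \emph{tilted} Gaussian $e^{\xi t-t^{2}/2}$ against $|t|^{a}$ centered at $\rho$. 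These are the same integral to leading order (under $t\mapsto -v$, $\xi=-\chi(\tau)$), but they differ at order $n^{-1/2}$, and that is exactly where your bookkeeping has gaps. Your assertion that re-centering at $\rho$ is ``essential'' is therefore not accurate; it is simply an alternative.

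Two concrete gaps. First, your source (iv) is incorrect: $\mathcal{A}(r_{\tau})/n$ in Lemma~\ref{lemma:hnj asymptotics} is genuinely $O(n^{-1})$ and does not contribute to the coefficient of $n^{-1/2}$, so it cannot be one of the inputs to $\widetilde g_{a,u}$. Second, and more seriously, by expanding the numerator around $\rho$ while $h_{n,j}$ (from Lemma~\ref{lemma:hnj asymptotics}) is expanded around $r_{\tau}$, you create a centering mismatch in the ratio: the prefactor $\rho e^{\mathsf{k}(\rho)}/\sqrt{\Delta Q(\rho)}$ versus $r_{\tau}e^{\mathsf{k}(r_{\tau})}/\sqrt{\Delta Q(r_{\tau})}$ differs by $1+O(\xi/\sqrt n)$, and $e^{-n(V_{\tau}(\rho)-V_{\tau}(r_{\tau}))}e^{\xi^{2}/2}=1+O(\xi^{3}/\sqrt n)$. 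Both are $O(n^{-1/2})$ corrections that are absent from your list (i)--(iv) and must be tracked, and since $\widetilde g_{a,u}$ contains no $\xi^{3}g_{a,u}(\xi)$ term you would also have to verify that the $\xi^{3}$ contribution from the exponent mismatch cancels against the one produced by the cubic-phase moment reduction. None of this is addressed. The paper's centering at $r_{\tau}$ avoids both issues because then numerator and denominator share the same prefactor and exponential and the only $n^{-1/2}$ source is the single coefficient $\mathsf{c}_{1}$ (see \eqref{def of sfc1}). Finally, a minor but real slip: with your tilt $e^{\xi t}$ the inner half-line ($t<0$) gives $D_{-a-1}(\xi)$ and the outer gives $D_{-a-1}(-\xi)$, i.e.\ $D_{-a-1}(\pm\xi)$, not $D_{-a-1}(\mp\xi)$ as you wrote; with your convention you would be led to $\mathcal{H}_{a,u}(-\xi)$ rather than $\mathcal{H}_{a,u}(\xi)$. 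In short, the strategy is viable but, as written, the $n^{-1/2}$ budget is incomplete and the matching to $\widetilde g_{a,u}$ has not been established.
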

\begin{proof}
We begin with analyzing $S_2^{(\mathrm{in})}$. 
There exists $c>0$ such that 
\begin{align}
\begin{split}
\label{def of hin rho 1}
h_{n,j}^{(\mathrm{in})}(\rho)
&=
\int_{r_{\tau}-\delta_n'}^{\rho}
2r e^{\mathsf{k}(r)}e^{-nV_{\tau}(r)}|\rho-r|^{\alpha}\,dr
+\int_{0}^{r_{\tau}-\delta_n'}
2r e^{\mathsf{k}(r)}e^{-nV_{\tau}(r)}|\rho-r|^{\alpha}\,dr
\\
%&=\int_{r_{\tau}-\delta_n'}^{\rho}
%2r e^{\mathsf{k}(r)}e^{-nV_{\tau}(r)}(\rho-r)^{\alpha}\,dr
%+e^{-nV_{\tau}(r_{\tau})}\cdot\mathcal{O}\bigl(e^{-cM^2}\bigr)
%\\
&=
\frac{2r_{\tau}e^{\mathsf{k}(r_{\tau})}e^{-nV_{\tau}(r_{\tau})}}{(nd_2)^{\frac{a+1}{2}}}
\int_{0}^{\sqrt{d_2n}(\rho-r_{\tau}+\delta_n')}
v^{a}
e^{-\frac{1}{2}(v-\sqrt{nd_2}(\rho-r_{\tau}))^2}
\\
&\qquad\times
\Bigl( 
1+\frac{c_1(u_v)}{\sqrt{n}}+\frac{c_2(u_v)}{n}+\frac{c_3(u_v)}{n^{3/2}}+\mathcal{O}(\frac{c_4(u_v)}{n^2})
\Bigr)\,dv
+e^{-nV_{\tau}(r_{\tau})}\cdot\mathcal{O}\bigl(e^{-cM^2}\bigr), 
\end{split}
\end{align}
where $u_v=\sqrt{nd_2}(\rho-r_{\tau})-v=\chi(\tau)-v$ with $\chi(\tau):=\sqrt{nd_2}(\rho-r_{\tau})$.
Since $\sqrt{d_2n}(\rho-r_{\tau}+\delta_n')>\sqrt{nd_2}(\rho-r_{\tau})$ for $g_{1,-}\leq j\leq \lfloor \tau_{\rho_1}\rfloor-1$, one can safely extend the integral region to $[0,+\infty)$ with an error $\mathcal{O}(e^{-cM^2})$ for some $c>0$. 
Note that for $a> -1$ and $k\in\mathbb{Z}_{\geq 0}$, by \eqref{def of parabolic cylinder function}, we have 
\begin{equation}
\int_{0}^{+\infty}
v^{a+k}e^{-\frac{1}{2}(v-\chi(\tau))^2}\,dv
=
\vartheta_{k,a}
e^{-\frac{\chi(\tau)^2}{4}}D_{-a-k-1}(-\chi(\tau)). 
\end{equation}
Here, $\vartheta_{k,a}:=\Gamma(a+k+1)$.
Therefore, we have 
\begin{align}
\begin{split}
\label{def of hin rho 2}
&\quad 
\int_{0}^{\sqrt{d_2n}(\rho-r_{\tau}+\delta_n')}
v^{a}
e^{-\frac{1}{2}(v-\sqrt{nd_2}(\rho-r_{\tau}))^2}
\Bigl( 
1+\frac{c_1(u_v)}{\sqrt{n}}+\frac{c_2(u_v)}{n}+\frac{c_3(u_v)}{n^{3/2}}+\mathcal{O}(\frac{c_4(u_v)}{n^2})
\Bigr)\,dv
\\
&=
\mathsf{c}_0+\frac{\mathsf{c}_1}{\sqrt{n}}+\frac{\mathsf{c}_2}{n}+\frac{\mathsf{c}_3}{n^{\frac{3}{2}}}+\mathcal{O}\Bigl(\frac{\mathsf{c}_4}{n^2}\Bigr),
\end{split}
\end{align}
where $c_1,c_2,c_3$ are given by \eqref{def of c_1}, \eqref{def of c_2}, and \eqref{def of c_3}, respectively, and 
\begin{equation}
\label{def of sfc0}
\mathsf{c}_0
:= 
\vartheta_{0,a}
e^{-\frac{\chi(\tau)^2}{4}}D_{-a-1}(-\chi(\tau))
=
\vartheta_{0,a}
e^{-\frac{\phi(\tau)^2}{4}}D_{-a-1}(\phi(\tau)), 
\end{equation}
\begin{align}
\begin{split}
\label{def of sfc1}
\mathsf{c}_1
&:=
\int_{0}^{+\infty}
v^{a}
e^{-\frac{1}{2}(v-\sqrt{nd_2}(\rho-r_{\tau}))^2}c_1(u_v)\,dv 
\\
&\,\,\,=
\frac{d_3}{6d_2^{3/2}}
\sum_{\ell=0}^{3}\binom{3}{\ell}\phi(\tau)^{3-\ell}\vartheta_{\ell,a}e^{-\frac{\phi(\tau)^2}{4}}D_{-a-\ell-1}(\phi(\tau))
\\
&\quad
-
\frac{1}{d_2^{1/2}}\Bigl(\mathsf{k}'(r_{\tau})+\frac{1}{r_{\tau}}\Bigr)
\sum_{\ell=0}^{1}\binom{1}{\ell}\phi(\tau)^{1-\ell}\vartheta_{\ell,a}e^{-\frac{\phi(\tau)^2}{4}}D_{-a-\ell-1}(\phi(\tau)), 
\end{split}
\end{align}
\begin{align}
\begin{split}
\label{def of sfc2}
\mathsf{c}_2
&:=\int_{0}^{+\infty}
v^{a}
e^{-\frac{1}{2}(v-\sqrt{nd_2}(\rho-r_{\tau}))^2}c_2(u_v)\,dv  
\\
&\,\,\,=
\frac{d_3^2}{72d_2^3}\sum_{\ell=0}^{6}\binom{6}{\ell}\phi(\tau)^{6-\ell}\vartheta_{\ell,a}e^{-\frac{\chi(\tau)^2}{4}}D_{-a-\ell-1}(\phi(\tau))
\\
&\quad
-\frac{1}{24d_2^2}
\Bigl(d_4+4d_3\mathsf{k}'(r_{\tau})+\frac{4d_3}{r_{\tau}}\Bigr)
\sum_{\ell=0}^{4}\binom{4}{\ell}\phi(\tau)^{4-\ell}
\vartheta_{\ell,a}e^{-\frac{\phi(\tau)^2}{4}}D_{-a-\ell-1}(\phi(\tau))
\\
&\quad
+\frac{1}{d_2}\Bigl(\frac{\mathsf{k}''(r_{\tau})}{2}+\frac{\mathsf{k}'(r_{\tau})^2}{2}+\frac{\mathsf{k}'(r_{\tau})}{r_{\tau}}\Bigr)
\sum_{\ell=0}^{2}\binom{2}{\ell}\phi(\tau)^{2-\ell}
\vartheta_{\ell,a}e^{-\frac{\phi(\tau)^2}{4}}D_{-a-\ell-1}(\phi(\tau)), 
\end{split}
\end{align}
\begin{align}
\begin{split}
\label{def of sfc3}
\mathsf{c}_3
&:=\int_{0}^{+\infty}
v^{a}
e^{-\frac{1}{2}(v-\sqrt{nd_2}(\rho-r_{\tau}))^2}c_3(u_v)\,dv  
\\
&\,\,\,=
\frac{d_3^3}{1296d_2^{9/2}}
\sum_{\ell=0}^{9}\binom{9}{\ell}\phi(\tau)^{9-\ell}
\vartheta_{\ell,a}e^{-\frac{\phi(\tau)^2}{4}}D_{-a-\ell-1}(\phi(\tau))
\\
&\quad
-\Bigl(d_3d_4+2d_3^2\mathsf{k}'(r_{\tau})+\frac{2d_3^2}{r_{\tau}} \Bigr)\frac{1}{144d_2^{7/2}}
\sum_{\ell=0}^{7}\binom{7}{\ell}\phi(\tau)^{7-\ell}
\vartheta_{\ell,a}e^{-\frac{\phi(\tau)^2}{4}}D_{-a-\ell-1}(\phi(\tau))
\\
&\quad
+
\Bigl(
d_5+\frac{5}{r_{\tau}}d_4+5d_4\mathsf{k}'(r_{\tau})+\frac{20d_3}{r_{\tau}}\mathsf{k}'(r_{\tau})+10d_3\mathsf{k}'(r_{\tau})^2+10d_3\mathsf{k}''(r_{\tau})
\Bigr)
\sum_{\ell=0}^{5}\binom{5}{\ell}
\frac{\phi(\tau)^{5-\ell}\vartheta_{\ell,a}e^{-\frac{\phi(\tau)^2}{4}}}{120d_2^{5/2}}D_{-a-\ell-1}(\phi(\tau))
\\
&\quad
-
\Bigl(\frac{3\mathsf{k}'(r_{\tau})^2}{r_{\tau}}
+\frac{3\mathsf{k}''(r_{\tau})}{r_{\tau}}+\mathsf{k}'(r_{\tau})^3+3\mathsf{k}'(r_{\tau})\mathsf{k}''(r_{\tau})+\mathsf{k}^{(3)}(r_{\tau})\Bigr)
\sum_{\ell=0}^{3}\binom{3}{\ell}
\frac{\phi(\tau)^{3-\ell}
\vartheta_{\ell,a}e^{-\frac{\phi(\tau)^2}{4}}}{6d_2^{3/2}}D_{-a-\ell-1}(\phi(\tau)).
\end{split}
\end{align}
We next consider $h_{n,j}^{(\mathrm{out})}(\rho)$. 
In this integral region and for $g_{1,-}\leq j\leq \lfloor n\tau_{\rho} \rfloor -1$, there is no critical point, but we still proceed with the expansion in terms of $r_{\tau}$. Indeed, 
\[
h_{n,j}^{(\mathrm{out})}(\rho)=\int_{\rho}^{+\infty} 2ve^{\mathsf{k}(v)}e^{-nV_{\tau}(v)}|v-\rho|^{a}\,dv =
\int_{0}^{+\infty} 2ve^{\mathsf{k}(v)}e^{-nV_{\tau}(v)}|v-\rho|^{a}\,dv
-
\int_0^{\rho}2ve^{\mathsf{k}(v)}e^{-nV_{\tau}(v)}|v-\rho|^{a}\,dv.
\]
We already know the asymptotics of the second terms in the last line. 
For the first term, we split the integral $[0,+\infty)$ into $[0,+\infty)=\{r\in[0,+\infty): |r-r_{\tau}|<\delta_n'\}\cup\{r\in[0,+\infty): |r-r_{\tau}|\geq\delta_n'\}$. 
In the later region, the corresponding integral can be neglected with an error $e^{-nV_{\tau}(r_{\tau})}\cdot \mathcal{O}(e^{-cM^2})$ for some $c>0$. 
The Gaussian integral gives 
\begin{align*}
&\quad 
\frac{2r_{\tau}e^{\mathsf{k}(r_{\tau})}e^{-nV_{\tau}(r_{\tau})}}{(nd_2)^{\frac{1}{2}}}
\int_{-\sqrt{d_2n}\delta_n'}^{\sqrt{d_2n}\delta_n'}
e^{-\frac{1}{2}u^2}
\Bigl|r_{\tau}-\rho+\frac{u}{\sqrt{nd_2}}\Bigr|^{a}
\Bigl( 
1+\frac{c_1(u)}{\sqrt{n}}+\frac{c_2(u)}{n}+\frac{c_3(u)}{n^{3/2}}+\mathcal{O}(\frac{c_4(u)}{n^2})
\Bigr)\,du
\\
&=
\frac{2r_{\tau}e^{\mathsf{k}(r_{\tau})}e^{-nV_{\tau}(r_{\tau})}}{(nd_2)^{\frac{a+1}{2}}}
\int_{-\sqrt{d_2n}\delta_n'+\sqrt{d_2n}(r_{\tau}-\rho)}^{\sqrt{nd_2}\delta_n'+\sqrt{nd_2}(r_{\tau}-\rho)}
e^{-\frac{1}{2}(v-\sqrt{nd_2}(r_{\tau}-\rho))^2}
|v|^{a}
\Bigl( 
1+\frac{c_1(u_v)}{\sqrt{n}}+\frac{c_2(u_v)}{n}+\frac{c_3(u_v)}{n^{3/2}}+\mathcal{O}(\frac{c_4(u_v)}{n^2})
\Bigr)\,dv,
\end{align*}
where $u_v=v-\sqrt{nd_2}(r_{\tau}-\rho)$. 
One can safely extend the integral region to $(-\infty,+\infty)$ with an exponential error $\mathcal{O}(e^{-cM^2})$ for some $c>0$. 
Each integral is given by the following: for $\phi(\tau):=\sqrt{nd_2}(r_{\tau}-\rho)$, 
\begin{align*}
&\quad\int_{-\infty}^{+\infty}
e^{-\frac{1}{2}(v-\sqrt{nd_2}(r_{\tau}-\rho))^2}
|v|^{a}\,dv
\\
&=
\int_{0}^{+\infty}
e^{-\frac{1}{2}(v-\phi(\tau))^2}
v^{a}\,du
+\int_{0}^{+\infty}
e^{-\frac{1}{2}(v+\phi(\tau))^2}
v^{a}\,du
=\vartheta_{0,a}e^{-\frac{1}{4}\phi(\tau)^2}\bigl( 
D_{-a-1}(-\phi(\tau))+D_{-a-1}(\phi(\tau))
\bigr). 
\end{align*}
Since by \eqref{def of parabolic cylinder function}, 
\begin{align*}
\int_{0}^{+\infty}t^{a}e^{-\frac{1}{2}(t+z)^2}\,dt
&=\Gamma(a+1)e^{-\frac{1}{4}z^2}D_{-a-1}(z), 
\\
\int_{-\infty}^{+\infty}e^{-\frac{1}{2}(u-\phi(\tau))^2}|v|^{a}v^{\ell}\,dv
&=
\Gamma(a+\ell+1)e^{-\frac{1}{4}\phi(\tau)^2}
\Bigl( 
(-1)^{\ell}D_{-a-\ell-1}(\phi(\tau))
+
D_{-a-\ell-1}(-\phi(\tau))
\Bigr), 
\end{align*}
we have 
\begin{equation}
\label{def of sfcd0}
\mathsf{c}_0'=   
\int_{-\infty}^{+\infty}e^{-\frac{1}{2}(u-\phi(\tau))^2}|v|^{a}\,dv
=
\vartheta_{0,a}e^{-\frac{1}{4}\phi(\tau)^2}
\Bigl( 
D_{-a-1}(\phi(\tau))
+
D_{-a-1}(-\phi(\tau))
\Bigr), 
\end{equation}
\begin{align}
\begin{split}
\label{def of sfcd1}
\mathsf{c}_1'&=\int_{-\infty}^{+\infty}
e^{-\frac{1}{2}(v-\sqrt{nd_2}(r_{\tau}-\rho))^2}
|v|^{a}
c_1(u_v)\,dv
\\
&=
-\frac{d_3}{6d_2^{3/2}}\sum_{\ell=0}^{3}\binom{3}{\ell}(-\phi(\tau))^{3-\ell}
\vartheta_{\ell,a}e^{-\frac{1}{4}\phi(\tau)^2}
\Bigl( 
(-1)^{\ell}D_{-a-\ell-1}(\phi(\tau))
+
D_{-a-\ell-1}(-\phi(\tau))
\Bigr)
\\ &\quad
+\frac{1}{d_2^{1/2}}\Bigl(\mathsf{k}'(r_{\tau})+\frac{1}{r_{\tau}}\Bigr)
\sum_{\ell=0}^{1}\binom{1}{\ell}(-\phi(\tau))^{1-\ell}
\vartheta_{\ell,a}e^{-\frac{1}{4}\phi(\tau)^2}
\Bigl( 
(-1)^{\ell}D_{-a-\ell-1}(\phi(\tau))
+
D_{-a-\ell-1}(-\phi(\tau))
\Bigr),
\end{split}
\end{align}
\begin{align}
\begin{split}
\label{def of sfcd2}
\mathsf{c}_2'&= \int_{-\infty}^{+\infty}
e^{-\frac{1}{2}(v-\sqrt{nd_2}(r_{\tau}-\rho))^2}
|v|^{a}
c_2(u_v)\,du
\\
&=
\frac{d_3^2}{72d_2^3}
\sum_{\ell=0}^{6}\binom{6}{\ell}(-\phi(\tau))^{6-\ell}
\vartheta_{\ell,a}e^{-\frac{1}{4}\phi(\tau)^2}
\Bigl( 
(-1)^{\ell}D_{-a-\ell-1}(\phi(\tau))
+
D_{-a-\ell-1}(-\phi(\tau))
\Bigr)
\\
&\quad
-
\frac{1}{24d_2^2}\Bigl(d_4+4d_3\mathsf{k}'(r_{\tau})+\frac{4d_3}{r_{\tau}}\Bigr)
\sum_{\ell=0}^{4}\binom{4}{\ell}(-\phi(\tau))^{4-\ell}
\vartheta_{\ell,a}e^{-\frac{1}{4}\phi(\tau)^2}
\Bigl( 
(-1)^{\ell}D_{-a-\ell-1}(\phi(\tau))
+
D_{-a-\ell-1}(-\phi(\tau))
\Bigr)
\\
&\quad
+
\frac{1}{d_2}\Bigl(\frac{\mathsf{k}''(r_{\tau})}{2}+\frac{\mathsf{k}'(r_{\tau})^2}{2}+\frac{\mathsf{k}'(r_{\tau})}{r_{\tau}}\Bigr)
\sum_{\ell=0}^{2}\binom{2}{\ell}(-\phi(\tau))^{2-\ell}
\vartheta_{\ell,a}e^{-\frac{1}{4}\phi(\tau)^2}
\Bigl( 
(-1)^{\ell}D_{-a-\ell-1}(\phi(\tau))
+
D_{-a-\ell-1}(-\phi(\tau))
\Bigr). 
\end{split}
\end{align}
Note that $\mathsf{c}'_3$ is given as well as \eqref{def of sfc3}, but we omit the explicit expression. 
Therefore, we obtain 
\begin{align}
\begin{split}
\label{def of hrho out 2}
h_{n,j}^{(\mathrm{out})}(\rho)
&=
\int_{0}^{+\infty} 2ve^{\mathsf{k}(v)}e^{-nV_{\tau}(v)}|v-\rho|^{a}\,dv
-
\int_0^{\rho}2ve^{\mathsf{k}(v)}e^{-nV_{\tau}(v)}|v-\rho|^{a}\,dv
\\
&=
\frac{2r_{\tau}e^{\mathsf{k}(r_{\tau})}e^{-nV_{\tau}(r_{\tau})}}{(nd_2)^{\frac{a+1}{2}}}
\Bigl( 
\mathsf{c}_0'-\mathsf{c}_0+\frac{\mathsf{c}_1'-\mathsf{c}_1}{\sqrt{n}}+\frac{\mathsf{c}_2'-\mathsf{c}_2}{n}+\mathcal{O}\Bigl(\frac{\mathsf{c}_3'-\mathsf{c}_3}{n^{3/2}}\Bigr)
\Bigr),
\end{split}
\end{align}
where by \eqref{def of sfc0}, \eqref{def of sfc1}, \eqref{def of sfc2}, \eqref{def of sfc3}, \eqref{def of sfcd0}, \eqref{def of sfcd1}, and \eqref{def of sfcd2}, 
\begin{align*}
\mathsf{c}_0'-\mathsf{c}_0
%&=
%\vartheta_{0,a}e^{-\frac{1}{4}\phi(\tau)^2}
%\Bigl( 
%D_{-a-1}(\phi(\tau))
%+
%D_{-a-1}(-\phi(\tau))
%\Bigr)-\vartheta_{0,a}
%e^{-\frac{\phi(\tau)^2}{4}}D_{-a-1}(\phi(\tau))
&=
\vartheta_{0,a}e^{-\frac{1}{4}\phi(\tau)^2}
D_{-a-1}(-\phi(\tau)), 
\\
\mathsf{c}_1'-\mathsf{c}_1
&=
-\frac{d_3}{6d_2^{3/2}}\sum_{\ell=0}^{3}\binom{3}{\ell}(-\phi(\tau))^{3-\ell}
\vartheta_{\ell,a}e^{-\frac{1}{4}\phi(\tau)^2}
D_{-a-\ell-1}(-\phi(\tau))
\\ &\quad
+\frac{1}{d_2^{1/2}}\Bigl(\mathsf{k}'(r_{\tau})+\frac{1}{r_{\tau}}\Bigr)
\sum_{\ell=0}^{1}\binom{1}{\ell}(-\phi(\tau))^{1-\ell}
\vartheta_{\ell,a}e^{-\frac{1}{4}\phi(\tau)^2}
D_{-a-\ell-1}(-\phi(\tau)), 
\\
\mathsf{c}_2'-\mathsf{c}_2
&=    
\frac{d_3^2}{72d_2^3}
\sum_{\ell=0}^{6}\binom{6}{\ell}(-\phi(\tau))^{6-\ell}
\vartheta_{\ell,a}e^{-\frac{1}{4}\phi(\tau)^2}
D_{-a-\ell-1}(-\phi(\tau))
\\
&\quad
-
\frac{1}{24d_2^2}\Bigl(d_4+4d_3\mathsf{k}'(r_{\tau})+\frac{4d_3}{r_{\tau}}\Bigr)
\sum_{\ell=0}^{4}\binom{4}{\ell}(-\phi(\tau))^{4-\ell}
\vartheta_{\ell,a}e^{-\frac{1}{4}\phi(\tau)^2}
D_{-a-\ell-1}(-\phi(\tau))
\\
&\quad
+
\frac{1}{d_2}\Bigl(\frac{\mathsf{k}''(r_{\tau})}{2}+\frac{\mathsf{k}'(r_{\tau})^2}{2}+\frac{\mathsf{k}'(r_{\tau})}{r_{\tau}}\Bigr)
\sum_{\ell=0}^{2}\binom{2}{\ell}(-\phi(\tau))^{2-\ell}
\vartheta_{\ell,a}e^{-\frac{1}{4}\phi(\tau)^2}
D_{-a-\ell-1}(-\phi(\tau)). 
\end{align*}
By \eqref{def of hin rho 1}, \eqref{def of hin rho 2}, and \eqref{def of hrho out 2}, let us choose $\delta>0$ sufficiently small so that $e^{u}\frac{h_{n,j}^{(\mathrm{in})}(\rho)}{h_{n,j}}+\frac{h_{n,j}^{(\mathrm{out})}(\rho)}{h_{n,j}}$ remains bounded away from the interval $(-\infty,0]$ as $n\to+\infty$ uniformly for $u\in\{z\in\C:|z-x|\leq \delta\}$ and $a$ in compact subsets of $(-1,+\infty)$. 
Then, we have 
\begin{align}
\begin{split}
\label{def of S2 first part}    
\sum_{j=g_{1,-}}^{\lfloor n\tau_{\rho}\rfloor -1}
\log\Bigl(e^{u}\frac{h_{n,j}^{(\mathrm{in})}(\rho)}{h_{n,j}}+\frac{h_{n,j}^{(\mathrm{out})}(\rho)}{h_{n,j}}\Bigr)
&=
\sum_{j=g_{1,-}}^{\lfloor n\tau_{\rho}\rfloor -1}\log\Bigl(\frac{1}{(4n)^{\frac{a}{2}}}\Bigr)
+
\frac{a}{2}\sum_{j=g_{1,-}}^{\lfloor n\tau_{\rho}\rfloor -1}\log\Bigl(\frac{1}{\Delta Q(\rho)}\Bigr)
+
\sum_{j=g_{1,-}}^{\lfloor n\tau_{\rho}\rfloor -1}\log \bigl[\mathcal{H}_{a,u}(\xi)\bigr]
\\
&\quad
+\frac{1}{\sqrt{n}} \sum_{j=g_{1,-}}^{\lfloor n\tau_{\rho}\rfloor -1}\frac{\widetilde{g}_{a,u}(\xi)}{12\rho\sqrt{\Delta Q(\rho)}g_{a,u}(\xi)}
+\mathcal{O}\Bigl(\frac{(\log n)^{\frac{7}{8}}}{n^{\frac{1}{8}}}\Bigr),
\end{split}
\end{align}
uniformly for $u\in\{z\in\C:|z-x|\leq \delta\}$ and $a$ in compact subsets of $(-1,+\infty)$.
Here, $g_{a,u}(x),\widetilde{g}_{a,u}(x)$ are given by \eqref{def of cal H au} and \eqref{def of tilde gau}, respectively. 
\begin{comment}
\begin{align}
g_0(\xi)&:=
e^{u}D_{-a-1}(\xi)+D_{-a-1}(-\xi),
\\
\label{def of tilde gau}
\widetilde{g}_{a,u}(\xi)&:=
-a\xi\Bigl(1-\frac{\rho\,\partial_r\Delta Q(\rho)}{\Delta Q(\rho)}\Bigr)g_0(\xi)
+(e^u D_{-a}(\xi)-D_{-a}(-\xi))
\xi^2\Bigl(2+\frac{\rho\,\partial_r\Delta Q(\rho)}{\Delta Q(\rho)}\Bigr)
\\
&\quad
-(e^u D_{-a}(\xi)-D_{-a}(-\xi))
\bigg[
6(\rho \mathsf{k}'(\rho)+1)+(2+a)\Bigl(1-\frac{\rho\,\partial_r\Delta Q(\rho)}{\Delta Q(\rho)}\Bigr)
\bigg].
\end{align}
\end{comment}

We consider the case $\lfloor n\tau_{\rho}\rfloor\leq j\leq g_{1,+}$. 
Similar manner to \eqref{def of hrho out 2}, there exists $c>0$ such that we have 
\begin{align*}
h_{n,j}^{(\mathrm{out})}(\rho)
&=
\frac{2r_{\tau}e^{\mathsf{k}(r_{\tau})}e^{-nV_{\tau}(r_{\tau})}}{(nd_2)^{\frac{a+1}{2}}}
\int_{0}^{\sqrt{nd_2}\delta_n'+\sqrt{nd_2}(r_{\tau}-\rho)}
y^a
e^{-\frac{1}{2}(y-\sqrt{nd_2}(r_{\tau}-\rho))^2}
\\
&\quad\times
\Bigl( 
1+\frac{c_1(u_y)}{\sqrt{n}}+\frac{c_2(u_y)}{n}+\frac{c_3(u)}{n^{3/2}}+\mathcal{O}(\frac{c_4(u_y)}{n^2})
\Bigr)\,dy
+\mathcal{O}(e^{-cM^2}),
\end{align*}
where $u_y:=y-\sqrt{nd_2}(r_{\tau}-\rho)=y-\phi(\tau)$ with $\phi(\tau):=\sqrt{nd_2}(r_{\tau}-\rho)$. 
Since $\sqrt{nd_2}\delta_n'+\sqrt{nd_2}(r_{\tau}-\rho)>\sqrt{nd_2}(r_{\tau}-\rho)$ for $\lfloor n\tau_{\rho_1}\rfloor\leq j \leq g_{1,+}$, we can extend the integral region to $[0+\infty)$ with an exponential error $\mathcal{O}(e^{-cM^2})$. 
The Gaussian integral gives 
\begin{equation}
\label{def of hrho out second part}
h_{n,j}^{(\mathrm{out})}(\rho)
%&=
%\frac{2r_{\tau}e^{\mathsf{k}(r_{\tau})}e^{-nV_{\tau}(r_{\tau})}}{\sqrt{nd_2}}
%\int_{\sqrt{d_2n}(\rho-r_{\tau})}^{\sqrt{nd_2}\delta_n'}
%\Bigl(r_{\tau}-\rho+\frac{u}{\sqrt{nd_2}}\Bigr)^a
%e^{-\frac{1}{2}u^2}
%\\
%&\quad\times
%\Bigl( 
%1+\frac{c_1(u)}{\sqrt{n}}+\frac{c_2(u)}{n}+\frac{c_3(u)}{n^{3/2}}+\mathcal{O}(\frac{c_4(u)}{n^2})
%\Bigr)\,du
%+\mathcal{O}(e^{-cM^2})
%\\
=
\frac{2r_{\tau}e^{\mathsf{k}(r_{\tau})}e^{-nV_{\tau}(r_{\tau})}}{(nd_2)^{\frac{a+1}{2}}}
\Bigl( 
\widetilde{\mathsf{c}}_0
+\frac{\widetilde{\mathsf{c}}_1}{\sqrt{n}}
+\frac{\widetilde{\mathsf{c}}_2}{n}
+\frac{\widetilde{\mathsf{c}}_3}{n^{\frac{3}{2}}}
+\mathcal{O}\Bigl(\frac{\widetilde{\mathsf{c}}_4}{n^2}\Bigr)
\Bigr)
+\mathcal{O}(e^{-cM^2}),
\end{equation}
where 
%and since $(y-\phi(\tau))^m=\sum_{\ell=0}^{m}\binom{m}{\ell}(-\phi(\tau))^{m-\ell}y^{\ell}$, we have 
\begin{align*}
\widetilde{\mathsf{c}}_0
&=
\int_{0}^{+\infty}
y^ae^{-\frac{1}{2}(y-\phi(\tau))^2}\,dy
=
\vartheta_{0,a}
e^{-\frac{\phi(\tau)^2}{4}}D_{-a-1}(-\phi(\tau)), 
\\
\widetilde{\mathsf{c}}_1
&=\int_{0}^{+\infty}
y^ae^{-\frac{1}{2}(y-\phi(\tau))^2}c_1(u_y)\,dy
\\
&=
-\frac{d_3}{6d_2^{3/2}}
\sum_{\ell=0}^{3}\binom{3}{\ell}(-\phi(\tau))^{3-\ell}\vartheta_{\ell,a}e^{-\frac{\phi(\tau)^2}{4}}D_{-a-\ell-1}(-\phi(\tau))
\\
&\quad
+
\frac{1}{d_2^{1/2}}\Bigl(\mathsf{k}'(r_{\tau})+\frac{1}{r_{\tau}}\Bigr)
\sum_{\ell=0}^{1}\binom{1}{\ell}(-\phi(\tau))^{1-\ell}\vartheta_{\ell,a}e^{-\frac{\phi(\tau)^2}{4}}D_{-a-\ell-1}(-\phi(\tau)),
\\
\widetilde{\mathsf{c}}_2
&=
\int_{0}^{+\infty}
y^ae^{-\frac{1}{2}(y-\phi(\tau))^2}c_2(u_y)\,dy 
\\
&=
\frac{d_3^2}{72d_2^3}
\sum_{\ell=0}^{6}\binom{6}{\ell}(-\phi(\tau))^{6-\ell}\vartheta_{\ell,a}e^{-\frac{\phi(\tau)^2}{4}}D_{-a-\ell-1}(-\phi(\tau))
\\
&\quad
-\frac{1}{24d_2^2}\Bigl(d_4+4d_3\mathsf{k}'(r_{\tau})+\frac{4d_3}{r_{\tau}}\Bigr)
\sum_{\ell=0}^{4}\binom{4}{\ell}(-\phi(\tau))^{4-\ell}\vartheta_{\ell,a}e^{-\frac{\phi(\tau)^2}{4}}D_{-a-\ell-1}(-\phi(\tau))
\\
&\quad 
+\frac{1}{d_2}\Bigl(\frac{\mathsf{k}''(r_{\tau})}{2}+\frac{\mathsf{k}'(r_{\tau})^2}{2}+\frac{\mathsf{k}'(r_{\tau})}{r_{\tau}}\Bigr)
\sum_{\ell=0}^{2}\binom{2}{\ell}(-\phi(\tau))^{2-\ell}\vartheta_{\ell,a}e^{-\frac{\phi(\tau)^2}{4}}D_{-a-\ell-1}(-\phi(\tau)). 
\end{align*}
On the other hand, 
\begin{align}
\begin{split}
\label{def of hrho in second part}    
h_{n,j}^{(\mathrm{in})}(\rho)
&:=
\int_{0}^{+\infty} 2ve^{\mathsf{k}(v)}e^{-nV_{\tau}(v)}|v-\rho|^{a}\,dv
-
\int_{\rho}^{+\infty} 2ve^{\mathsf{k}(v)}e^{-nV_{\tau}(v)}(v-\rho)
^{a}\,dv
\\
&\,=
\frac{2r_{\tau}e^{\mathsf{k}(r_{\tau})}e^{-nV_{\tau}(r_{\tau})}}{(nd_2)^{\frac{a+1}{2}}}
\Bigl( 
\mathsf{c}_0'-\widetilde{\mathsf{c}}_0+\frac{\mathsf{c}_1'-\widetilde{\mathsf{c}}_1}{\sqrt{n}}+\frac{\mathsf{c}_2'-\widetilde{\mathsf{c}}_2}{n}+\mathcal{O}\Bigl(\frac{\mathsf{c}_3'-\widetilde{\mathsf{c}}_3}{n^{3/2}}\Bigr)
\Bigr),
\end{split}
\end{align}
where 
\begin{align*}
\mathsf{c}_0'-\widetilde{\mathsf{c}}_0
&=
\vartheta_{0,a}e^{-\frac{1}{4}\phi(\tau)^2}
D_{-a-1}(\phi(\tau)), 
\\
\mathsf{c}_1'-\widetilde{\mathsf{c}}_1
&=
-\frac{d_3}{6d_2^{3/2}}\sum_{\ell=0}^{3}\binom{3}{\ell}(-\phi(\tau))^{3-\ell}
\vartheta_{\ell,a}e^{-\frac{1}{4}\phi(\tau)^2}
(-1)^{\ell}D_{-a-\ell-1}(\phi(\tau))
\\ &\quad
+\frac{1}{d_2^{1/2}}\Bigl(\mathsf{k}'(r_{\tau})+\frac{1}{r_{\tau}}\Bigr)
\sum_{\ell=0}^{1}\binom{1}{\ell}(-\phi(\tau))^{1-\ell}
\vartheta_{\ell,a}e^{-\frac{1}{4}\phi(\tau)^2}
(-1)^{\ell}D_{-a-\ell-1}(\phi(\tau)), 
\\
\mathsf{c}_2'-\widetilde{\mathsf{c}}_2
&=
\frac{d_3^2}{72d_2^3}
\sum_{\ell=0}^{6}\binom{6}{\ell}(-\phi(\tau))^{6-\ell}
\vartheta_{\ell,a}e^{-\frac{1}{4}\phi(\tau)^2}
(-1)^{\ell}D_{-a-\ell-1}(\phi(\tau))
\\
&\quad
-
\frac{1}{24d_2^2}\Bigl(d_4+4d_3\mathsf{k}'(r_{\tau})+\frac{4d_3}{r_{\tau}}\Bigr)
\sum_{\ell=0}^{4}\binom{4}{\ell}(-\phi(\tau))^{4-\ell}
\vartheta_{\ell,a}e^{-\frac{1}{4}\phi(\tau)^2} 
(-1)^{\ell}D_{-a-\ell-1}(\phi(\tau))
\\
&\quad
+
\frac{1}{d_2}\Bigl(\frac{\mathsf{k}''(r_{\tau})}{2}+\frac{\mathsf{k}'(r_{\tau})^2}{2}+\frac{\mathsf{k}'(r_{\tau})}{r_{\tau}}\Bigr)
\sum_{\ell=0}^{2}\binom{2}{\ell}(-\phi(\tau))^{2-\ell}
\vartheta_{\ell,a}e^{-\frac{1}{4}\phi(\tau)^2}
(-1)^{\ell}D_{-a-\ell-1}(\phi(\tau)). 
\end{align*}
Rewriting the above expansion in terms of \eqref{def of xi}, by \eqref{def of hrho out second part} and \eqref{def of hrho in second part} we obtain the same expansion with \eqref{def of S2 first part}.  
\begin{comment}

First note that 
\begin{align*}
\phi(\tau)
=\sqrt{nd_2}(r_{\tau}-\rho)    
=\xi_{\rho}-\mathsf{q}_1(\rho)\frac{\xi_{\rho}^2}{\sqrt{n}}
-
\mathsf{q}_2(\rho)
\frac{\xi_{\rho}^3}{n}
+\mathcal{O}\Bigl(\frac{\xi_{\rho}^4}{n^{\frac{3}{2}}}\Bigr).
\end{align*}

\end{comment}

%Since 
%\[
%D_{1-a}(\xi)=\xi D_{-a}(\xi)+aD_{-a-1}(\xi),
%\]
\end{proof}
To apply Lemma~\ref{lemma:Euler-Maclaurin} to each sum in Lemma~\ref{lemma:S2in part1}, we will use
\begin{equation}
\label{def of d relationship between g0 g1}
\partial_x\log\bigl[\mathcal{H}_{a,u}(x)\bigr]
=
-\frac{e^{u}D_{-a}(x)-D_{-a}(-x)}{e^uD_{-a-1}(x)+D_{-a-1}(-x)},
\end{equation}
where we have used \cite[Subsections~12.8 and~12.9]{NIST}.
Now, we establish the asymptotic expansion of $S_2$. 
\begin{lemma}\label{lemma:Analysis of S2}
There exists $\delta>0$ such that as $n\to+\infty$, we have  
\begin{align*}
S_2&=
C_2^{(2)}\sqrt{n}
+C_3^{(2)}
+C_n^{(2)}
+\mathcal{O}\Bigl(\frac{(\log n)^{\frac{1}{4}}}{n^{\frac{1}{4}}}\Bigr), 
\end{align*}
uniformly for $u\in\{z\in\C:|z-x|\leq \delta\}$ and $a$ in compact subsets of $(-1,+\infty)$, where 
\begin{align*}
C_2^{(2)}
&:=
\rho\sqrt{\Delta Q(\rho)}
\int_{-\infty}^{\infty}
\bigg(\log\bigl[\mathcal{H}_{a,u}(x)\bigr]
-a\log|x|-u\mathbf{1}_{(-\infty,0)}(x)
\bigg)
\,dx,
\\
C_3^{(2)}
&:=
-
\Bigl(\alpha+\frac{1}{2}\Bigr)u
+
\frac{1}{6}\Bigl(2+\frac{\rho\partial_r\Delta Q(\rho)}{\Delta Q(\rho)}\Bigr)u
-
\frac{a}{12}\Bigl(1-\frac{\rho\partial_r\Delta Q(\rho)}{\Delta Q(\rho)}\Bigr)u
\\
&\quad 
+\frac{1}{6}\Bigl(2+\frac{\rho\partial_r\Delta Q(\rho)}{\Delta Q(\rho)}\Bigr)
\int_{-\infty}^{+\infty}
\bigg[
x\Bigl(\log\bigl[\mathcal{H}_{a,u}(x)\bigr] 
- u\mathbf{1}_{(-\infty,0)}
\Big)
-ax\log|x| 
-\frac{a(a-1)x}{2(x^2+1)}
\bigg]\,dx,
\nonumber
\\
C_n^{(2)}
&:=
-\frac{a}{2}\Bigl(\rho\sqrt{\Delta Q(\rho)}\sqrt{n}\Delta_{n,+}
+\rho\sqrt{\Delta Q(\rho)}\sqrt{n}\Delta_{n,-}
+1\Bigr)\log(4n)
\\
&\quad
+
\sqrt{n}\rho\sqrt{\Delta Q(\rho)}
\bigg( 
a\,\Delta_{n,+}(\log \Delta_{n,+}-1)
+a\, \Delta_{n,-}(\log \Delta_{n,-}-1)
+u\Delta_{n,-}
\bigg)
+\frac{a}{2}\log\Delta_{n,+}
+\frac{a}{2}\log\Delta_{n,-}. 
\end{align*}
\end{lemma}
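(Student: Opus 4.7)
The plan is to evaluate each of the four sums in Lemma~\ref{lemma:S2in part1} separately via Riemann-sum approximation and Euler-Maclaurin corrections (Lemma~\ref{lemma:Euler-Maclaurin}). The recurring non-standard feature is that $\log\mathcal{H}_{a,u}(x)$ is not integrable on $\R$: using the integral representation of $D_{-a-1}$, one verifies
\[
\mathcal{H}_{a,u}(x)=|x|^{a}e^{u\mathbf{1}_{(-\infty,0)}(x)}\Bigl(1+\tfrac{a(a-1)}{2x^{2}}+\mathcal{O}(x^{-4})\Bigr),\qquad |x|\to\infty,
\]
so the leading and first subleading asymptotes must be subtracted before passing to integrals.

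First, the trivial prefactor sums, using $g_{1,+}-g_{1,-}+1=\rho\sqrt{\Delta Q(\rho)}\sqrt{n}(\Delta_{n,+}+\Delta_{n,-})+1$, give the $-\frac{a}{2}\log(4n)$ contribution of $C_n^{(2)}$ together with a spurious $\log\Delta Q(\rho)$ piece that will be canceled later. For the main sum $\sum_j\log\mathcal{H}_{a,u}(\xi_j)$, I would split
\[
\log\mathcal{H}_{a,u}(\xi_j) = \bigl[\log\mathcal{H}_{a,u}(\xi_j)-a\log|\xi_j|-u\mathbf{1}_{(-\infty,0)}(\xi_j)\bigr]+a\log|\xi_j|+u\mathbf{1}_{(-\infty,0)}(\xi_j).
\]
With Jacobian $dj/d\xi=\rho\sqrt{\Delta Q(\rho)}\sqrt{n}$, the bracketed piece converges absolutely and yields $C_2^{(2)}\sqrt{n}$ plus Euler-Maclaurin boundary corrections of order $1$. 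The sum $u\sum\mathbf{1}_{(-\infty,0)}(\xi_j)$ counts negative-index terms, contributing $u\rho\sqrt{\Delta Q(\rho)}\sqrt{n}\Delta_{n,-}$ to $C_n^{(2)}$ modulo $\mathcal{O}(1)$. The sum $a\sum\log|\xi_j|=a\sum\log|j-n\tau_\rho|-a(g_{1,+}-g_{1,-}+1)\log(\rho\sqrt{\Delta Q(\rho)}\sqrt{n})$ is then evaluated by Stirling, producing both the $a\Delta_{n,\pm}(\log\Delta_{n,\pm}-1)\sqrt{n}\rho\sqrt{\Delta Q(\rho)}$ and $\frac{a}{2}\log\Delta_{n,\pm}$ terms in $C_n^{(2)}$, while canceling the spurious $\log\Delta Q(\rho)$ from the trivial sums.

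For the $n^{-1/2}$-correction sum I would invoke the identity~\eqref{def of d relationship between g0 g1} to rewrite
\[
\frac{\widetilde{g}_{a,u}(\xi)}{g_{a,u}(\xi)}=-a\xi\Bigl(1-\tfrac{\rho\partial_r\Delta Q(\rho)}{\Delta Q(\rho)}\Bigr)-\Bigl[\xi^{2}\bigl(2+\tfrac{\rho\partial_r\Delta Q(\rho)}{\Delta Q(\rho)}\bigr)-6(\rho\mathsf{k}'(\rho)+1)-(2+a)\bigl(1-\tfrac{\rho\partial_r\Delta Q(\rho)}{\Delta Q(\rho)}\bigr)\Bigr]\partial_{\xi}\log\mathcal{H}_{a,u}(\xi).
\]
The Riemann sum then cancels the $1/\sqrt{n}$ against the $\sqrt{n}$ from the Jacobian, leaving $\tfrac{1}{12}\int_{-\Delta_{n,-}}^{\Delta_{n,+}}(\widetilde{g}_{a,u}/g_{a,u})\,dx$. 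Integration by parts on the $\partial_{\xi}\log\mathcal{H}_{a,u}$ piece converts $-\xi^{2}(2+\ldots)\partial_{\xi}\log\mathcal{H}_{a,u}$ into $2x(2+\tfrac{\rho\partial_r\Delta Q(\rho)}{\Delta Q(\rho)})\log\mathcal{H}_{a,u}(x)$ modulo boundary terms; a regularization subtracting $ax\log|x|$, $ux\mathbf{1}_{(-\infty,0)}$, and $\tfrac{a(a-1)x}{2(x^{2}+1)}$ then yields the convergent integral appearing in $C_3^{(2)}$. The constant $6(\rho\mathsf{k}'(\rho)+1)/12=\alpha+\tfrac12$, valid since $\rho\mathsf{k}'(\rho)=2\alpha$, produces the $-(\alpha+\tfrac12)u$ summand, and the linear piece $-\tfrac{a}{12}x(1-\tfrac{\rho\partial_r\Delta Q(\rho)}{\Delta Q(\rho)})$ integrates over $[-\Delta_{n,-},\Delta_{n,+}]$ to a contribution of order $\Delta_{n,+}^{2}-\Delta_{n,-}^{2}=\mathcal{O}(M/\sqrt{n})$, which is absorbed into the error.

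The main obstacle will be the assembly of the many boundary and logarithmic contributions (from Stirling on $\sum\log|j-n\tau_\rho|$, from Euler-Maclaurin in the integrable part of Step 2, and from the boundary terms of the integration by parts in Step 3) into the stated form of $C_n^{(2)}$. Several divergent-looking quantities in $\Delta_{n,\pm}$ and $\log\Delta_{n,\pm}$ must be deliberately kept inside $C_n^{(2)}$ rather than absorbed into $C_3^{(2)}$, because they cancel against the analogous $C_n^{(1)}$ and $C_n^{(3)}$ appearing in Lemmas~\ref{lemma:S1} and~\ref{lemma:S3} when the full decomposition $S_0+S_1+S_2+S_3$ of \eqref{def of log calEn splitted} is reassembled in the proof of Theorem~\ref{theorem:calEn expansion}. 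Keeping a consistent convention for $\Delta_{n,\pm}$, $\theta_{1,\pm}$, and the floor/ceiling remainders throughout is the most delicate bookkeeping task.
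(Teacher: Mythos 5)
Your broad plan --- approximate the four sums from Lemma~\ref{lemma:S2in part1} by Riemann sums and Euler--Maclaurin, and for the $n^{-1/2}$-correction rewrite $\widetilde{g}_{a,u}/g_{a,u}$ via \eqref{def of d relationship between g0 g1} before integrating by parts --- agrees with the paper's in spirit, and your treatment of the prefactor sums and of the $\widetilde{g}/g$ piece matches. The divergence is in the handling of $\sum_{j}\log\mathcal{H}_{a,u}(\xi_j)$. The paper applies Euler--Maclaurin to the \emph{full}, everywhere-smooth function $t\mapsto\log\mathcal{H}_{a,u}(\xi_t)$ (recall $\mathcal{H}_{a,u}>0$ on $\R$), converts the sum to $\sqrt{n}\,\rho\sqrt{\Delta Q(\rho)}\int_{-\Delta_{n,-}}^{\Delta_{n,+}}\log\mathcal{H}_{a,u}(x)\,dx$ plus boundary terms $\tfrac{1}{2}\log\mathcal{H}_{a,u}(\mp\Delta_{n,\mp})$, and \emph{only then} regularizes the resulting integral by subtracting the $|x|\to\infty$ asymptotics \eqref{def of asymptotics of mathcal H au x}. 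You instead regularize the summand first and then Riemann-sum the pieces. That introduces two concrete problems.

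First, the ``bracketed piece'' $\log\mathcal{H}_{a,u}(x)-a\log|x|-u\mathbf{1}_{(-\infty,0)}(x)$ is \emph{not} a nice function to Euler--Maclaurin: since $\mathcal{H}_{a,u}$ is smooth and positive at $x=0$ while $a\log|x|$ blows up there, the regularized function has a logarithmic singularity (plus a jump from the $\mathbf{1}$) at $x=0$, which sits inside the summation window. Lemma~\ref{lemma:Euler-Maclaurin}'s error bound involves $\int|f''|$, which is not finite here, so the step ``Riemann sum $\Rightarrow C_2^{(2)}\sqrt{n}$ plus $O(1)$ EM corrections'' is not justified. The paper avoids this entirely by never regularizing the summand; the interior singularity appears only in the \emph{integral}, where it is harmless.

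Second, and more seriously, the discrepancy between $\sqrt{n}\,\rho\sqrt{\Delta Q(\rho)}\int_{-\Delta_{n,-}}^{\Delta_{n,+}}[\cdots]\,dx$ and $C_2^{(2)}\sqrt{n}$ is not $O(1)$. By \eqref{def of asymptotics of mathcal H au x}, the tails contribute $\sqrt{n}\,\rho\sqrt{\Delta Q(\rho)}\bigl(\tfrac{a(a-1)}{2\Delta_{n,+}}+\tfrac{a(a-1)}{2\Delta_{n,-}}-\tfrac{a(a-1)(2a-3)}{12\Delta_{n,+}^{3}}-\tfrac{a(a-1)(2a-3)}{12\Delta_{n,-}^{3}}+\cdots\bigr)$, which is of order $\sqrt{n}/\Delta_{n,\pm}\asymp n^{3/8}(\log n)^{1/8}$ and hence divergent. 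The paper's proof tracks these terms explicitly, and they are precisely what cancels the $+\tfrac{a(a-1)}{2}\rho\sqrt{\Delta Q(\rho)}\tfrac{\sqrt{n}}{\Delta_{n,\mp}}-\tfrac{a(a-1)(2a-3)}{12}\rho\sqrt{\Delta Q(\rho)}\tfrac{\sqrt{n}}{\Delta_{n,\mp}^{3}}$ pieces in $C_n^{(1)}$ and $C_n^{(3)}$ when $S_0+S_1+S_2+S_3$ is reassembled. Your outline collapses them into ``Euler--Maclaurin boundary corrections of order $1$'', which is the genuine gap. Finally, the Stirling route for $a\sum\log|\xi_j|$ is not as clean as you suggest because $n\tau_\rho\notin\Z$ in general, so the sum is of the form $\sum_k\log(k+c)$ with $c\in(0,1)$ on each side and requires a shifted-Gamma expansion whose $O(1)$ constant must be reconciled with the one produced by the paper's integral regularization. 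These are not insurmountable, but as written your proposal would not reach the stated error term without substantially more work at exactly these points.
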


\begin{proof}
\begin{comment}
Note that
\begin{align*}
r_{\tau(g_{1,-})}
&=
\rho
-\frac{1}{2\rho\Delta Q(\rho)}m_{n,-}
-\frac{\Delta Q(\rho)+\rho\partial_r\Delta Q(\rho)}{8(\rho\Delta Q(\rho))^3}
m_{n,-}^2
+\mathcal{O}\bigl(m_{n,-}^3\bigr)
\\
&=\rho-\frac{1}{2\sqrt{\Delta Q(\rho)}}\frac{\Delta_{n,-}}{\sqrt{n}}
-\frac{\Delta Q(\rho)+\rho\partial_r\Delta Q(\rho)}{8\rho \Delta Q(\rho)^2}\frac{\Delta_{n,-}^2}{n}+\mathcal{O}\Bigl(\frac{\Delta_{n,-}^3}{n^{3/2}}\Bigr), 
\\
r_{\tau(g_{1,+})}
&=
\rho+\frac{1}{2\rho\Delta Q(\rho)}m_{n,+}
-\frac{\Delta Q(\rho)+\rho\partial_r\Delta Q(\rho)}{8(\rho\Delta Q(\rho))^3}m_{n,+}^2
+\mathcal{O}\bigl(m_{n,+}^3\bigr)
\\
&=
\rho+\frac{1}{2\sqrt{\Delta Q(\rho)}}\frac{\Delta_{n,+}}{\sqrt{n}}
-\frac{\Delta Q(\rho)+\rho\partial_r\Delta Q(\rho)}{8\rho \Delta Q(\rho)^2}\frac{\Delta_{n,+}^2}{n}+\mathcal{O}\Bigl(\frac{\Delta_{n,+}^3}{n^{3/2}}\Bigr). 
\end{align*}
\end{comment}
%Recall that $g_{1,+}=n(\tau_{\rho}+\delta_n)-\theta_{1,+}$ and $g_{1,-}=n(\tau_{\rho}-\delta_n)+\theta_{1,-}$, and then we have 
By \eqref{def of critical indices}, as $n\to+\infty$, we have 
\[
g_{1,+}-g_{1,-}+1
=
\rho\sqrt{\Delta Q(\rho)}\sqrt{n}\Delta_{n,+}
+\rho\sqrt{\Delta Q(\rho)}\sqrt{n}\Delta_{n,-}
+1.
\]
Therefore, the first and second terms of \eqref{def of S2 asymptotics part 1} satisfy 
\begin{align*}
&\quad
\sum_{j=g_{1,-}}^{g_{1,+}}\log\Bigl(\frac{1}{(4n)^{\frac{a}{2}}}\Bigr)
+
\frac{a}{2}
\sum_{j=g_{1,-}}^{g_{1,+}}\log\Bigl(\frac{1}{\Delta Q(\rho)}\Bigr)
\\
&=
-\Bigl( \rho\sqrt{\Delta Q(\rho)}\sqrt{n}\Delta_{n,+}
+\rho\sqrt{\Delta Q(\rho)}\sqrt{n}\Delta_{n,-}
+1\Bigr)
a\log2
\\
&\quad
-\Bigl( \rho\sqrt{\Delta Q(\rho)}\sqrt{n}\Delta_{n,+}
+\rho\sqrt{\Delta Q(\rho)}\sqrt{n}\Delta_{n,-}
+1\Bigr)\frac{a}{2}\log n
\\
&\quad
-\Bigl( \rho\sqrt{\Delta Q(\rho)}\sqrt{n}\Delta_{n,+}
+\rho\sqrt{\Delta Q(\rho)}\sqrt{n}\Delta_{n,-}
+1\Bigr)\frac{a}{2}\log \Delta Q(\rho).
\end{align*}
By Lemma~\ref{lemma:Euler-Maclaurin}, as $n\to+\infty$, we have 
\begin{align*}
\sum_{j=g_{1,-}}^{g_{1,+}}
\log \bigl[\mathcal{H}_{a,u}(\xi)\bigr] 
&=
\int_{g_{1,-}}^{g_{1,+}}\log\bigl[\mathcal{H}_{a,u}(\xi_t)\bigr]\,dt
+\frac{1}{2}\log\bigl[\mathcal{H}_{a,u}(-\Delta_{-,n})\bigr]
+\frac{1}{2}\log\bigl[\mathcal{H}_{a,u}(\Delta_{+,n})\bigr]
+\mathcal{O}\Bigl(\frac{(\log n)^{\frac{1}{4}}}{n^{\frac{1}{4}}}\Bigr). 
\end{align*}
By change of variables \eqref{def of rtau ODE}, we get 
\[
\int_{g_{1,-}}^{g_{1,+}}\log\bigl[\mathcal{H}_{a,u}(\xi_t)\bigr]\,dt
=
\sqrt{n}\,\rho\sqrt{\Delta Q(\rho)}
\int_{-\Delta_{n,-}}^{\Delta_{n,+}}
\log\bigl[\mathcal{H}_{a,u}(x)\bigr]\,dx. 
\] 
From \cite[Section 12.9]{NIST}, we have 
\begin{comment}
\begin{align}
\mathcal{H}_{a,u}(x)
&=
x^a\Bigl(1+\frac{a(a-1)}{2x^2}+\mathcal{O}(x^{-3})\Bigr)+\mathcal{O}(e^{-\frac{x^2}{2}}), \qquad x\to+\infty, 
\\
\mathcal{H}_{a,u}(x)
&=e^u(-x)^a\Bigl(1+\frac{a(a-1)}{2x^2}+\mathcal{O}((-x)^{-3})\Bigr)+\mathcal{O}(e^{-\frac{x^2}{2}}),\qquad x\to-\infty. 
\end{align}
\end{comment}
\begin{equation}
\label{def of asymptotics of mathcal H au x}
\log\bigl[\mathcal{H}_{a,u}(x)\bigr]
=
a\log |x|+u\mathbf{1}_{(-\infty,0)}(x)+\frac{a(a-1)}{2x^2}-\frac{a(a-1)(2a-3)}{4x^4}+\mathcal{O}(x^{-6}),\quad x\to\pm\infty.
\end{equation}
By the above and regularizing the integral, we have 
\begin{align*}
&\quad\sqrt{n}\,\rho\sqrt{\Delta Q(\rho)}
\int_{-\Delta_{n,-}}^{\Delta_{n,+}}
\log\bigl[\mathcal{H}_{a,u}(x)\bigr]\,dx
\\
&=
\sqrt{n}\,\rho\sqrt{\Delta Q(\rho)}
\int_{-\infty}^{+\infty}
\bigg(\log\bigl[\mathcal{H}_{a,u}(x)\bigr]
-a\log|x|-u\mathbf{1}_{(-\infty,0)}(x)
\bigg)
\,dx
\\
&\quad
+\sqrt{n}\rho\sqrt{\Delta Q(\rho)}
\bigg( 
a\,\Delta_{n,+}(\log \Delta_{n,+}-1)
+a\, \Delta_{n,-}(\log \Delta_{n,-}-1)
+u\Delta_{n,-}
\bigg)
\\
&\quad
-\sqrt{n}\rho\sqrt{\Delta Q(\rho)}\Bigl( 
\frac{a(a-1)}{2\Delta_{n,+}}
+\frac{a(a-1)}{2\Delta_{n,-}}
-\frac{a(a-1)(2a-3)}{12\Delta_{n,+}^3}
-\frac{a(a-1)(2a-3)}{12\Delta_{n,-}^3}
+\mathcal{O}(\Delta_{n,+}^{-5}+\Delta_{n,-}^{-5})
\Bigr).
\end{align*}
\begin{comment}
where we have used 
\begin{align*}
&\quad\sqrt{n}\,\rho\sqrt{\Delta Q(\rho)}
\int_{-\Delta_{n,-}}^{\Delta_{n,+}}
\Bigl(a\log|x|+u\mathbf{1}_{(-\infty,0)}(x)\Bigr)\,dx
\\
&=\sqrt{n}\rho\sqrt{\Delta Q(\rho)}
\bigg( 
a\,\Delta_{n,+}(\log \Delta_{n,+}-1)
+
a\,\Delta_{n,-}(\log \Delta_{n,-}-1)
+u\Delta_{n,-}
\bigg). 
\end{align*}
Note that 
\begin{align*}
&\quad
-\sqrt{n}\,\rho\sqrt{\Delta Q(\rho)}
\int_{\Delta_{n,+}}^{+\infty}\Bigl(\frac{a(a-1)}{2x^2}-\frac{a(a-1)(2a-3)}{4x^4}+\mathcal{O}(x^{-6})\Bigr)\,dx
\\
&\quad
-\sqrt{n}\,\rho\sqrt{\Delta Q(\rho)}
\int_{-\infty}^{-\Delta_{n,-}}\Bigl(\frac{a(a-1)}{2x^2}-\frac{a(a-1)(2a-3)}{4x^4}+\mathcal{O}(x^{-6})\Bigr)\,dx
\\
&=
-\sqrt{n}\rho\sqrt{\Delta Q(\rho)}\Bigl( 
\frac{a(a-1)}{2\Delta_{n,+}}
+\frac{a(a-1)}{2\Delta_{n,-}}
-\frac{a(a-1)(2a-3)}{12\Delta_{n,+}^3}
-\frac{a(a-1)(2a-3)}{12\Delta_{n,-}^3}
+\mathcal{O}(\Delta_{n,+}^{-5}+\Delta_{n,-}^{-5})
\Bigr). 
\end{align*}
\end{comment}
\begin{comment}
Furthermore, observe that as $n\to+\infty$, 
\begin{align*}
\frac{1}{2}\log\Bigl[\frac{\Gamma(a+1)}{\sqrt{2\pi}}e^{-\frac{\xi(g_{1,+})^2}{4}}g_{0}(\xi(g_{1,+}))\Bigr]    
&=
\frac{a}{2}\log\Delta_{n,+}+\mathcal{O}\Bigl(\frac{1}{\Delta_{n,+}^2}\Bigr),
\\
\frac{1}{2}\log\Bigl[\frac{\Gamma(a+1)}{\sqrt{2\pi}}e^{-\frac{\xi(g_{1,-})^2}{4}}g_{0}(\xi(g_{1,-}))\Bigr]
&=\frac{u}{2}+\frac{a}{2}\log\Delta_{n,-}+\mathcal{O}\Bigl(\frac{1}{\Delta_{n,-}^2}\Bigr).
\end{align*}
\end{comment}
Next we observe that 
\begin{align*}
\frac{1}{\sqrt{n}} \sum_{j=g_{1,-}}^{g_{1,+}}
\frac{\widetilde{g}_{a,u}(\xi)}{12\rho\sqrt{\Delta Q(\rho)}g_{a,u}(\xi)}
&=
\frac{1}{12}\int_{-\Delta_{n,-}}^{\Delta_{n,+}}
\frac{\widetilde{g}_{a,u}(x)}{g_{a,u}(x)}\,dx
+\mathcal{O}\Bigl(\frac{(\log n)^{\frac{1}{4}}}{n^{\frac{1}{4}}}\Bigr),
%+
%\frac{1}{\sqrt{n}} 
%\frac{g_1(\xi(g_{1,+}))}{12\rho\sqrt{\Delta Q(\rho)}g_0(\xi(g_{1,+}))}
%\\
%&\quad
%+
%\frac{1}{\sqrt{n}} 
%\frac{g_1(\xi(g_{1,-}))}{12\rho\sqrt{\Delta Q(\rho)}g_0(\xi(g_{1,-}))},
\end{align*}
Note that by \eqref{def of tilde gau}, \eqref{def of cal H au}, and \eqref{def of d relationship between g0 g1}, we have 
\begin{align*}
\frac{1}{12}\int_{-\Delta_{n,-}}^{\Delta_{n,+}}
\frac{\widetilde{g}_{a,u}(x)}{g_{a,u}(x)}\,dx
&=
\frac{1}{6}\Bigl(2+\frac{\rho\,\partial_r\Delta Q(\rho)}{\Delta Q(\rho)}\Bigr)
\int_{-\Delta_{n,-}}^{\Delta_{n,+}}
x\log\bigl[\mathcal{H}_{a,u}(x)\bigr]\,dx
-\frac{a}{24}\Bigl(1-\frac{\rho\,\partial_r\Delta Q(\rho)}{\Delta Q(\rho)}\Bigr)(\Delta_{n,+}^2-\Delta_{n,-}^2)
\\
&
\quad
-\frac{1}{12}\Bigl(2+\frac{\rho\,\partial_r\Delta Q(\rho)}{\Delta Q(\rho)}\Bigr)\Bigl[\Delta_{n,+}^2
\log\bigl[\mathcal{H}_{a,u}(\Delta_{n,+})\bigr]
-\Delta_{n,-}^2
\log\bigl[\mathcal{H}_{a,u}(-\Delta_{n,-})\bigr]\Bigr]
\\
&
\quad
+\Bigl[
\alpha+\frac{1}{2}+\frac{2+a}{12}\Bigl(1-\frac{\rho\partial_r\Delta Q(\rho)}{\Delta Q(\rho)}\Bigr)\Bigr]
\Bigl[\log\bigl[\mathcal{H}_{a,u}(\Delta_{n,+})\bigr]-\log\bigl[\mathcal{H}_{a,u}(-\Delta_{n,-})\bigr]\Bigr]. 
\end{align*}
\begin{comment}
Note that 
\begin{align*}
\frac{1}{12}\frac{g_1(x)}{g_0(x)}
&=
-\frac{a}{12}\Bigl(1+\frac{2\rho\partial_r\Delta Q(\rho)}{\Delta Q(\rho)}\Bigr)
x
-
\frac{1}{12}\Bigl(2+\frac{\rho\partial_r\Delta Q(\rho)}{\Delta Q(\rho)}\Bigr)
x^2
\partial_x\log\Bigl[\frac{\Gamma(a+1)}{\sqrt{2\pi}}e^{-\frac{x^2}{4}}g_0(x)\Bigr]
\\
&\quad 
+\Bigl(\alpha+\frac{1}{2}\Bigr)
\partial_x\log\Bigl[\frac{\Gamma(a+1)}{\sqrt{2\pi}}e^{-\frac{x^2}{4}}g_0(x)\Bigr]
+
\frac{2+a}{12}\Bigl(1-\frac{\rho\partial_r\Delta Q(\rho)}{\Delta Q(\rho)}\Bigr)
\partial_x\log\Bigl[\frac{\Gamma(a+1)}{\sqrt{2\pi}}e^{-\frac{x^2}{4}}g_0(x)\Bigr].   
\end{align*}
\end{comment}
By regularizing the integral, we obtain 
\begin{align*}
&\quad \frac{1}{12}\int_{-\Delta_{n,-}}^{\Delta_{n,+}}
\frac{\widetilde{g}_{a,u}(x)}{g_{a,u}(x)}\,dx
\\
&=
-\frac{a}{24}\Bigl(1+\frac{2\rho\partial_r\Delta Q(\rho)}{\Delta Q(\rho)}\Bigr)(\Delta_{n,+}^2-\Delta_{n,-}^2)
+\Bigl(\alpha+\frac{1}{2}\Bigr)
\log\bigl[\mathcal{H}_{a,u}(\Delta_{n,+})\bigr]
-
\Bigl(\alpha+\frac{1}{2}\Bigr)
\log\bigl[\mathcal{H}_{a,u}(-\Delta_{n,+})\bigr]
\\
&\quad
+
\frac{2+a}{12}\Bigl(1-\frac{\rho\partial_r\Delta Q(\rho)}{\Delta Q(\rho)}\Bigr)
\log\bigl[\mathcal{H}_{a,u}(\Delta_{n,+})\bigr]
-
\frac{2+a}{12}\Bigl(1-\frac{\rho\partial_r\Delta Q(\rho)}{\Delta Q(\rho)}\Bigr)
\log\bigl[\mathcal{H}_{a,u}(-\Delta_{n,-})\bigr]
\\
&\quad
-
\frac{1}{12}\Bigl(2+\frac{\rho\partial_r\Delta Q(\rho)}{\Delta Q(\rho)}\Bigr)
\Delta_{n,+}^2\log\bigl[\mathcal{H}_{a,u}(\Delta_{n,+})\bigr]
+
\frac{1}{12}\Bigl(2+\frac{\rho\partial_r\Delta Q(\rho)}{\Delta Q(\rho)}\Bigr)
\Delta_{n,-}^2\log\bigl[\mathcal{H}_{a,u}(-\Delta_{n,-})\bigr]
\\
&\quad 
+\frac{1}{6}\Bigl(2+\frac{\rho\partial_r\Delta Q(\rho)}{\Delta Q(\rho)}\Bigr)
\int_{-\infty}^{+\infty}
\bigg[
x\Bigl(\log\bigl[\mathcal{H}_{a,u}(x)\bigr] 
- u\mathbf{1}_{(-\infty,0)}
\Big)
-ax\log|x| 
-\frac{a(a-1)x}{2(x^2+1)}
\bigg]\,dx
\\
&\quad 
-\frac{u}{12}\Bigl(2+\frac{\rho\partial_r\Delta Q(\rho)}{\Delta Q(\rho)}\Bigr)
\Delta_{n,-}^2
+\frac{1}{6}\Bigl(2+\frac{\rho\partial_r\Delta Q(\rho)}{\Delta Q(\rho)}\Bigr)
\int_{-\Delta_{n,-}}^{\Delta_{n,+}}
\Bigl(ax\log|x| + \frac{a(a-1)x}{2(x^2+1)}\Bigr)\,dx
+\mathcal{O}\Bigl(\frac{(\log n)^{\frac{1}{4}}}{n^{\frac{1}{4}}}\Bigr).
\end{align*}
\begin{comment}
Since 
\begin{align*}
\log\Bigl[\frac{\Gamma(a+1)}{\sqrt{2\pi}}e^{-\frac{\Delta_{n,+}^2}{4}}g_0(\Delta_{n,+})\Bigr]
&=a\log \Delta_{n,+}+\frac{a(a-1)}{2\Delta_{n,+}^2}+\mathcal{O}(\Delta_{n,+}^{-4}),
\\
\log\Bigl[\frac{\Gamma(a+1)}{\sqrt{2\pi}}e^{-\frac{\Delta_{n,-}^2}{4}}g_0(\Delta_{n,-})\Bigr]
&=u+a\log \Delta_{n,-}+\frac{a(a-1)}{2\Delta_{n,-}^2}+\mathcal{O}(\Delta_{n,-}^{-4}), 
\end{align*}
By \eqref{def of asymptotics of mathcal H au x}, we have 
\begin{align*}
&\quad \frac{1}{12}\int_{-\Delta_{n,-}}^{\Delta_{n,+}}
\frac{\widetilde{g}_{a,u}(x)}{g_{a,u}(x)}\,dx
\\
&=
\frac{1}{6}\Bigl(2+\frac{\rho\partial_r\Delta Q(\rho)}{\Delta Q(\rho)}\Bigr)
\int_{-\infty}^{+\infty}
\bigg[
x\Bigl(\log\bigl[\mathcal{H}_{a,u}(x)\bigr] 
- u\mathbf{1}_{(-\infty,0)}
\Big)
-ax\log|x| 
-\frac{a(a-1)x}{2(x^2+1)}
\bigg]\,dx
\\
&\quad
-
\Bigl(\alpha+\frac{1}{2}\Bigr)u
-
\frac{2+a}{12}\Bigl(1-\frac{\rho\partial_r\Delta Q(\rho)}{\Delta Q(\rho)}\Bigr)u 
+\mathcal{O}\Bigl(\frac{(\log n)^{\frac{1}{4}}}{n^{\frac{1}{4}}}\Bigr). 
\end{align*}
\end{comment}
Combining all of the above together with \eqref{def of asymptotics of mathcal H au x}, we obtain the result. 
\end{proof}

\subsection{Proof of Theorem~\ref{theorem:calEn expansion}}
We now complete the proof of Theorem~\ref{theorem:calEn expansion}. 
Combining the error terms from Lemma~\ref{lemma:S1},~\ref{lemma:Analysis of S2}, and~\ref{lemma:S3}, we have $C_n^{(1)}+C_{n}^{(2)}+C_{n}^{(3)}=
-a\log2-\frac{a}{2}\log\Delta Q(\rho)$, which is added to the term of order $\mathcal{O}(1)$.
Thus, by Lemma~\ref{lemma:S0},~\ref{lemma:S1},~\ref{lemma:Analysis of S2}, and~\ref{lemma:S3}, we get $C_3^{(1)}+C_3^{(2)}+C_3^{(3)}-a\log2-\frac{a}{2}\log\Delta Q(\rho)=C_3(u,a)$.
By Lemma~\ref{lemma:S0},~\ref{lemma:S1},~\ref{lemma:Analysis of S2}, and~\ref{lemma:S3}, $C_1(u,a),C_2(u,a)$ are similarly computed.
This completes the proof of Theorem~\ref{theorem:calEn expansion}. 

\subsubsection*{Acknowledgements}
%The author is grateful to Christophe Charlier for their valuable feedback to improve the paper and for engaging in numerous stimulating discussions.
The author is grateful to Sung-Soo Byun and Seong-Mi Seo for valuable feedbacks.
The author acknowledges support from the European Research Council (ERC), Grant Agreement No. 101115687.% by his promoter Christophe Charlier who is a Research Associate of the Fonds de la Recherche Scientifique FNRS.

\appendix
\section{Consistency between Theorem~\ref{theorem:asymptotic expansion of counting statistics} and \cite[Theorem~1.1]{C2021 FH}}
\label{section: appendix counting}
In this appendix, we confirm the consistency between Theorem~\ref{theorem:asymptotic expansion of counting statistics} and \cite[Theorem~1.1]{C2021 FH} mentioned in Remark~\ref{remark:counting statistics}. 
We recall that $C_1,C_2,C_3$ for $Q(z)=|z|^{2b}$ ($b>0$) $r_1=\rho$, and $r_{k}\equiv0$ for $k=2,3,\dots,m+1$ in \cite[Theorem 1.1]{C2021 FH} are given by 
\begin{align}
\label{def of C1 count MF}
C_1&=b\rho^{2b}u,
\\
\label{def of C2 count MF}
C_2&=\sqrt{2}b\rho^b\int_0^{+\infty}\Bigl(
\mathcal{F}(t,e^u)+\mathcal{F}(t,e^{-t})
\Bigr)\,dt, 
\\
\label{def of C3 count MF}
C_3&=-\Bigl(\frac{1}{2}+\alpha\Bigr)u+4b\int_0^{+\infty}t\bigl(\mathcal{F}(t,e^u)-\mathcal{F}(t,e^{-u})\bigr)\,dt
+b\int_{-\infty}^{+\infty}\mathcal{G}(t,e^u)\frac{5t^2-1}{3}\,dt, 
\end{align}
where $\mathcal{F}(t,s)$ for $t\in\R$ and $s\in\C\backslash(-\infty,0]$ is given by \eqref{def of mathcalFts} and 
\begin{equation*}
\mathcal{G}(t,s):= \frac{1-s}{1+\frac{s-1}{2}\mathrm{erfc}(t)}\frac{e^{-t^2}}{\sqrt{\pi}} =\frac{d}{dt}\mathcal{F}(t,s). 
\end{equation*}
It is straightforward to see that $C_{1}(u)$ and $C_2(u)$ in Theorem~\ref{theorem:asymptotic expansion of counting statistics} are consistent with \eqref{def of C1 count MF} and \eqref{def of C2 count MF}, respectively.
To see that $C_3$ in \cite[Theorem 1.1]{C2021 FH} is consistent with \eqref{def of C3 count MF}, by integration by parts, the last term in \eqref{def of C3 count MF} can be rewritten as
\begin{align*}
\int_{-\infty}^{+\infty}\mathcal{G}(t,e^u)\frac{5t^2-1}{3}\,dt
&=
b\lim_{M\to+\infty}\int_0^{M}\mathcal{G}(t,e^u)\frac{5t^2-1}{3}\,dt
-
b\lim_{M\to+\infty}\int_0^{M}\mathcal{G}(t,e^{-u})\frac{5t^2-1}{3}\,dt
\\
&=\frac{b}{3}u-\frac{10b}{3}\int_0^{+\infty}t\bigl( \mathcal{F}(t,e^{u})-\mathcal{F}(t,e^{-u})\bigr)\,dt. 
\end{align*}
Therefore, \eqref{def of C3 count MF} can be deduced to 
\[
C_3=-\Bigl(\frac{1}{2}+\alpha\Bigr)u+\frac{b}{3}u+\frac{2b}{3}\int_0^{+\infty}t\Bigl(\mathcal{F}(t,e^{u})-\mathcal{F}(t,e^{-u})\Bigr)\,dt. 
\]
If we informally Theorem~\ref{theorem:asymptotic expansion of counting statistics} to the case $Q(z)=|z|^{2b}$ ($b>0$), by $2+\frac{\rho\partial_r\Delta Q(\rho)}{\Delta Q(\rho)}=2b$, we obtain $C_3(u)=C_3$.
This shows the consistency between Theorem~\ref{theorem:asymptotic expansion of counting statistics} and \cite[Theorem~1.1]{C2021 FH}.

\section{Consistency between Theorem~\ref{theorem:calEn expansion} and 
\cite[Theorem~1.1]{BC2022}}
\label{section:appendix parabolic cylinder function}
We confirm the consistency between Theorem~\ref{theorem:calEn expansion} and 
\cite[Theorem~1.1]{BC2022}. 
We begin with collecting some facts on the parabolic cylinder function and the associated Hermite polynomials from \cite{NIST,Temme,Wunsche,BC2022}.  
The $\nu$-th associated Hermite polynomials $\{\mathrm{He}_k^{(\nu)};k=0,1,\dots\}$ are defined recursively by 
\begin{equation}
\begin{cases}
    \mathrm{He}_{k+1}^{(\nu)}(x)=x\mathrm{He}_{k}^{(\nu)}(x)-(k+\nu)\mathrm{He}_{k-1}^{(\nu)}(x), & k\geq 1, \\
    \mathrm{He}_{0}^{(\nu)}(x)=1,\qquad \mathrm{He}_{1}^{(\nu)}(x)=x,
\end{cases}    
\end{equation}
and satisfy the orthogonality relations 
\begin{equation}
\int_{-\infty}^{+\infty}\mathrm{He}_{k}^{(\nu)}(x)\mathrm{He}_{\ell}^{(\nu)}(x)\,\frac{dx}{|D_{-\nu}(ix)|^2}=\sqrt{2\pi}(k+\nu)!\delta_{k,\ell}, 
\end{equation}
see \cite[Eq. (12.7.2)]{NIST}, where the parabolic cylinder function $D_{-\nu}(x)$ is given by \eqref{def of parabolic cylinder function}.  
\begin{comment}
For instance, for $\nu\in\R$, 
\begin{align}
\begin{split}
D_{\nu}(z)&:=2^{\frac{\nu}{2}}e^{-\frac{z^2}{4}}\Bigl[
\frac{(-\tfrac{1}{2})!}{(-\tfrac{1+\nu}{2})!} {}_1F_1\Bigl(-\frac{\nu}{2};\frac{1}{2};\frac{z^2}{2}\Bigr)
+\frac{z}{\sqrt{2}}\frac{(-\tfrac{3}{2})!}{(-1-\tfrac{\nu}{2})!}{}_1F_1\Bigl(\frac{1-\nu}{2};\frac{3}{2};\frac{z^2}{2}\Bigr)
\Bigr]    
\\
&\,=
2^{\frac{\nu}{2}}e^{\frac{z^2}{4}}\Bigl[
\frac{(-\tfrac{1}{2})!}{(-\tfrac{1+\nu}{2})!} {}_1F_1\Bigl(\frac{1+\nu}{2};\frac{1}{2};-\frac{z^2}{2}\Bigr)
+\frac{z}{\sqrt{2}}\frac{(-\tfrac{3}{2})!}{(-1-\tfrac{\nu}{2})!}{}_1F_1\Bigl(1+\frac{\nu}{2};\frac{3}{2};-\frac{z^2}{2}\Bigr)
\Bigr].     
\end{split}
\end{align}
\end{comment}
It is known that the parabolic cylinder function is related to a family of Hermite polynomials $\{H_n\}_{n\in\N}$, i.e., 
\begin{equation}
    D_{n}(z)=e^{-\frac{1}{4}z^2}2^{-\frac{n}{2}}H_n\Bigl(\frac{z}{\sqrt{2}}\Bigr)
    =e^{-\frac{1}{4}z^2}\mathrm{He}_n(z), \qquad n=0,1,2,\dots,
\end{equation}
where $H_n(x)=2^{\frac{n}{2}}\mathrm{He}_n(\sqrt{2}x)$ is defined by 
\[
H_n(x)=(-1)^ne^{x^2}\frac{d^n}{dx^n}e^{-x^2},
\]
see \cite[Eq. (18.5.5)]{NIST}.
For $\nu\in\R$ with $\nu\notin\mathbb{Z}_{<0}$ and $n\in\mathbb{Z}_{\geq 0}$, 
\begin{equation}
D_{-\nu-n}(z):=\frac{\Gamma(-\nu-n+1)}{\Gamma(-\nu+1)}
\Bigl[
(-i)^n\mathrm{He}_{n}^{(\nu-1)}(iz)D_{-\nu}(z)-(-i)^{n-1}\mathrm{He}_{n-1}^{(\nu)}(iz)D_{-\nu+1}(z)
\Bigr].
\end{equation}
Particularly, 
\begin{equation}
\label{def of D0 Dminus1}
D_0(z):=U(-\tfrac{1}{2},z)=e^{-\frac{z^2}{4}},\qquad
D_{-1}(z):=U(\tfrac{1}{2},z)=e^{\frac{z^2}{4}}\sqrt{\frac{\pi}{2}}\mathrm{erfc}\Bigl(\frac{z}{\sqrt{2}}\Bigr). 
\end{equation}
If $a\in\mathbb{Z}_{>0}$ and $\nu=1$, then by $\lim_{\nu\to1}\frac{\Gamma(-\nu-a+1)}{\Gamma(-\nu+1)}=\frac{(-1)^a}{a!}$, one can write 
\[
e^{-\frac{z^2}{4}}D_{-1-a}(z)
=\frac{(-1)^a}{a!}
\Bigl[
\frac{1}{i^{a}}\mathrm{He}_{a}^{(0)}(iz)\sqrt{\frac{\pi}{2}}\mathrm{erfc}\Bigl(\frac{z}{\sqrt{2}}\Bigr)
-\frac{1}{i^{a-1}}\mathrm{He}_{a-1}^{(1)}(iz)e^{-\frac{z^2}{2}}
\Bigr].
\]
Next, we recall some functionals from \cite[Eqs. (1.8), (1.9), (1.10), (1.11), (1.13), and (1.14)]{BC2022} (here we mainly focus on $a\geq 3$ for the simplicity), 
\begin{align*}
    p_{0,a}(x)&:=\frac{1}{i^a} \mathrm{He}_{a}(ix), \qquad  q_{0,a}(x):=\frac{1}{i^{a-1}}\mathrm{He}_{a-1}^{(1)}(ix), \\
    p_{1,a}(x)&:=-\frac{a}{2}p_{0,a+1}(x)-ab \Bigl(p_{0,a+1}(x)-(3a-1)p_{0,a-1}(x)+\frac{5}{3}(a-1)(a-2)p_{0,a-3}(x)\Bigr),
    \\
    q_{1,a}(x)&:=-\frac{a}{2}q_{0,a+1}(x)-ab \Bigl(q_{0,a+1}(x)-(3a-1)q_{0,a-1}(x)+\frac{5}{3}(a-1)(a-2)q_{0,a-3}(x)\Bigr), 
    \\
    \mathcal{G}_{0}(y;u,a)&:=p_{0,a}(-\sqrt{2}y)\Bigl((-1)^a+\frac{e^u-(-1)^a}{2}\mathrm{erfc}(y)\Bigr)+q_{0,a}(-\sqrt{2}y)(e^u-(-1)^a)\frac{e^{-y^2}}{\sqrt{2\pi}}, 
    \\
    \mathcal{G}_{1}(y;u,a)&:=p_{1,a}(-\sqrt{2}y)\Bigl((-1)^{a}+\frac{e^u-(-1)^a}{2}\mathrm{erfc}(y)\Bigr)
    +q_{1,a}(-\sqrt{2}y)(e^{u}-(-1)^{a})\frac{e^{-y^2}}{\sqrt{2\pi}}. 
\end{align*}
The fact that $\mathrm{He}_{a}^{(\nu)}(-iy)=(-1)^a\mathrm{He}_{a}^{(\nu)}(iy)$ for $y\in\R$ and $a\in\N$ gives rise to 
\begin{align*}
&\quad
e^{u}\frac{\Gamma(a+1)}{\sqrt{2\pi}}e^{-\frac{y^2}{4}}D_{-1-a}(y)
+\frac{\Gamma(a+1)}{\sqrt{2\pi}}e^{-\frac{y^2}{4}}D_{-1-a}(-y)
\\
&=
\frac{1}{i^{a}}\mathrm{He}_{a}^{(0)}(-iy)
\Bigl[
(-1)^a+(e^{u}-(-1)^a)\frac{1}{2}\mathrm{erfc}\Bigl(\frac{y}{\sqrt{2}}\Bigr)
\Bigr]
+
\bigl(e^{u}-(-1)^{a}\bigr)\frac{1}{i^{a-1}}\mathrm{He}_{a-1}^{(1)}(-iy)\frac{e^{-\frac{y^2}{2}}}{\sqrt{2\pi}}.
\end{align*}
This shows the consistency between $\mathcal{H}_{a,u}(y)$ given by \eqref{def of cal H au} and \cite[Eq. (1.13)]{BC2022} , i.e., 
\[
\mathcal{G}_{0}\Bigl(\frac{y}{\sqrt{2}};u,a\Bigr)
=
 e^{u}\frac{\Gamma(a+1)}{\sqrt{2\pi}}e^{-\frac{y^2}{4}}D_{-1-a}(y)
+\frac{\Gamma(a+1)}{\sqrt{2\pi}}e^{-\frac{y^2}{4}}D_{-1-a}(-y)
=
\mathcal{H}_{a,u}(y).
\]
By a similar manner to the above, since by \cite[Eq. (12.8.1)]{NIST}, 
\[
D_{2-a}(y)=(y^2+a-1)D_{-a}(y)+ayD_{-a-1}(y), \qquad
(a+1)D_{-a-2}(y)=-yD_{-a-1}(y)+D_{-a}(y), 
\]
we have 
\begin{align*}
\mathcal{G}_{1}\Bigl(\frac{y}{\sqrt{2}};u,a\Bigr)
&=
-\frac{\Gamma(a+1)}{\sqrt{2\pi}}e^{-\frac{y^2}{4}}
\bigg[
\Bigl(\frac{a}{2}+ab \Bigr)(a+1)\bigl(e^{u}D_{-2-a}(y)-D_{-2-a}(-y)\bigr)
\\
&\quad
-b(3a-1)\bigl(e^uD_{-a}(y)-D_{-a}(-y)\bigr)
+\frac{5}{3}b\bigl(e^u D_{2-a}(y)-D_{2-a}(-y)\bigr)
\bigg]
\\
&=
-\frac{\Gamma(a+1)}{\sqrt{2\pi}}e^{-\frac{y^2}{4}}
\bigg[
\Bigl(\frac{a}{2}+ab-b(3a-1)+\frac{5}{3}b(y^2+a-1)\Bigr)
(e^{u}D_{-a}(y)-D_{-a}(-y))
\\
&\qquad
+\Bigl(\frac{2}{3}ab-\frac{a}{2}\Bigr)(e^{u}D_{-a-1}(y)+D_{-a-1}(-y))
\bigg]. 
\end{align*}
This gives rise to 
\begin{equation}
\label{def of wG1 over wG1}
\frac{\mathcal{G}_{1}(\tfrac{y}{\sqrt{2}};u,a)}{\mathcal{G}_{0}(\tfrac{y}{\sqrt{2}};u,a)}
=
-\Bigl[
\frac{a}{2}+ab-b(3a-1)+\frac{5}{3}b(y^2+a-1)
\Bigr]\partial_y\bigl(\log\bigl[\mathcal{H}_{a,u}(y)\bigr]-u\mathbf{1}_{(-\infty,0)}(y)\bigr)
-\Bigl(-\frac{a}{2}+\frac{2}{3}ab\Bigr)y,
\end{equation}
where we have used \eqref{def of d relationship between g0 g1}. 
By substituting \eqref{def of wG1 over wG1} into \cite[$\tfrac{\mathcal{G}_1(y;u,a)}{\mathcal{G}_0(y;u,a)}$ of $C_3$ in Theorem 1.1]{BC2022} and straightforward computations, we find that \cite[$C_3$ in Theorem 1.1]{BC2022} recovers $C_3$ in Theorem~\ref{theorem:calEn expansion}. 

\section{Euler-Maclaurin formula}
In this work, we have used Euler-Maclaurin formula \cite{BKS2023,AFLS25,ACC2023c} to establish the precise large $n$-asymptotics of \eqref{def of mathcal En}, which we state below for reference.
\begin{lemma}[Euler-Maclaurin formula]\label{lemma:Euler-Maclaurin}
Let $f(x)$ be $2m$ times differentiable function on the interval $[p,q]$. Then we have 
\begin{align}
\begin{split}
\sum_{j=p+1}^{q-1}f(j)&=\int_{p}^{q}f(x)\,dx-\frac{f(p)+f(q)}{2}+\sum_{j=1}^{m-1}\frac{B_{2j}}{(2j)!}\Bigl( 
f^{(2j-1)}(q)-f^{(2j-1)}(p)
\Bigr)
+R_{2m}, 
\\
\sum_{j=p}^{q}f(j)&=\int_{p}^{q}f(x)\,dx+\frac{f(p)+f(q)}{2}+\sum_{j=1}^{m-1}\frac{B_{2j}}{(2j)!}\Bigl( 
f^{(2j-1)}(q)-f^{(2j-1)}(p)
\Bigr)
+R_{2m},
\\
\sum_{j=p}^{q-1}f(j)
&=\int_{p}^{q}f(x)\,dx-\frac{f(q)-f(p)}{2}+\sum_{j=1}^{m-1}\frac{B_{2j}}{(2j)!}\Bigl(f^{(2j-1)}(q)-f^{(2j-1)}(p)\Bigr)+R_{2m}, 
\end{split}    
\end{align}
where the sequence $\{B_{2k}\}_k$ are even indexed Bernoulli numbers $B_{2}=\frac{1}{6}$, $B_4=-\frac{1}{30},\dots$, and the remainder term $R_{2m}$ satisfies the bound $|R_{2m}|\leq c_{2m}\int_{p}^{q}|f^{(2m)}(x)|\,dx$ with $c_{2m}=\frac{2\zeta(2m)}{(2\pi)^{2m}}$. 
Here, $\zeta(s)$ is the Riemann zeta function. 
\end{lemma}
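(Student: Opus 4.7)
My plan is to prove the first identity (the case $\sum_{j=p+1}^{q-1}f(j)$) in detail, and then deduce the other two variants by simply adding or subtracting the boundary terms $f(p)$ and/or $f(q)$ from both sides; this reduces the work to a single classical computation. The approach is the standard one via the periodic Bernoulli polynomials $\widetilde{B}_k(x):=B_k(\{x\})$, where $B_k$ is the $k$-th Bernoulli polynomial and $\{x\}$ denotes the fractional part. The key structural inputs I will invoke are $B_k'(x)=kB_{k-1}(x)$, $B_k(0)=B_k(1)=B_k$ for $k\neq 1$, $B_1(0)=-\tfrac12$, $B_1(1)=\tfrac12$, and the bound $|\widetilde{B}_{2m}(x)|\leq |B_{2m}|=\frac{2(2m)!\zeta(2m)}{(2\pi)^{2m}}$ for all $x\in\R$.

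First, I would reduce to the case where $p,q$ are integers (which is the situation in which the lemma is applied throughout the paper) and work on a single unit interval $[k,k+1]$. Starting from $\int_k^{k+1}f(x)\,dx$ and repeatedly integrating by parts against $\widetilde{B}_j(x)/j!$ and its derivatives, I obtain
\[
\int_k^{k+1}f(x)\,dx=\frac{f(k)+f(k+1)}{2}+\sum_{j=1}^{m-1}\frac{B_{2j}}{(2j)!}\bigl(f^{(2j-1)}(k+1)-f^{(2j-1)}(k)\bigr)+r_{2m}(k),
\]
where the remainder
\[
r_{2m}(k)=-\frac{1}{(2m)!}\int_k^{k+1}\widetilde{B}_{2m}(x)f^{(2m)}(x)\,dx.
\]
The vanishing of the odd-index boundary contributions ($j=3,5,\dots$) is exactly the statement $B_{2j+1}=0$ for $j\geq 1$, and the symmetric $\tfrac12(f(k)+f(k+1))$ term comes from $B_1(1)-B_1(0)=1$.

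Second, I would sum these one-step identities over $k=p,p+1,\dots,q-1$. The interior boundary terms $f(k)$ and $f(k+1)$ telescope into $\sum_{j=p+1}^{q-1}f(j)+\tfrac12(f(p)+f(q))$; the derivative corrections telescope into $f^{(2j-1)}(q)-f^{(2j-1)}(p)$; and the remainders add to
\[
R_{2m}=-\frac{1}{(2m)!}\int_p^q\widetilde{B}_{2m}(x)f^{(2m)}(x)\,dx,
\]
yielding
\[
\int_p^q f(x)\,dx=\frac{f(p)+f(q)}{2}+\sum_{j=p+1}^{q-1}f(j)+\sum_{j=1}^{m-1}\frac{B_{2j}}{(2j)!}\bigl(f^{(2j-1)}(q)-f^{(2j-1)}(p)\bigr)+R_{2m}.
\]
Solving for $\sum_{j=p+1}^{q-1}f(j)$ gives the first assertion, and the bound $|R_{2m}|\leq c_{2m}\int_p^q|f^{(2m)}(x)|\,dx$ with $c_{2m}=\frac{2\zeta(2m)}{(2\pi)^{2m}}$ follows by taking the supremum norm of $\widetilde{B}_{2m}$ computed above.

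Third, the other two variants are immediate: adding $f(p)+f(q)$ to both sides gives the $\sum_{j=p}^{q}$ identity, and adding $f(p)$ alone gives the $\sum_{j=p}^{q-1}$ identity. I do not anticipate a genuine obstacle here: the entire argument is a bookkeeping exercise, and the only technical subtlety is keeping track of the sign and the endpoint values of $B_1(x)$, which is what produces the $\tfrac12(f(p)+f(q))$ term versus the $-\tfrac12(f(q)-f(p))$ term in the three variants. Since the result is completely classical, I would in practice just cite a standard reference such as \cite[Lemma C.1]{BKS2023} or \cite[Appendix A]{ACC2023c} rather than reproduce the proof in full.
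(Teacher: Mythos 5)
Your proposal is the standard classical proof via periodic Bernoulli polynomials, and it is essentially correct; note, however, that the paper itself offers no proof of this lemma --- it is stated ``for reference'' with citations to \cite{BKS2023,AFLS25,ACC2023c} --- so there is no internal argument to compare against, and your closing remark that one would in practice simply cite is exactly what the author does.

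One small bookkeeping point worth flagging, which is present both in your one-step identity and (in the same form) in the lemma as stated. If the Bernoulli sum stops at $j=m-1$, then after integrating by parts $2m$ times the remainder on $[k,k+1]$ is not $-\frac{1}{(2m)!}\int_k^{k+1}\widetilde{B}_{2m}(x)f^{(2m)}(x)\,dx$ alone: that expression is what remains when the sum runs to $j=m$. With the sum truncated at $m-1$, the missing term $\frac{B_{2m}}{(2m)!}\bigl(f^{(2m-1)}(k+1)-f^{(2m-1)}(k)\bigr)$ must be folded into the remainder, giving
\[
r_{2m}(k)=\frac{1}{(2m)!}\int_k^{k+1}\bigl(B_{2m}-\widetilde{B}_{2m}(x)\bigr)f^{(2m)}(x)\,dx .
\]
Since $\max_{x\in[0,1]}|B_{2m}-B_{2m}(x)|=(2-2^{1-2m})|B_{2m}|$, the resulting constant in the bound is $(2-2^{1-2m})\,c_{2m}$ rather than $c_{2m}$; alternatively, if one keeps your remainder expression verbatim, the sum in the statement should run to $j=m$. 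Either normalization is a perfectly legitimate form of Euler--Maclaurin and the discrepancy is a bounded constant factor, so it is entirely harmless for the $\mathcal{O}$-estimates in which the lemma is deployed throughout the paper --- but you should make the choice consistently so that the claimed value of $c_{2m}$ matches the stated range of summation.
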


\end{document}